\numberwithin{equation}{section} 
\theoremstyle{plain}
\newtheorem{theo+}           {Theorem}      [section]
\newtheorem{prop+}  [theo+]  {Proposition}
\newtheorem{coro+}  [theo+]  {Corollary}
\newtheorem{lemm+}  [theo+]  {Lemma}
\newtheorem{defi+}  [theo+]  {Definition}
\newtheorem{conj+}  [theo+]  {Conjecture}
\theoremstyle{definition}
\newtheorem{rema+}  [theo+]  {Remark}
\newtheorem{prob+}  [theo+]  {Problem}
\newtheorem{exam+}  [theo+]  {Example}
\newenvironment{theorem}{\begin{theo+}}{\end{theo+}}
\newenvironment{proposition}{\begin{prop+}}{\end{prop+}}
\newenvironment{corollary}{\begin{coro+}}{\end{coro+}}
\newenvironment{lemma}{\begin{lemm+}}{\end{lemm+}}
\newcommand{\om}{\omega}
\newcommand{\pfaff}{\mathop{\mathrm{pfaff}}}
\newcommand{\ti}{\mathrm i}
\newcommand{\sgn}{\operatorname{sgn}}
\begin{document}

\baselineskip 18pt
\larger[2]
\title
[Elliptic pfaffians and solvable lattice models] 
{Elliptic pfaffians and solvable lattice models}
\author{Hjalmar Rosengren}
\address
{Department of Mathematical Sciences
\\ Chalmers University of Technology and University of Gothenburg\\SE-412~96 G\"oteborg, Sweden}
\email{hjalmar@chalmers.se}
\urladdr{http://www.math.chalmers.se/{\textasciitilde}hjalmar}
 %\keywords{Three-colour model, eight-vertex-solid-on-solid model, domain wall boundary conditions, partition function, alternating sign matrix}
%\subjclass[2000]{05A15, 33E05, 82B23}

\thanks{Supported by the Swedish Science Research
Council (Vetenskapsr\aa det). 
Some of the research was performed while visiting The Australian National University, Canberra and 
 Simons Center for Geometry and Physics,
Stony Brook University. I gratefully acknowledge support from these two institutions.}

\begin{abstract}
We introduce and study twelve multivariable theta functions defined by pfaffians with elliptic function entries. We show that, when the crossing parameter is a cubic root of unity, the domain wall partition function for the eight-vertex-solid-on-solid model can be written as a sum of two of these pfaffians. As a limit case, we express the domain wall partition function for the three-colour model as a sum of two Hankel determinants.  We also show that certain solutions of the  $TQ$-equation for the supersymmetric eight-vertex model can be expressed in
terms of elliptic pfaffians.
\end{abstract}

\maketitle

\section{Introduction}  

The Izergin--Korepin determinant \cite{ick}
 is an explicit expression for the partition function of the inhomogeneous six-vertex model with domain-wall boundary conditions.
In an appropriate normalization, it takes the form
\begin{equation}\label{ik}\frac{\prod_{i,j=1}^n(u_i-v_j)(qu_i-v_j)}{\prod_{1\leq i<j\leq n}(u_i-u_j)(v_i-v_j)}\,\det_{1\leq i,j\leq n}\left(\frac 1{(u_i-v_j)(qu_i-v_j)}\right),\end{equation}
where $u_1,\dots,u_n$, $v_1,\dots,v_n$ are spectral parameters and $q$ the crossing parameter of the model.
A particularly interesting case is  $q=e^{2\ti\pi/3}$, which was used by Kuperberg in his proof of the alternating sign 
matrix conjecture \cite{ku}.  Okada \cite{o} and Stroganov \cite{st} found that, in this case, 
the partition function is not only separately symmetric in the variables 
$u_j$ and $v_j$, but jointly symmetric.
More precisely, if we replace each $u_j$ by $q u_j$ and multiply with the constant $q^{\binom n2}$, we obtain
the function
\begin{equation}\label{npf}F_n
=\frac 1{\prod_{1\leq i<j\leq n}(u_i-u_j)(v_i-v_j)}\prod_{i,j=1}^n\frac{u_i^3-v_j^3}{u_i-v_j}\det_{1\leq i,j\leq n}\left(\frac{u_i-v_j}{u_i^3-v_j^3}\right),
\end{equation}
which is a symmetric polynomial in $x=(u_1,\dots,u_n,v_1,\dots,v_n)$. In fact, it equals the Schur polynomial
\begin{equation}\label{si}F_n=s_{n-1,n-1,\dots,1,1,0,0}(x). \end{equation}
Sundquist \cite{su} gave an alternative expression for the same Schur polynomial as a pfaffian. Indeed, squaring the variables, 
\begin{equation}\label{pfi}s_{n-1,n-1,\dots,1,1,0,0}(x^2)=\prod_{1\leq i<j\leq 2n}\frac{x_i^{3}+x_j^{3}}{x_i^2-x_j^2}\pfaff_{1\leq i,j\leq 2n}\left(\frac{x_i^2-x_j^2}{x_i^{3}+x_j^{3}}\right).
\end{equation}
In contrast to \eqref{npf}, this expression  displays the symmetry between all $2n$ variables.

The purpose of the present paper is to introduce generalizations of \eqref{pfi} involving elliptic functions and to give applications to solvable lattice models. It turns out that, in some situations when simple determinant formulas such as
\eqref{npf} are not available, pfaffian formulas similar to \eqref{pfi} exist.

In \S \ref{pfds},
 we introduce twelve multivariable theta functions, denoted $P_n^{(\sigma)}$ and defined by pfaffians. In fact, all twelve functions are related by modular transformations, see Proposition~\ref{pmp}. In Proposition \ref{ap}, we show that the functions $P_n^{(\sigma)}$ can be characterized by certain analytic properties. This is
fundamental for recognizing them in the context of solvable models. For such applications, one is particularly interested in the homogeneous limit, when all variables coincide. Another interesting case is the trigonometric limit, when the quasi-period of the theta function becomes infinite.
For analyzing both these limits,  we follow the approach of \cite{rt,rsq}, where we considered the pfaffians
\begin{equation}\label{nupf}\pfaff_{1\leq i,j\leq 2n}\left(\frac{x_j\theta(x_i^2/x_j^2;p^2)}{x_i\theta(px_i^2/x_j^2;p^2)}\right),\qquad \pfaff_{1\leq i,j\leq 2n}\left(\frac{\theta(x_i^2/x_j^2;p^2)}{\theta(-x_i^2/x_j^2;p^2)}\right)  \end{equation}
(see \S \ref{ps} for the notation). Their homogeneous limits contain information on the number of 
representations of an integer as a sum of $4n^2$ triangles and squares, respectively, which led to new proofs and generalizations of 
results in \cite{gm,kw,m,z}. Following the same approach leads to
 expansions of $P_n^{(\sigma)}$ into Schur polynomials and other symmetric functions, see
 Proposition \ref{spp} and Proposition \ref{sqe}. The trigonometric limit is treated in Proposition \ref{tp}. In the homogeneous limit, our pfaffians degenerate to Hankel determinants, see Theorem \ref{hdt}. This may be useful for analyzing the subsequent limit as $n\rightarrow\infty$, which is
of great importance for applications.

In \S \ref{as}, we give applications to statistical mechanics. We first consider the 8VSOS (eight-vertex-solid-on-solid) model \cite{b8}, which has the same states as the six-vertex model  but more general Boltzmann weights. 
In Theorem~\ref{dwt}, we find a new expression for the domain wall partition function as a sum of two pfaffians, valid when
$q=e^{2\ti\pi/3}$. In the homogeneous limit, this leads to
an expression for the domain wall partition function of the three-colour model as a sum of two Hankel determinants, see Corollary \ref{thc}. 
Finally, in Theorem \ref{tqt} we give an application to the eight-vertex model,  at the "supersymmetric" parameter value $\eta=\pi/3$. 
Assuming the conjecture of Razumov and Stroganov \cite{ras0}
  that the transfer matrix on a chain of odd length has a particular eigenvalue, we show that the corresponding  $Q$-operator eigenvalues can again be written
as sums of two elliptic pfaffians. 

\section{Preliminaries}\label{ps}

\subsection{Notation}\label{ns}
 Throughout, $\tau$ is a parameter in the upper half-plane and $p=e^{\ti\pi\tau}$, $p^\lambda=e^{\ti\pi\tau\lambda}$.
We will write $\omega=e^{2\ti\pi/3}$. When $x=(x_1,\dots,x_m)$ is a vector, we write $X=x_1\dotsm x_m$ and
\begin{equation}\label{vdn}\Delta(x)=\prod_{1\leq i<j\leq m}(x_i-x_j).\end{equation}
We will often use the Legendre symbol $(k/3)$, which is the representative of $k\ \operatorname{mod}\ 3$ in $\{-1,0,1\}$.

The pfaffian of a skew-symmetric even-dimensional matrix is given by
$$\pfaff_{1\leq i,j\leq 2n}(a_{ij})=\frac 1{2^nn!}\sum_{\sigma\in S_{2n}}\sgn(\sigma)\prod_{j=1}^na_{\sigma(2j-1)\sigma(2j)}. $$
This can alternatively be viewed as a sum over pairings  (decompositions into disjoint $2$-element sets) on $[1,2n]$, where
$\sigma$ corresponds to the pairing 
$$\{\sigma(1),\sigma(2)\},\dots,\{\sigma(2n-1),\sigma(2n)\}. $$
The prefactor  $1/2^nn!$ is then absent, as each pairing corresponds to $2^n n!$ permutations.

\subsection{Theta functions}

We will  use the notation
\begin{align*}(a;p)_\infty&=\prod_{j=0}^\infty(1-ap^j),\\
\theta(x;p)&=(x;p)_\infty(p/x;p)_\infty. 
\end{align*}
Repeated arguments stand for products, that is,
\begin{align*}
(a_1,\dots,a_m;p)_\infty&=(a_1;p)_\infty\dotsm(a_m;p)_\infty, \\
\theta(x_1,\dots,x_m;p)&=\theta(x_1;p)\dotsm\theta(x_m;p). 
\end{align*}

The theta function satisfies 
\begin{equation}\label{tqp}\theta(px;p)=\theta(x^{-1};p)=-x^{-1}\theta(x;p)
\end{equation}
and the modular transformations
\begin{equation}\label{tmt}e^{-\ti\pi z/(c\tau+d)}\theta(e^{2\ti\pi z/(c\tau+d)};e^{2\pi\ti(a\tau+b)/(c\tau+d)})=Ce^{\ti\pi cz^2/(c\tau+d)-\ti\pi z}\theta(e^{2\ti\pi z};e^{2\pi\ti\tau}), \end{equation}
where $\left[\begin{smallmatrix}a&b\\c&d\end{smallmatrix}\right]\in\mathrm{SL}(2,\mathbb Z)$ and $C$ is a certain multiplier
independent of $z$.

Occasionally, we will use the more classical notation  
\begin{subequations}\label{jt}
\begin{align}
\theta_1(z|\tau)&=\ti p^{1/4}(p^2;p^2)_\infty x^{-1}\theta(x^2;p^2),\\
\theta_2(z|\tau)&=p^{1/4}(p^2;p^2)_\infty x^{-1}\theta(-x^2;p^2),\\
\theta_3(z|\tau)&=(p^2;p^2)_\infty\theta(-px^2;p^2),\\
\theta_4(z|\tau)&=(p^2;p^2)_\infty\theta(px^2;p^2),
\end{align}
\end{subequations}
where 
$x=e^{\ti z}$, $p=e^{\ti\pi \tau}$.

The Laurent expansion of $\theta$ is given by Jacobi's triple product identity
\begin{equation}\label{jtp}\theta(x;p)=\frac{1}{(p;p)_\infty}\sum_{n=-\infty}^\infty(-1)^np^{\binom n2}x^n. \end{equation}
We will need the quintuple product identity \cite{gr,w} in the form
\begin{equation}\label{wqp}(p^2;p^2)_\infty \theta(x,px,-px;p^2)=\sum_{n=-\infty}^\infty
\left(\frac{n+1}3\right)p^{\frac{n(n-1)}3}x^{n};\end{equation}
see  \S \ref{ns} for the definition of the Legendre symbol.
Another useful identity, due to Kronecker \cite{we}
\begin{equation}\label{kr}\frac{(p;p)_\infty^2\theta(ax;p)}{\theta(a,x;p)}=\sum_{n=-\infty}^\infty\frac{x^n}{1-ap^n},\qquad |p|<|x|<1
\end{equation}
can be recognized as a special case of Ramanujan's ${}_1\psi_1$-sum \cite{gr}.

 Finally, we mention the relations
(recall that $\omega=e^{2\ti\pi/3}$)
\begin{equation}\label{ts}\theta(-p\omega;p^2)=\frac 1{\theta(-p;p^6)},\qquad \theta(-\omega;p^2)=-\frac{\omega^2}{\theta(-p^2;p^6)}. \end{equation}
These are easy to prove by manipulating infinite products; for instance,
$$\theta(-p\omega;p^2)=(-p\omega,-p\omega^2;p^2)_\infty=\frac{(-p^3;p^6)_\infty}{(-p;p^2)_\infty}=\frac{1}{(-p,-p^5;p^6)_\infty}=\frac{1}{\theta(-p;p^6)}.$$

\section{Elliptic pfaffians}\label{pfs}

\subsection{The pfaffians $P_n^{(\sigma)}$}
\label{pfds}

We will introduce twelve multivariable theta functions
$$P_n^{(\sigma)}=P_n^{(\sigma)}(z_1,\dots,z_{2n};\tau),$$
labelled by a positive integer $n$ and an element
$$\sigma\in\Sigma=\{0,1,2,3,4,6,\hat 0,\hat 1,\hat 2,\hat 3,\hat 4,\hat 6\}.$$
The choice of labelling
is explained in \S \ref{mts}, where we also show that
the twelve functions are related by modular transformations.

For integer labels, $P_n^{(\sigma)}$ are defined by 
\begin{align*}
P_n^{(0)}&=\prod_{1\leq i<j\leq 2n}\theta(p^{1/3}x_i^2/x_j^2;p^{2/3})
\pfaff_{1\leq i,j\leq 2n}\left(\frac{x_j^2\theta(x_i^2/x_j^2,-x_i^2/x_j^2,px_i^2/x_j^2;p^2)}{x_i^2\theta(p^{1/3}x_i^2/x_j^2;p^{2/3})}\right),\\
P_n^{(1)}&=\prod_{1\leq i<j\leq 2n}\frac{x_j^3}{x_i^3}\,\theta(-x_i^6/x_j^6;p^6)
\pfaff_{1\leq i,j\leq 2n}\left(\frac{x_i\theta(x_i^2/x_j^2,-x_i^2/x_j^2,px_i^2/x_j^2;p^2)}{x_j\theta(-x_i^6/x_j^6;p^6)}\right),\\
 P_n^{(2)}&=\prod_{1\leq i<j\leq 2n}\theta(p^3x_i^6/x_j^6;p^6)
\pfaff_{1\leq i,j\leq 2n}\left(\frac{x_j\theta(x_i^2/x_j^2,px_i^2/x_j^2,-px_i^2/x_j^2;p^2)}{x_i\theta(p^3x_i^6/x_j^6;p^6)}\right),\\
P_n^{(3)}&=\prod_{1\leq i<j\leq 2n}\frac{x_j}{x_i}\,\theta(-x_i^2/x_j^2;p^{2/3})
\pfaff_{1\leq i,j\leq 2n}\left(\frac{x_j\theta(x_i^2/x_j^2,-x_i^2/x_j^2,px_i^2/x_j^2;p^2)}{x_i\theta(-x_i^2/x_j^2;p^{2/3})}\right),\\
P_n^{(4)}&=\prod_{1\leq i<j\leq 2n}\theta(p^3x_i^6/x_j^6;p^6)
\pfaff_{1\leq i,j\leq 2n}\left(\frac{x_j^2\theta(x_i^2/x_j^2,-x_i^2/x_j^2,px_i^2/x_j^2;p^2)}{x_i^2\theta(p^3x_i^6/x_j^6;p^6)}\right),\\
P_n^{(6)}&=\prod_{1\leq i<j\leq 2n}\theta(p^{1/3}x_i^2/x_j^2;p^{2/3})
\pfaff_{1\leq i,j\leq 2n}\left(\frac{x_j\theta(x_i^2/x_j^2,px_i^2/x_j^2,-px_i^2/x_j^2;p^2)}{x_i\theta(p^{1/3}x_i^2/x_j^2;p^{2/3})}\right),\\
\end{align*}
where $x_j=e^{\ti\pi z_j}$ and as always $p^{\lambda}=e^{\ti\pi\tau\lambda}$.
The function $P_n^{(\hat\sigma)}$ is  obtained from $P_n^{(\sigma)}$ by 
replacing all factors of the form
 $\theta(\pm p^\lambda x_i^k/x_j^k;p^{2\lambda})$ 
with  $\theta(\mp p^\lambda x_i^k/x_j^k;p^{2\lambda})$.
Equivalently,
\begin{equation}\label{phs}P_n^{(\hat\sigma)}(z_1,\dots,z_{2n};\tau)=P_n^{(\sigma)}(z_1,\dots,z_{2n};\tau+3). 
\end{equation}

Using that $\theta(x;0)=1-x$, we find that 
\begin{align}
\nonumber\lim_{p\rightarrow 0}P_n^{(1)}&=\prod_{1\leq i<j\leq 2n}\left(\frac{x_j^3}{x_i^3}+\frac{x_i^3}{x_j^3}\right)\pfaff_{1\leq i,j\leq 2n}\left(x_ix_j\frac{x_j^4-x_i^4}{x_j^6+x_i^6}\right)\\
\nonumber
&=\frac {(-1)^n}{X^{6n-4}}\prod_{1\leq i<j\leq 2n}\left(x_j^6+x_i^6\right)\pfaff_{1\leq i,j\leq 2n}\left(\frac{x_i^4-x_j^4}{x_j^6+x_i^6}\right)\\
\label{tsp}&=\frac {(-1)^n}{X^{6n-4}}\,\Delta(x^4)s_{n-1,n-1,\dots,1,1,0,0}(x^2),
\end{align}
where we used Sundquist's identity \eqref{pfi} and the notation $X=x_1\dotsm x_{2n}$.  Thus, as the functions $P_n^{(\sigma)}$ are all related to $P_n^{(1)}$ by modular transformations, we can think of them as elliptic extensions of the pfaffian in \eqref{pfi}. 
However, these transformations do not in general preserve the point $p=0$, so the trigonometric limits
of $P_n^{(\sigma)}$ and $P_n^{(1)}$ may be different; see
Proposition \ref{tp} for details.

In classical notation  \eqref{jt}, the functions  $P_n^{(\sigma)}$ are given by a factor
independent of the variables $z_j$ times
\begin{subequations}
\begin{multline}\label{pca}\prod_{1\leq i<j\leq 2n}\theta_k(3w_i-3w_j|3\tau)\\
\times \pfaff_{1\leq i,j\leq 2n}\left(\frac{\theta_1(w_i-w_j|\tau)\theta_k(w_i-w_j|\tau)\theta_l(w_i-w_j|\tau)}{\theta_k(3w_i-3w_j|3\tau)}\right)\end{multline}
or
\begin{multline}\label{pcb}\prod_{1\leq i<j\leq 2n}\theta_k(w_i-w_j|\tau/3) \\
\times\pfaff_{1\leq i,j\leq 2n}\left(\frac{\theta_1(w_i-w_j|\tau)\theta_k(w_i-w_j|\tau)\theta_l(w_i-w_j|\tau)}{\theta_k(w_i-w_j|\tau/3)}\right),\end{multline}
\end{subequations}
where $(k,l)$ are any two distinct elements in $\{2,3,4\}$
and  $w_j=\pi z_j$.
In the following table, we list the label $\sigma$ corresponding to each pair $(k,l)$:\\
\begin{center}\begin{tabular}{cccccccc} 
&$(2,3)$&$(2,4)$&$(3,2)$&$(3,4)$&$(4,2)$&$(4,3)$&\\
\eqref{pca}&$\hat 1$&$ 1$&$ \hat 4$&$\hat 2$&$4$&$ 2$&\\
\eqref{pcb}&$\hat 3$&$ 3$&$\hat 0$&$\hat 6$&$0$&$6$&$\cdot$
\end{tabular}\end{center}

\subsection{Modular transformations}\label{mts}

By the following result, any two of the functions $P_n^{(\sigma)}$ are related by modular transformations.

\begin{proposition}\label{pmp}
Let $\left[\begin{smallmatrix}a&b\\c&d\end{smallmatrix}\right]\in\mathrm{SL}(2,\mathbb Z)$  be such that $cd$ is divisible by $3$. Then,
\begin{multline}\label{pmi}P_n^{(1)}\left(\frac{z_1}{c\tau+d},\dots,\frac{z_{2n}}{c\tau+d};\frac{a\tau+b}{c\tau+d}\right)\\
\sim \exp\left(\frac{3\ti\pi c}{c\tau+d}\sum_{1\leq i<j\leq 2n}(z_i-z_j)^2\right)F(z_1,\dots,z_{2n}; \tau), \end{multline}
where, in the case $c\equiv 0 \ \operatorname{mod}\ 3$, 
$$F=\begin{cases}
P_n^{(1)}, & (a,b,c,d)\equiv(1,0,0,1)\ \operatorname{mod}\ 2,\\
 P_n^{(\hat1)}, & (a,b,c,d)\equiv(1,1,0,1)\ \operatorname{mod}\ 2,\\
P_n^{(2)}, & (a,b,c,d)\equiv(1,1,1,0)\ \operatorname{mod}\ 2,\\
 P_n^{(\hat2)}, & (a,b,c,d)\equiv(1,0,1,1)\ \operatorname{mod}\ 2,\\
P_n^{(4)}, & (a,b,c,d)\equiv(0,1,1,0)\ \operatorname{mod}\ 2,\\
 P_n^{(\hat 4)}, & (a,b,c,d)\equiv(0,1,1,1)\ \operatorname{mod}\ 2
\end{cases} $$
and, in the case $d\equiv 0 \ \operatorname{mod}\ 3$, 
$$F=\begin{cases}
P_n^{(3)}, & (a,b,c,d)\equiv(1,0,0,1)\ \operatorname{mod}\ 2,\\
 P_n^{(\hat 3)}, & (a,b,c,d)\equiv(1,1,0,1)\ \operatorname{mod}\ 2,\\
P_n^{(6)}, & (a,b,c,d)\equiv(1,1,1,0)\ \operatorname{mod}\ 2,\\
 P_n^{(\hat 6)}, & (a,b,c,d)\equiv(1,0,1,1)\ \operatorname{mod}\ 2,\\
P_n^{(0)}, & (a,b,c,d)\equiv(0,1,1,0)\ \operatorname{mod}\ 2,\\
 P_n^{(\hat 0)}, & (a,b,c,d)\equiv(0,1,1,1)\ \operatorname{mod}\ 2.
\end{cases} $$
Here, $\sim$ denotes equality up to a factor
independent of the variables $z_j$.
\end{proposition}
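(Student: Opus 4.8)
The plan is to reduce everything to a single application of the classical modular transformation \eqref{tmt}, applied entrywise and factor by factor, and then to bookkeep which of the twelve theta-type building blocks appears. First I would use the classical-notation description \eqref{pca}--\eqref{pcb}: each $P_n^{(\sigma)}$ is, up to a $z$-independent constant, a product $\prod_{i<j}\theta_k$ of the appropriate argument times a pfaffian whose entries are ratios $\theta_1\theta_k\theta_l/\theta_k$. Applying the substitution $z_j\mapsto z_j/(c\tau+d)$, $\tau\mapsto(a\tau+b)/(c\tau+d)$ to \eqref{pca} with $(k,l)=(2,4)$ (which is the representation of $P_n^{(1)}$), I transform each $\theta_1(w_i-w_j|\tau)$, $\theta_2$, $\theta_4$ inside the pfaffian and each $\theta_k(3w_i-3w_j|3\tau)$ in the prefactor, using \eqref{tmt} together with the explicit behaviour of the Jacobi theta functions $\theta_1,\dots,\theta_4$ under the generators $S:\tau\mapsto-1/\tau$ and $T:\tau\mapsto\tau+1$. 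The key analytic input is exactly \eqref{tmt}: for the transformed $\tau$ it produces a Gaussian factor $\exp(\ti\pi c(w_i-w_j)^2/(c\tau+d))$ for the small-modulus thetas and $\exp(3\ti\pi c(w_i-w_j)^2/(c\tau+d))$ for the $3\tau$ ones; since the pfaffian is a sum over perfect matchings, each term picks up $\prod$ over $n$ disjoint pairs, while the prefactor picks up $\prod_{i<j}$ over all pairs, and after combining the net Gaussian is $\exp\!\big(\tfrac{3\ti\pi c}{c\tau+d}\sum_{i<j}(z_i-z_j)^2\big)$ as claimed (the factor $3$ coming from $\theta_1\theta_2\theta_4$ contributing $3\times$, or equivalently from the $3\tau$ prefactor; the two bookkeepings must be checked to agree, using $w_j=\pi z_j$ to convert). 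All remaining multipliers $C$ from \eqref{tmt} and all $z$-independent powers of $p$, $q$ are absorbed into the $\sim$.

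Next I would identify $F$. Under $S$ and $T$ the quadruple $(\theta_1,\theta_2,\theta_3,\theta_4)$ permutes in a well-known way: $T$ fixes $\theta_1,\theta_4$ and swaps $\theta_2\leftrightarrow\theta_3$ (up to roots of unity), while $S$ fixes $\theta_1,\theta_3$ and swaps $\theta_2\leftrightarrow\theta_4$. More precisely I need the action on $\theta_k(w|\tau)$ and simultaneously on $\theta_k(3w|3\tau)$; the divisibility hypothesis $3\mid cd$ is precisely what guarantees that the modular transformation sends the lattice underlying the "$3\tau$" (resp. "$\tau/3$") structure to another such lattice, so that the transformed object is again one of the twelve $P_n^{(\cdot)}$ and not some genuinely new function. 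Tracking the pair $(k,l)\in\{2,3,4\}^2$ through the group action, and reading off from the table in \S\ref{pfds} which label $\sigma$ (or $\hat\sigma$) that pair corresponds to in format \eqref{pca} or \eqref{pcb}, yields the two case lists. Concretely, the parity class of $(a,b,c,d)\bmod 2$ determines the image of $(\theta_2,\theta_3,\theta_4)$ under the relevant composition of $S,T$ modulo the principal congruence subgroup $\Gamma(2)$ (there are exactly six classes, matching the six lines in each table), and the residue of $c$ versus $d$ mod $3$ decides whether we land in the "$3\tau$" family \eqref{pca} (giving $\sigma\in\{1,\hat1,2,\hat2,4,\hat4\}$) or the "$\tau/3$" family \eqref{pcb} (giving $\sigma\in\{3,\hat3,6,\hat6,0,\hat0\}$). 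The hatted versus unhatted distinction is governed by \eqref{phs}, i.e.\ by an extra $\tau\mapsto\tau+3$ shift, which in the parity bookkeeping shows up through the $b$-entry.

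The main obstacle I anticipate is purely organizational rather than conceptual: keeping the multipliers and sign/root-of-unity factors straight, and in particular verifying that the factor produced by the pfaffian (one Gaussian per pair, times $n$ pairs inside each of the $2^n n!$ monomials, but these reorganize into $\prod_{i<j}$ after recombination with the prefactor) really does collapse to the single clean exponential in \eqref{pmi} with coefficient $3$ and not $1$ or $2$. Since $S_{2n}$ acts by permuting indices and the entry $a_{ij}$ carries the Gaussian $\exp(c\ti\pi(w_i-w_j)^2/(c\tau+d))$ symmetrically, a matching $\{\{i_1,j_1\},\dots\}$ contributes $\prod_r \exp(\cdots(w_{i_r}-w_{j_r})^2\cdots)$, which is not the same as $\prod_{i<j}$; the resolution is that this partial Gaussian is exactly cancelled against part of the prefactor's Gaussian (the $\theta_1$-part), leaving the $\theta_k\theta_l$-part of the pfaffian Gaussian plus the $3\tau$-prefactor Gaussian, and a short lemma (or a direct appeal to the fact that $P_n^{(1)}$ is a genuine multivariable theta function, so its transformation multiplier must be a $z$-independent times a Gaussian of this shape) pins down the coefficient. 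Once that normalization lemma is in hand, the rest is the finite group-theoretic table lookup described above, and the proof is complete.
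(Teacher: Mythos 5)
Your overall strategy is the same as the paper's: apply the modular transformation \eqref{tmt} to each theta factor of $P_n^{(1)}$ separately (the paper does this in the $\theta(x;p)$ notation rather than via \eqref{pca}--\eqref{pcb}, but that is only cosmetic), use the parity class of $(a,b,c,d)$ mod $2$ to sort the transformed numerator $\theta(x^2,(-1)^dp^cx^2,(-1)^bp^ax^2;p^2)$ into one of the six cases, and use $3\mid cd$ to handle the modulus-$p^6$ factor --- concretely, the paper substitutes $(a,b,c,d)\mapsto(a,3b,c/3,d)$, $\tau\mapsto 3\tau$ (resp.\ $(3a,b,c,d/3)$, $\tau\mapsto\tau/3$) into \eqref{tmt}, which is the precise content of your remark that the hypothesis preserves the ``$3\tau$ versus $\tau/3$'' lattice structure and decides which of the two case lists one lands in.

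The one place where your argument goes astray is the Gaussian bookkeeping, which you yourself flag as the main obstacle. The resolution is not a cancellation ``against part of the prefactor's Gaussian'': it is a cancellation internal to each pfaffian entry. Each of the three modulus-$\tau$ theta factors in the numerator of an entry contributes $\exp(\ti\pi c(z_i-z_j)^2/(c\tau+d))$, for a total coefficient of $3$, while the single denominator factor $\theta_k(3w_i-3w_j|3\tau)$ contributes $\exp(3\ti\pi c(z_i-z_j)^2/(c\tau+d))$ as well (the factor $9$ from squaring $3(w_i-w_j)$ divided by the factor $3$ in the modulus), so every entry of the pfaffian transforms with \emph{no} Gaussian at all. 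The entire exponential in \eqref{pmi} therefore comes from the prefactor $\prod_{i<j}$, which is already a product over all pairs, and the matchings-versus-all-pairs discrepancy you worry about never arises; no appeal to a general ``the multiplier must be a Gaussian'' principle is needed. With that correction your argument closes, and the remainder is the finite table lookup you describe.
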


The constant of proportionality in \eqref{pmi} can be expressed in terms of Dedekind sums, but we will not do so here.
Note that, since $ad-bc=1$, there are exactly six possibilities for $(a,b,c,d)\ \operatorname{mod}\ 2$.

\begin{proof} Consider variables related by
$$x=e^{\ti\pi z},\qquad p=e^{\ti\pi\tau},\qquad \tilde x=e^{\ti\pi z/(c\tau+d)},\qquad \tilde p=e^{\ti\pi(a\tau+b)/(c\tau+d)}. $$
 Taking the product of 
\eqref{tmt} and the two identities obtain from \eqref{tmt} after  substituting $z\mapsto z+(c\tau+d)/2$ and $z\mapsto (a\tau+b)/2$ gives
\begin{align*}\tilde x^{-3}\theta(\tilde x^2,-\tilde x^2,\tilde p\tilde x^2;p^2)
&\sim \exp\left(\frac{3\ti\pi cz}{c\tau+d}\left(z+ a\tau+b+c\tau+d\right)\right)\\
&\quad\times x^{-3}\theta\left(x^2,(-1)^dp^cx^2,(-1)^bp^ax^2;p^2\right).\end{align*}
 By \eqref{tqp},
$$\theta\left((-1)^dp^c x^2;p^2\right)\sim \begin{cases} x^{-c}\theta(- x^2;p^2),& c \text{ even (hence } d \text{ odd)},\\
 x^{1-c}\theta\left((-1)^dp x^2;p^2\right), & c \text{ odd}.\end{cases}$$
 Treating the factor $\theta((-1)^bp^ax^2;p^2)$ in the same way, we find that
\begin{equation}\label{nmt}\tilde x^{-2}\theta(\tilde x^2,-\tilde x^2,p\tilde x^2;\tilde p^2)\sim
e^{\frac{3\ti\pi cz^2}{c\tau+d}}\times\begin{cases}x^{-2}\theta(x^2,-x^2,-px^2;p^2), & b \text{\ and } d \text{ odd},\\
x^{-1}\theta(x^2,px^2,-px^2;p^2),& a \text{ and } c \text{ odd},\\
x^{-2}\theta(x^2,-x^2,px^2;p^2),& \text{else}.
\end{cases}
 \end{equation}

If $c\equiv 0\ \operatorname{mod}\ 3$, substituting $(a,b,c,d)\mapsto(a,3b,c/3,d)$,
$\tau\mapsto 3\tau$ and $z\mapsto 3z+(c\tau+d)/2$ in \eqref{tmt} gives
\begin{equation}\label{dmt}\tilde x^{-3}\theta(-\tilde x^6;\tilde p^6)\sim e^{\frac{3\ti\pi cz^2}{c\tau+d}}\times\begin{cases}x^{-3}\theta(-x^6;p^6), & c \text{ even},\\
\theta(p^3x^6;p^6), & d \text{ even},\\
\theta(-p^3x^6;p^6), & \text{else}.\end{cases}
 \end{equation}
Similarly, if $d\equiv 0\ \operatorname{mod}\ 3$, substituting $(a,b,c,d)\mapsto(3a,b,c,d/3)$,
$\tau\mapsto \tau/3$ and $z\mapsto z+(c\tau+d)/6$ in \eqref{tmt} gives
\begin{equation}\label{emt}\tilde x^{-3}\theta(-\tilde x^6;\tilde p^6)\sim e^{\frac{3\ti\pi cz^2}{c\tau+d}}\times\begin{cases}x^{-1}\theta(-x^2;p^{2/3}), & c \text{ even},\\
\theta(p^{1/3}x^2;p^{2/3}), & d \text{ even},\\
\theta(-p^{1/3}x^2;p^{2/3}), & \text{else}.\end{cases}
 \end{equation}
Combining the identities \eqref{nmt}, \eqref{dmt} and \eqref{emt} we arrive at \eqref{pmi}.
\end{proof}

One can obtain a more conceptual understanding of Proposition~\ref{pmp} as follows.
Recall the notation
\begin{align*}
\Gamma_0(n)&=\left\{\left[\begin{matrix}a& b\\ c& d\end{matrix}\right]\in\mathrm{SL}(2,\mathbb Z);\, n\mid c\right\},\\
\Gamma_0(m,n)&=\left\{\left[\begin{matrix}a& b\\ c& d\end{matrix}\right]\in\mathrm{SL}(2,\mathbb Z);\,m\mid b,\ n\mid c\right\}.
\end{align*}
The first case of Proposition \ref{pmp} states that
 $P_n^{(1)}$ is essentially invariant under
the group $\Gamma=\Gamma_0(2,6)\simeq \Gamma_0(12)$.
It follows from the results of \cite{al} (see also \cite{rsa}) that  the normalizer $\mathrm{N}$ of $\Gamma$ in $\mathrm{SL}(2,\mathbb R)$
is  a disjoint union  $\mathrm N=\mathrm N_1\cup \mathrm N_2$, where $\mathrm N_1=\Gamma_0(3)$ and
$$\mathrm N_2=\left[\begin{matrix}0& 1/\sqrt 3\\ -\sqrt 3&0\end{matrix}\right]\Gamma_0(3)=\left\{\left[\begin{matrix}a\sqrt 3& b/\sqrt 3\\ c\sqrt 3 & d/\sqrt 3\end{matrix}\right];\,\left[\begin{matrix}a& b\\ c& d\end{matrix}\right]\in\mathrm{SL}(2,\mathbb Z), \ 3\mid d\right\}. $$
Consider the left-hand side of \eqref{pmi} for
 $A=\left[\begin{smallmatrix}a&b\\c&d\end{smallmatrix}\right]\in \mathrm N$. 
As  $\mathrm N/\Gamma\simeq S_2\times S_3$
there are  twelve cases to consider. If
$A\in \mathrm N_1$, we are in one of the first six cases of Proposition~\ref{pmp}, which correspond to the functions
\eqref{pca}.
If $A\in \mathrm N_2$, we need to consider
\begin{multline*}
P_n^{(1)}\left(\frac{z_1}{c\sqrt 3 \tau+\frac{d}{\sqrt 3}},\dots,\frac{ z_{2n}}{c\sqrt 3\tau+\frac{d}{\sqrt 3}};\frac{a\sqrt 3\tau+\frac{b}{\sqrt 3}}{c\sqrt 3\tau+\frac d{\sqrt 3}}\right)\\
=P_n^{(1)}\left(\frac{\sqrt 3 z_1}{3c\tau+d},\dots,\frac{\sqrt 3 z_{2n}}{3c\tau+d};\frac{3a\tau+b}{3c\tau+d}\right),\end{multline*}
where $3\mid d$. By Proposition \ref{pmp}, this
is an elementary factor times
$$P_n^{(\sigma)}(\sqrt 3 z_1,\dots,\sqrt 3z_n;3\tau), $$
 where $P_n^{(\sigma)}$ is one of the functions \eqref{pcb}.

Finally, we note that $\Gamma$ has six cusps, which correspond to particular limits of $\tau$ in $\mathbb Q\cup\{\infty\}$. 
 Explicitly, they are given by $\{C_0,C_1,C_2,C_3,C_4,C_6\}$, where
\begin{align*}C_\sigma&=\left\{\frac{k}{6l};\ (k,l)=1,\ k\equiv\pm \sigma\ \operatorname{mod}\ 12\right\},\qquad \sigma\neq 1, \\
C_1&=\left\{\frac{k}{6l};\ (k,l)=1,\ k\equiv\pm 1\ \operatorname{mod}\ 6\right\}\cup\{\infty\}. \end{align*}
The group $\mathrm N/\Gamma$ acts transitively on the cusps.
An element  $\left[\begin{smallmatrix}a&b\\c&d\end{smallmatrix}\right]\in\mathrm{SL}(2,\mathbb Z)$
 maps $\infty$ to $a/c$. It is easy to see that if $c\equiv 0\ \operatorname{mod}\ 3$, then $a/c\in C_1$ if 
$c$ is even, $a/c\in C_4$ if $a$ is even and otherwise $a/c\in C_2$.
If $d\equiv 0\ \operatorname{mod}\ 3$, we have instead $a/c\in C_3,\,C_6,\, C_0$, respectively. Thus, we have labelled the functions $P_n^{(\sigma)}$ so that the behaviour of $P_n^{(1)}$ as $p\rightarrow e^{\ti\pi\sigma/6}$ is related to the behaviour 
of $P_n^{(\sigma )}$ (and $P_n^{(\hat\sigma)}$) as $p\rightarrow 0$.

\subsection{Characteristic properties}

We will now show that the functions $P_n^{(\sigma)}$ are  characterized by certain analytic properties. This  will  be used
in \S \ref{as} to identify some of them with quantities appearing in statistical mechanics.

\begin{lemma}\label{tql}
Let $f(x)$ denote one of the three functions
$$x^{-1}\theta(x^2,px^2,-px^2;p^2),\qquad
x^{-2}\theta(x^2,-x^2,px^2;p^2),\qquad x^{-2}\theta(x^2,-x^2,-px^2;p^2). $$
Then,
\begin{subequations}\label{csf}
\begin{equation}\label{csa}f(\om^{-1}x)+f(x)+f(\om x)=0,\qquad \om=e^{2\ti\pi/3}, \end{equation}
\begin{equation}\label{csb}x^{-4}f(p^{-2/3}x)+p^{-4/3}f(x)+x^4f(p^{2/3}x)=0. \end{equation}
\end{subequations}
\end{lemma}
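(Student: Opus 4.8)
The plan is to derive, for each of the three functions $f$, a single explicit Laurent expansion coming from the quintuple product identity \eqref{wqp}, and then to read off both \eqref{csa} and \eqref{csb} from elementary properties of the Legendre symbol $(k/3)$. For $f(x)=x^{-1}\theta(x^2,px^2,-px^2;p^2)$ one simply replaces $x$ by $x^2$ in \eqref{wqp}. The other two functions are not literally of the shape $\theta(z,pz,-pz;p^2)$, but become so after extracting a monomial factor by means of \eqref{tqp}: from $\theta(p^{-1}x^2;p^2)=-p^{-1}x^2\,\theta(px^2;p^2)$ one gets $\theta(x^2,-x^2,px^2;p^2)=-px^{-2}\,\theta(p^{-1}x^2,x^2,-x^2;p^2)$, where $(p^{-1}x^2,x^2,-x^2)=(z,pz,-pz)$ with $z=p^{-1}x^2$; similarly $\theta(x^2,-x^2,-px^2;p^2)=px^{-2}\,\theta(-p^{-1}x^2,-x^2,x^2;p^2)$ with $z=-p^{-1}x^2$. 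Applying \eqref{wqp} then yields, uniformly,
$$f(x)=\frac{C}{(p^2;p^2)_\infty}\sum_{n\in\mathbb Z}\left(\frac{n+1}{3}\right)\eta^{\,n}\,p^{(n^2-\delta n)/3}\,x^{2n-\delta},$$
where $(C,\eta,\delta)$ equals $(1,1,1)$, $(-p,1,4)$ and $(p,-1,4)$ for the three functions in turn; the sign $\eta=\pm1$ absorbs the $(-1)^n$ produced when $z=-p^{-1}x^2$, and in every case $\delta\equiv1\pmod3$.

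\emph{Proof of \eqref{csa}.} Since $\om^{-k}+1+\om^{k}$ equals $3$ when $3\mid k$ and $0$ otherwise, the left-hand side of \eqref{csa} equals
$$\frac{C}{(p^2;p^2)_\infty}\sum_{n}\left(\frac{n+1}{3}\right)\eta^{\,n}\,p^{(n^2-\delta n)/3}\,x^{2n-\delta}\bigl(\om^{-(2n-\delta)}+1+\om^{2n-\delta}\bigr).$$
The bracket vanishes unless $3\mid 2n-\delta$, and since $\delta\equiv1\pmod3$ this forces $n\equiv2\pmod3$, so that $(n+1)/3$ is an integer and $\left(\frac{n+1}{3}\right)=0$. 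Hence every term vanishes.

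\emph{Proof of \eqref{csb}.} Because $(2n-\delta)\pm4=2(n\pm2)-\delta$, after shifting the summation index in the outer two terms the coefficient of $x^{2n-\delta}$ in the left-hand side of \eqref{csb} is $C\eta^{\,n}/(p^2;p^2)_\infty$ times
$$\left(\tfrac{n}{3}\right)p^{\,Q(n+2)-\frac23(2(n+2)-\delta)}+\left(\tfrac{n+1}{3}\right)p^{\,Q(n)-\frac43}+\left(\tfrac{n+2}{3}\right)p^{\,Q(n-2)+\frac23(2(n-2)-\delta)},$$
where $Q(m)=(m^2-\delta m)/3$; here I have used $\eta^{\pm2}=1$ and the $3$-periodicity of $(k/3)$. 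A direct computation shows that all three exponents of $p$ equal $(n^2-\delta n-4)/3$, so the displayed expression reduces to $p^{(n^2-\delta n-4)/3}\bigl(\left(\tfrac{n}{3}\right)+\left(\tfrac{n+1}{3}\right)+\left(\tfrac{n+2}{3}\right)\bigr)$, which is zero since among three consecutive integers the values of $(k/3)$ are $0,1,-1$ in some order.

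The only step requiring an idea is the first one — recognising the last two functions as quintuple products after a monomial twist via \eqref{tqp}; once the three expansions are in hand, the rest is purely formal manipulation of the series, using only that $(k/3)=0$ when $3\mid k$ and that $(k/3)$ sums to $0$ over a complete residue system modulo $3$.
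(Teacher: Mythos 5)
Your proof is correct and follows essentially the same route as the paper: expand via the quintuple product identity \eqref{wqp} and reduce both identities to the facts that $(k/3)$ vanishes when $3\mid k$ and sums to zero over consecutive residues. The only difference is in handling the second and third functions — you twist them into quintuple products by a monomial factor via \eqref{tqp} and verify all three expansions directly, whereas the paper checks only the first function and then observes that the substitutions $x\mapsto x^3f(\pm\sqrt{p}\,x)$ preserve the system \eqref{csf} and produce the other two solutions; both are valid, yours being slightly more explicit and uniform at the cost of a little extra bookkeeping.
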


\begin{proof}
For the first choice of $f$, the quintuple product identity
\eqref{wqp} gives
$$ f(x)=\frac1{(p^2;p^2)_\infty}\sum_{n=-\infty}^\infty\left(\frac{n+1}3\right)p^{\frac{n(n-1)}3}x^{2n-1}. $$
The identity \eqref{csa} then follows from the fact that $\omega^{1-2n}+1+\omega^{2n-1}=0$ unless $n\equiv-1\ \operatorname{mod}\ 3$,
that is, $((n+1)/3)=0$. As for \eqref{csb}, appropriate shifts of the summation index give
\begin{multline*}
x^{-4}f(p^{-2/3}x)+p^{-4/3}f(x)+x^4f(p^{2/3}x)\\
=\frac1{(p^2;p^2)_\infty}\sum_{n=-\infty}^\infty\left(\left(\frac{n+3}3\right)+\left(\frac{n+1}3\right)+\left(\frac{n-1}3\right)\right)p^{\frac{n^2-n-4}3}x^{2n-1}=0.
\end{multline*}
Finally,  we note that if $f$ solves \eqref{csf} then so do $x\mapsto x^3f(\pm \sqrt{p}x)$.  This yields the other two solutions.
\end{proof}

\begin{lemma}\label{nvl}
The functions $P_n^{(\sigma)}$ never vanish identically.
\end{lemma}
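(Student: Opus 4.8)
The plan is to show that, for each $\sigma$, the pfaffian defining $P_n^{(\sigma)}$ has at least one nonzero Laurent coefficient, so that the product prefactor (which is a nonzero theta expression) cannot kill it. I would work with the pfaffian as a sum over pairings on $[1,2n]$: writing $a_{ij}(z)$ for the matrix entry, each pairing $\pi=\{\{i_1,j_1\},\dots,\{i_n,j_n\}\}$ contributes $\pm\prod_k a_{i_k j_k}$, and $P_n^{(\sigma)}$ is the product prefactor times this sum. The cleanest route is to specialize the variables so that the pfaffian degenerates to something manifestly nonzero. A natural choice is the homogeneous-type limit in stages: send $z_{2n}\to z_{2n-1}$, then $z_{2n-2}\to z_{2n-3}$, and so on, pairing up the variables. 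Under such a confluence the prefactor $\prod_{i<j}\theta(\cdots)$ acquires zeros of controlled order (from the factors $\theta(x_i^2/x_j^2;\cdot)$ type terms as arguments approach $1$), and the pfaffian entries $a_{ij}$ with $i,j$ in the same pair develop matching poles; the surviving leading term is, up to a nonzero constant, a pfaffian of smaller size or a product of $2\times 2$ blocks, which is easy to evaluate.

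Concretely, I would first handle $P_n^{(1)}$, for which the trigonometric ($p\to 0$) limit is already computed in the excerpt, equation \eqref{tsp}: it equals $(-1)^n X^{-(6n-4)}\Delta(x^4)s_{n-1,n-1,\dots,1,1,0,0}(x^2)$, and Sundquist's identity \eqref{pfi} together with the fact that a Schur polynomial is not identically zero shows $P_n^{(1)}\not\equiv 0$. Hence $P_n^{(1)}$ is not identically zero as a function of $(z,\tau)$. Now invoke Proposition~\ref{pmp}: every other $P_n^{(\sigma)}$ is obtained from $P_n^{(1)}$ by a change of variables $z_j\mapsto z_j/(c\tau+d)$, $\tau\mapsto(a\tau+b)/(c\tau+d)$ followed by multiplication by the nowhere-vanishing Gaussian factor $\exp(\tfrac{3\ti\pi c}{c\tau+d}\sum_{i<j}(z_i-z_j)^2)$ and a nonzero constant. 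A biholomorphic change of variables cannot turn a not-identically-zero holomorphic function into the zero function, so $P_n^{(\sigma)}\not\equiv 0$ for every $\sigma\in\Sigma$. For $P_n^{(\hat\sigma)}$ one uses \eqref{phs}, which is just the substitution $\tau\mapsto\tau+3$, again a bijection on the upper half-plane.

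The main obstacle is making sure the reduction via Proposition~\ref{pmp} is rigorous: one must check that the proportionality constants there are genuinely nonzero (they are products of theta-type constants and a Gaussian, hence nonzero for $\tau$ in the upper half-plane) and that the "equality up to a factor independent of $z_j$" really is an honest analytic identity rather than a formal one. An alternative, more self-contained argument that avoids the modular machinery — and which I would include as a remark or fallback — is to read off the lowest-degree homogeneous component of the pfaffian directly: each entry $a_{ij}$ expands, via \eqref{jtp} or \eqref{wqp}, as $x_j^m/x_i^m$ times a power series in $p$-powers of $x_i/x_j$, so the pfaffian has a well-defined leading term in the total-degree grading, and this leading term is exactly the $p\to 0$ pfaffian in \eqref{tsp} (or its analogue for the other $\sigma$), which is a nonzero multiple of a Schur polynomial. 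Either way the conclusion is the same, and the Schur-polynomial nonvanishing is the one genuinely external fact being used.
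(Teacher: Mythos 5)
Your argument establishes that $P_n^{(\sigma)}$ is not the zero function of the $2n+1$ variables $(z_1,\dots,z_{2n},\tau)$ \emph{jointly}, but the lemma is needed --- and is invoked in the proof of Proposition~\ref{ap}, where $\tau$ is an arbitrary \emph{fixed} point of the upper half-plane --- in the stronger form that for every fixed $\tau$ the function $z\mapsto P_n^{(\sigma)}(z;\tau)$ is not identically zero. The trigonometric limit \eqref{tsp} only controls the behaviour as $\operatorname{Im}\tau\to\infty$; a priori there could be exceptional values $\tau_0$ at which $P_n^{(\sigma)}(\cdot\,;\tau_0)\equiv 0$ (compare $f(z,\tau)=(\tau-\tau_0)z$, which is not identically zero yet vanishes identically in $z$ at $\tau=\tau_0$). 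The modular step does not repair this: Proposition~\ref{pmp} relates $P_n^{(\sigma)}$ at $\tau$ to $P_n^{(1)}$ at $(a\tau+b)/(c\tau+d)$, and since $\operatorname{Im}\bigl((a\tau+b)/(c\tau+d)\bigr)=\operatorname{Im}\tau/|c\tau+d|^2$ is bounded on the orbit of a fixed $\tau_0$, you can never move a given $\tau_0$ into the region where the $p\to 0$ expansion is decisive. Your fallback argument has the same problem in disguise: for fixed $p\neq 0$ the matrix entries are doubly infinite Laurent series in $x_i/x_j$ (meromorphic functions on $\mathbb C^*$), so there is no ``lowest total-degree component'' to test; the only available grading is the expansion in powers of $p$, which again only sees $p\to 0$. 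Since Proposition~\ref{ap} is a statement about a one-dimensional solution space at a fixed $\tau$, and feeds into Theorem~\ref{dwt} at a fixed $p$, the generic-$\tau$ version you obtain is genuinely weaker than what is required.

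The gap is fixable, but by a different mechanism. The paper specializes $z_{2n-1}-z_{2n}$ to a value $\gamma$ (depending on $\sigma$) at which the single prefactor $b_{2n-1,2n}$ vanishes while the corresponding numerator $a_{2n-1,2n}$ does not; in $\prod_{i<j}b_{ij}\cdot\pfaff(a_{ij}/b_{ij})$ only the pairings containing $\{2n-1,2n\}$ then survive, yielding the recursion \eqref{psr}, which expresses the specialized $P_n^{(\sigma)}$ as a manifestly nonzero theta prefactor times $P_{n-1}^{(\sigma)}$ at the \emph{same} $\tau$; induction on $n$ then works pointwise in $\tau$. Alternatively, for $\sigma\neq 1,\hat 1$ you could extract from Lemma~\ref{ll} a single nonzero coefficient in the expansion of Proposition~\ref{spp} (the $\chi_\mu$ are linearly independent and the coefficients are explicit nonvanishing products for $0<|p|<1$). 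Either route delivers the fixed-$\tau$ statement; as written, yours does not.
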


\begin{proof}
We specialize the variables in $P_n^{(\sigma)}$ as  $z_{2n-1}-z_{2n}= \gamma$, where
$$\gamma=\begin{cases}  \tau/6,& \sigma=0,6,\\
1/2+ \tau/6, & \sigma=\hat 0,\hat 6,\\
1/2+ \tau/3, & \sigma=3,\hat 3,\\
 1/6, & \sigma=1,\hat 1,\\
 1/6+\tau/2, & \sigma=\hat 2,\hat 4,\\
 1/3+\tau/2, & \sigma=2,4.
\end{cases} $$
 Writing 
$$P_n^{(\sigma)}=\prod_{1\leq i<j\leq 2n}b_{ij}\pfaff_{1\leq i,j\leq 2n}\left(\frac{a_{ij}}{b_{ij}}\right), $$
we then have $b_{2n-1,2n}=0$, $a_{2n-1,2n}\neq 0$.
Thus, 
 expanding the pfaffian, only terms containing $a_{2n-1,2n}$ contributes, which gives
\begin{multline}\label{psr}P_n^{(\sigma)}(z_1,\dots,z_{2n-2},z_{2n}+\gamma,z_{2n};\tau)\\
=a_{2n-1,2n}\prod_{j=1}^{2n-2}b_{j,2n-1}b_{j,2n}\,P_{n-1}^{(\sigma)}(z_1,\dots,z_{2n-2};\tau).\end{multline}
As the prefactor is non-zero for generic values of $z_1,\dots,z_{2n-2}$,
the result follows by induction on $n$.
\end{proof}

\begin{proposition}\label{ap}
Let $n$ be a positive integer,
 $\sigma\in\Sigma$ and let $\tau$ be in the upper half-plane. Let $z_2,\dots,z_{2n}$ be generic complex numbers 
 and let 
 $t=e^{2\ti\pi(z_2+\dots+z_{2n})}$.
Suppose $f$ is an entire function   such that
%\begin{subequations}\label{dc}
\begin{align}
\label{eo} f(z+1)&=\begin{cases} f(z), & \sigma=0,\hat 0,1,\hat 1,3,\hat 3,4,\hat 4,\\
 -f(z), & \sigma=2,\hat 2,6,\hat 6,\end{cases}\\
\label{fqp} f(z+\tau)&=\frac{t^3}{e^{(6n-3)\ti\pi (2z+\tau)}}\times\begin{cases}f(z),&\sigma=0,1,2,\hat 2,3,4,6,\hat 6,\\
-f(z),& \sigma=\hat 0,\hat 1,\hat 3,\hat 4,\end{cases}\end{align}
\begin{equation}
\label{ptq} f\left(z-\frac{2}3\right)+f(z)+f\left(z+\frac 23\right)=0, \qquad \sigma=1,\hat 1,2,\hat 2,4,\hat 4,\end{equation}
\begin{multline*} t^2e^{-(8n-4)\ti\pi z}f\left(z-\frac{2\tau}3\right)+e^{-\frac{(8n-4)\ti\pi\tau}3}f(z)
+t^{-2}e^{(8n-4)\ti\pi z}f\left(z+\frac{2\tau}3\right)=0,\\
  \sigma=0,\hat 0,3,\hat 3, 6,\hat 6,\end{multline*}
\begin{align}
\label{vc} f(z_2)&=\dots=f(z_{2n})=0, \qquad \text{for all }\sigma, \\
\nonumber  f\left(z_2+\frac 12\right)&=\dots=f\left(z_{2n}+\frac 12\right)=0,\qquad \sigma=1,\hat 1,3,\hat 3,\\
\label{ve} f\left(z_2+\frac\tau 2\right)&=\dots=f\left(z_{2n}+\frac\tau 2\right)=0,\qquad \sigma=0,2,4,6,\\
\nonumber f\left(z_2+\frac{1+\tau} 2\right)&=\dots=f\left(z_{2n}+\frac{1+\tau} 2\right)=0,\qquad \sigma=\hat 0,\hat 2,\hat 4,\hat 6.
\end{align}
%\end{subequations}
Then, 
$$f(z)=CP_n^{(\sigma)}(z,z_2,\dots,z_{2n};\tau), $$
with $C$ independent of $z$.
\end{proposition}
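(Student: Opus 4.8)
The plan is to argue that the space $V$ of entire functions $f$ satisfying all the listed conditions for a fixed $\sigma$ is at most one-dimensional, and then to check that $P_n^{(\sigma)}(\cdot,z_2,\dots,z_{2n};\tau)$ lies in $V$ and is nonzero. The non-vanishing is already handled by Lemma~\ref{nvl} together with the factorization \eqref{psr}: specializing $z\mapsto z_2+\gamma$ in $P_n^{(\sigma)}(z,z_2,\dots,z_{2n};\tau)$ (with $\gamma$ the shift from that lemma, suitably relabelled so it pairs the first two arguments) produces a nonzero multiple of $P_{n-1}^{(\sigma)}$, which is not identically zero; hence $P_n^{(\sigma)}$ as a function of $z$ alone does not vanish. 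That $P_n^{(\sigma)}$ satisfies the quasi-periodicity \eqref{eo}--\eqref{fqp} and the vanishing conditions \eqref{vc}--\eqref{ve} is a direct computation: expand the pfaffian along the first row, so that $P_n^{(\sigma)}(z,z_2,\dots)$ is a sum of terms $\frac{a_{1j}}{b_{1j}}\prod_{k}b_{1k}\cdot(\text{pfaffian not involving }z)$; the product $\prod_{j\ge 2}(a_{1j}/b_{1j})\cdot\prod_{1\le i<j}b_{ij}$ collapses, and one reads off the transformation laws of $z\mapsto x_1=e^{\ti\pi z}$ from \eqref{tqp} applied to each theta factor, while the three-term relations \eqref{ptq} are exactly the content of Lemma~\ref{tql} transported through the pfaffian expansion (each surviving term is, up to a $z$-independent factor, one of the three functions $f$ of that lemma in the variable $x_1$, or its $\sqrt p$-rescaling). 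The vanishings at $z_j$ (and the shifted points) hold because setting $z=z_j$ makes two arguments of $P_n^{(\sigma)}$ coincide, forcing a zero through the Vandermonde-type factor $b_{ij}$ in the denominator being compensated — more precisely, one checks $b_{1j}$ has a zero there not cancelled by $a_{1j}$.

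The core of the argument is the uniqueness. Fix $\sigma$ and suppose $f\in V$. Conditions \eqref{eo} and \eqref{fqp} say $f$ is, up to the explicit exponential multiplier, a theta function quasi-periodic with multiplier of "degree" $6n-3$ in $z$ (with a possible half-period shift in the $\hat{}$ cases), so that, were there no further constraints, $f$ would live in a space of dimension $6n-3$. The vanishing conditions \eqref{vc}--\eqref{ve} impose: $f$ vanishes at the $2n-1$ generic points $z_2,\dots,z_{2n}$, and (for the $\sigma$ where the extra lines apply) at $2n-1$ further points obtained by a half-period shift. In the cases with an extra vanishing line this already cuts the dimension to $(6n-3)-(2n-1)-(2n-1)=2n-1$, and in the remaining cases one must instead use the three-term relation \eqref{ptq} or its companion. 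The key point is that the $\mathbb Z/3$-type relation $f(z-\tfrac23)+f(z)+f(z+\tfrac23)=0$ (resp. the quasi-periodic variant with the $e^{\pm(8n-4)\ti\pi z}$ factors) is a linear condition that, on the theta space of degree $6n-3$, has solution space of dimension roughly $\tfrac13$ of the whole — cutting $6n-3$ down to $2n-1$. Either way one arrives at a space of dimension $2n-1$ of functions with the required periodicity and the $\mathbb Z/3$-symmetry but only the vanishing condition \eqref{vc} imposed; then imposing \eqref{vc} at the $2n-1$ generic points $z_2,\dots,z_{2n}$ cuts this to dimension $\le 1$. Combined with the non-vanishing of $P_n^{(\sigma)}$, this forces $f = C\,P_n^{(\sigma)}(z,z_2,\dots,z_{2n};\tau)$.

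To make the dimension count rigorous I would set $g(z)=e^{(6n-3)\ti\pi z^2/\tau+\cdots}f(z)$ (absorbing the multiplier in \eqref{fqp} so that $g$ becomes genuinely elliptic-theta-like of degree $N:=6n-3$ for the lattice $\mathbb Z+\tau\mathbb Z$, or $\mathbb Z + \tfrac12 + \tau\mathbb Z$-adapted in the hatted cases), expand $g$ in the standard basis $\{\theta\text{-like functions}\}$ of that $N$-dimensional space using \eqref{jtp}, and translate each remaining condition into linear equations on the $N$ coefficients. The three-term relation \eqref{ptq} becomes the statement that the coefficient vector is supported on residues $\not\equiv$ (some fixed class) mod $3$, via $1+\omega^k+\omega^{2k}=0$ exactly as in the proof of Lemma~\ref{tql}; the shifted-vanishing conditions translate, after a change of variable $x\mapsto -x$ or $x\mapsto px$, into ordinary vanishing of a related theta function, again via \eqref{tqp}. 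The main obstacle I anticipate is bookkeeping: getting the degree $N=6n-3$, the precise multiplier system, and the placement of the half-period shifts all mutually consistent across the twelve values of $\sigma$, and verifying that in every case the three ingredients (quasi-periodicity giving $N$, the $\mathbb Z/3$ or extra-line condition removing $4n-2$ dimensions, and the $2n-1$ generic zeros removing the rest) combine to exactly $\le 1$ with no off-by-one error. A clean way to sidestep checking all twelve separately is to use Proposition~\ref{pmp}: the conditions \eqref{eo}--\eqref{ve} are permuted among the $\sigma$ by the modular transformations there, so it suffices to do the count for one representative (say $\sigma=1$, where \eqref{ptq} is the relevant relation and the space is manifestly the Schur-polynomial space of \eqref{tsp}) and transport.
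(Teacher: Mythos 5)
Your overall strategy coincides with the paper's: exhibit $P_n^{(\sigma)}(\cdot,z_2,\dots,z_{2n};\tau)$ as a nonzero element of the solution space and show that space is one-dimensional. The membership and non-vanishing parts are fine in outline, except that the vanishing at $z=z_j$ is simply antisymmetry of the pfaffian under coincidence of two arguments (rows $1$ and $j$ of the skew-symmetric matrix become equal when $x_1=x_j$); your parenthetical about $b_{1j}$ having an uncancelled zero is not correct --- for $\sigma=1$ it is the numerator $\theta(x_1^2/x_j^2;p^2)$ of the $(1,j)$ entry that vanishes, and the individual expansion terms $f_k$ do \emph{not} vanish at $z=z_k$, only their alternating sum does.

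The genuine gap is in the uniqueness argument, precisely where the content of the proof lies. First, a repairable miscount: the relation $f(z-\tfrac23)+f(z)+f(z+\tfrac23)=0$ kills the Fourier modes $x^k$ with $3\mid k$ (since $\omega^{-k}+1+\omega^{k}=0$ exactly when $3\nmid k$), i.e.\ one third of the coefficients, so it cuts the $(6n-3)$-dimensional theta space to $4n-2$, not to $2n-1$; moreover every $\sigma$ carries \emph{both} a three-term relation \emph{and} exactly one extra half-period vanishing line, so the space $V$ of functions satisfying everything except \eqref{vc} has dimension $6n-3\to 4n-2\to 2n-1$. What cannot be repaired by genericity is your last step: imposing the $2n-1$ conditions \eqref{vc} on the $(2n-1)$-dimensional space $V$ would, for conditions in general position, give dimension $0$ --- and it must not, since $P_n^{(\sigma)}$ is a nonzero element of the resulting space. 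The conditions are therefore necessarily degenerate, and one must compute the rank of the evaluation map exactly. The paper does this by expanding the pfaffian along the first row to obtain an explicit basis $(f_k)_{k=2}^{2n}$ of $V$ (proved to span via interpolation at the points $z_k+\tfrac16$ and Liouville's theorem) and then observing that the matrix $(f_j(z_i))_{i,j=2}^{2n}$ is, up to nonzero diagonal factors, an odd-dimensional skew-symmetric matrix: its determinant vanishes identically, so the rank is at most $2n-2$ and $\dim W\ge 1$; while a principal $(2n-2)\times(2n-2)$ minor equals, up to nonzero factors, the square of $P_{n-1}^{(1)}(z_2,\dots,z_{2n-1};\tau)$, nonzero by Lemma~\ref{nvl}, so the rank is at least $2n-2$ and $\dim W\le 1$. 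Without this skew-symmetry rank argument (or some substitute) your proposal establishes neither the upper bound $\dim W\le 1$ nor, consistently with the existence of $P_n^{(\sigma)}$, the claimed conclusion.
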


\begin{proof} 
All cases can be proved by the same method or deduced from each other using Proposition \ref{pmp}. Thus,
we consider only
the case $\sigma=1$.  Let $W$  be the space of entire 
functions satisfying all the stated conditions (for $\sigma=1$)
and let $V$ be the larger space where we do not assume \eqref{vc}.
Let $g(z)=P_n^{(1)}(z,z_2,\dots,z_{2n};\tau)$. By Lemma \ref{nvl}, $g$ is not identically zero for generic values of $z_j$. 
Thus, it suffices to show that $g\in W$ 
and that $\dim W=1$.

For $2\leq k\leq 2n$, let
$$f_k(z)=x_k^2x^{-2}\theta(x^2/x_k^2,-x^2/x_k^2,px^2/x_k^2;p^2)\prod_{j=2,\,j\neq k}^{2n}x_j^3x^{-3}\theta(-x^6/x_j^6;p^6), $$
where
$x=e^{\ti\pi z}$, $x_j=e^{\ti\pi z_j}$.
By definition, $g$ is a linear combination of the functions $f_k$. We will prove that $(f_k)_{k=2}^{2n}$ is a basis for $V$.
To prove that $g\in W$ it then suffices to
show that $g$ vanishes at the points $z_2,\dots,z_{2n}$, which is clear since
 $P_n^{(1)}$ is anti-symmetric.

That $f_k\in V$ follows easily from
\eqref{tqp} and Lemma \ref{tql}. Since $z_j$ are generic, $f_k(z_j+1/6)=0$ 
if and only if $k\neq j$. This shows that the functions $f_k$ are linearly independent. To prove that they span $V$  we will
prove that any $h\in V$ can be expanded as
\begin{equation}\label{he}h(z)=\sum_{k=2}^{2n}\frac{h(z_k+1/6)}{f_k(z_k+1/6)}\,f_k(z). \end{equation}
To this end,
let  $f(z)$ be the difference of the two sides 
in \eqref{he}. Then, $f\in V$ and $f$ vanishes at the points $z_k+1/2$ and $z_k+1/6$. Using \eqref{eo}, \eqref{fqp} and \eqref{ptq} we deduce that $f$ vanishes at $z_k+(2\mathbb Z+1)/6+\tau\mathbb Z$. Thus,
$$k(z)=\frac{f(z)}{\prod_{k=2}^{2n}x_k^3x^{-3}\theta(-x^6/x_k^6;p^6)} $$
is an entire function. Using again \eqref{eo} and \eqref{fqp},
one checks that $k$ has periods $1$ and $\tau$ so, by Liouville's theorem, it is constant. Since $k$ satisfies \eqref{ptq}, that constant is zero.
This completes the proof that $(f_k)_{k=2}^{2n}$ is a basis for $V$.

It remains to   prove that $\dim(W)=1$. Identifying the elements in $V$ with
their coordinates in the basis $(f_k)_{k=2}^{2n}$, the subspace $W$ 
is identified with the nullspace of the matrix  $(f_j(z_i))_{i,j=2}^{2n}$.
We must show that this matrix has rank $2n-2$. If the
 rank were $2n-1$, then we
would have
\begin{align*}0&\neq\det_{2\leq i,j\leq 2n}(f_j(z_i))\\
&=\prod_{i,j=2}^{2n}x_i^{-3}x_j^3\theta(-x_i^6/x_j^6;p^6)\det_{2\leq i,j\leq 2n}\left(\frac{x_i\theta(x_i^2/x_j^2,-x_i^2/x_j^2,px_i^2/x_j^2;p^2)}{x_j\theta(-x_i^6/x_j^6;p^6)}\right). \end{align*}
As the final matrix is odd-dimensional and skew-symmetric, this is impossible.
If, on the other hand, the rank were less than $2n-2$, then
we would have
$$\det_{2\leq i,j\leq 2n-1}\left(\frac{x_i\theta(x_i^2/x_j^2,-x_i^2/x_j^2,px_i^2/x_j^2;p^2)}{x_j\theta(-x_i^6/x_j^6;p^6)}\right)=0. $$
 This is equivalent to the pfaffian (which is a square root of the determinant)
 being zero, that is,
 $P_{n-1}^{(1)}(z_2,\dots,z_{2n-1};\tau)=0$. Since $z_j$ are generic, this contradicts Lemma 
\ref{nvl}.
\end{proof}

As a non-trivial consequence of Proposition \ref{ap}, 
we obtain the following fact.

\begin{corollary}\label{hsc}
One has
\begin{align*}P_n^{(\sigma)}\left(z_1+\frac12,z_2,\dots,z_{2n};\tau\right)&=A_\sigma B_\sigma^{n-1}P_n^{(\rho)}(z_1,z_2,\dots,z_{2n};\tau),\\
&\hspace*{4cm} \{\sigma,\rho\}=\{1,\hat 1\},\, \{3,\hat 3\},\\
P_n^{(\sigma)}\left(z_1+\frac\tau2,z_2,\dots,z_{2n};\tau\right)&=A_\sigma B_\sigma^{n-1}
\frac{x_2^3\dotsm x_{2n}^3}{x_1^{6n-3}}P_n^{(\rho)}(z_1,z_2,\dots,z_{2n};\tau), \\ 
&\hspace*{4cm} \{\sigma,\rho\}=\{0,6\},\,\{2,4\},\\
P_n^{(\sigma)}\left(z_1+\frac{\tau+1}2,z_2,\dots,z_{2n};\tau\right)&=A_\sigma B_\sigma^{n-1}\frac{x_2^3\dotsm x_{2n}^3}{x_1^{6n-3}}P_n^{(\rho)}(z_1,z_2,\dots,z_{2n};\tau),\\
&\hspace*{4cm} \{\sigma,\rho\}=\{\hat 0,\hat 6\},\,\{\hat 2,\hat 4\},
\end{align*}
where 
\begin{align*}A_\sigma&=\begin{cases}
-1, & \sigma= 1,\hat 1,3,\hat 3,\\
-p^{-1}, & \sigma =0,4,\\
p^{-1}, & \sigma=\hat 0,\hat 4,\\
-p^{-1/2}, & \sigma=2,6,\\
\ti p^{-1/2}, & \sigma =\hat 2,\hat 6,
\end{cases}\\
B_\sigma&=\begin{cases}
-p^{-5/3}\theta(-p^{1/3};p^2)/\theta(-p^{2/3};p^2),& \sigma=0,\\
p^{-5/3}\theta(p^{1/3};p^2)/\theta(-p^{2/3};p^2), & \sigma=\hat 0,\\
-\theta(p;p^6)/\theta(-p;p^6), & \sigma=1,\\
-\theta(-p;p^6)/\theta(p;p^6), & \sigma=\hat 1,\\
p^{-1}\theta(-p;p^6)/\theta(-p^2;p^6), & \sigma=2,\\
-p^{-1}\theta(p;p^6)/\theta(-p^2;p^6),& \sigma=\hat 2,\\
\theta(-p^{1/3};p^2)/\theta(p^{1/3};p^2),& \sigma=3,\\
\theta(p^{1/3};p^2)/\theta(-p^{1/3};p^2),& \sigma=\hat 3,\\
p^{-2}\theta(-p^2;p^6)/\theta(-p;p^6), & \sigma=4,\\
p^{-2}\theta(-p^2;p^6)/\theta(p;p^6), & \sigma=\hat 4,\\
-p^{-4/3}\theta(-p^{2/3};p^2)/\theta(-p^{1/3};p^2), & \sigma=6,\\
-p^{-4/3}\theta(-p^{2/3};p^2)/\theta(p^{1/3};p^2), & \sigma=\hat 6.
\end{cases}
\end{align*}
\end{corollary}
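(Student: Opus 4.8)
The plan is to extract both the proportionality and the value of the constant from the characterisation in Proposition~\ref{ap}, using in addition the factorisation \eqref{psr} obtained inside the proof of Lemma~\ref{nvl}. All twelve identities go the same way, so I would carry out $\{\sigma,\rho\}=\{1,\hat 1\}$ (translation by $\tfrac12$, trivial monomial) and $\{\sigma,\rho\}=\{0,6\}$ (translation by $\tfrac\tau2$, monomial $m=x_2^3\cdots x_{2n}^3/x_1^{6n-3}$) in detail, the remaining ten being either entirely analogous or reducible to these via Proposition~\ref{pmp}.

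\emph{Proportionality.} Write $\gamma$ for the relevant half-period, let $m$ be the monomial appearing in the statement (so $m=1$ when $\gamma=\tfrac12$ and $m=x_2^3\cdots x_{2n}^3/x_1^{6n-3}$ otherwise), and set $h(z)=m^{-1}P_n^{(\sigma)}(z+\gamma,z_2,\dots,z_{2n};\tau)$, viewed as a function of $z$. I would check that $h$ is entire (the prefactor $\prod_{i<j}b_{ij}$ kills the poles of the pfaffian and $m$ is zero-free) and satisfies every hypothesis of Proposition~\ref{ap} for the label $\rho$. This is routine: the substitution $z\mapsto z+\gamma$ replaces $x_1$ by $e^{\ti\pi\gamma}x_1$, so by \eqref{tqp} it permutes the factors $\theta(\pm x_1^2/x_j^2;p^2)$ and $\theta(\pm px_1^2/x_j^2;p^2)$ among themselves; combined with division by $m$ it turns the list of conditions for $\sigma$ into the list for $\rho$, flipping the distinguishing signs in the quasi-periodicity relations \eqref{eo}, \eqref{fqp} and shifting the vanishing points \eqref{vc}, \eqref{ve} accordingly, while leaving the relevant three-term relation \eqref{ptq} intact. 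Nothing beyond \eqref{tqp} is needed here, since $P_n^{(\sigma)}$ already obeys its own list of conditions. Proposition~\ref{ap} then gives $h(z)=C_n\,P_n^{(\rho)}(z,z_2,\dots,z_{2n};\tau)$ with $C_n$ independent of the first variable; that $C_n$ is independent of $z_2,\dots,z_{2n}$ as well follows by the same reasoning as in the proof of Proposition~\ref{ap}, the ratio of the two sides being, in each of these variables, an elliptic function with no poles (using that the $P_n^{(\sigma)},P_n^{(\rho)}$ are antisymmetric and that the $(z_2,\dots,z_{2n})$-part of $m$ is symmetric). By Lemma~\ref{nvl}, $C_n\neq0$.

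\emph{The constant $C_n$.} This is then determined by induction on $n$. For $n=1$ the pfaffian is its single entry, so $P_1^{(\sigma)}(z_1,z_2;\tau)$ is an explicit product of theta functions; applying $z_1\mapsto z_1+\gamma$ and simplifying with \eqref{tqp} gives $C_1=A_\sigma$. For the inductive step, specialise $z_{2n-1}=z_{2n}+\delta$, where $\delta$ is the value attached to $\sigma$ (equivalently to $\rho$: the two coincide in each pair) in the proof of Lemma~\ref{nvl}, and apply \eqref{psr} to each of $P_n^{(\sigma)}(z_1+\gamma,\dots)$ and $P_n^{(\rho)}(z_1,\dots)$; using the level-$(n-1)$ identity for $P_{n-1}^{(\sigma)}(z_1+\gamma,\dots)$ and cancelling $P_{n-1}^{(\rho)}$ one obtains
$$\frac{C_n}{C_{n-1}}=\frac{m_{n-1}}{m_n}\cdot\frac{a^{(\sigma)}_{2n-1,2n}}{a^{(\rho)}_{2n-1,2n}}\cdot\frac{b^{(\sigma)}_{1,2n-1}\,b^{(\sigma)}_{1,2n}}{b^{(\rho)}_{1,2n-1}\,b^{(\rho)}_{1,2n}},$$
where $a_{ij},b_{ij}$ are the entries of the pfaffian matrix and of its prefactor, the numerators are taken at $z_1\mapsto z_1+\gamma$, and all terms are evaluated at $z_{2n-1}=z_{2n}+\delta$. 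Because $e^{\ti\pi\delta}$ is precisely the root of unity matching the translation, the theta factors in the $b$-quotient telescope and every trace of $z_1$ and $z_{2n}$ disappears; together with the monomial ratio this leaves an elementary power of $p$ times a sign. The $a$-quotient is a ratio of the threefold theta products defining $P_n^{(\sigma)}$ and $P_n^{(\rho)}$, evaluated at an explicit point (a root of unity times a small power of $p$), and reducing it by \eqref{tqp} together with the infinite-product manipulations behind \eqref{ts} (or by \eqref{wqp}) yields $B_\sigma$. Hence $C_n/C_{n-1}=B_\sigma$, so $C_n=A_\sigma B_\sigma^{n-1}$.

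\emph{Main obstacle.} The substantive work is this last step, carried out for each of the twelve labels: one must identify the correct $\delta$ and the root of unity it produces, and then bring the resulting theta quotients into the normalised forms recorded for $A_\sigma$ and $B_\sigma$—this is where \eqref{tqp}, \eqref{ts} and the quintuple product identity \eqref{wqp} all come in. The case analysis in the proportionality step—checking that translation by $\gamma$ really does transform the $\sigma$-conditions into the $\rho$-conditions—is the other point requiring attention, though it is mechanical and can be abridged using Proposition~\ref{pmp}.
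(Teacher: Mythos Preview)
Your proposal is correct and follows essentially the same route as the paper: apply Proposition~\ref{ap} to obtain proportionality, then determine the constant by induction on $n$ via the specialisation~\eqref{psr}. The recursion you write for $C_n/C_{n-1}$ is exactly the one the paper derives, and the paper simplifies the resulting theta quotients using only \eqref{tqp} and \eqref{ts} (no appeal to \eqref{wqp} is needed).

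One point deserves a little more care. For the step showing that $C_n$ is independent of $z_2,\dots,z_{2n}$, the paper does not argue directly that the ratio is a pole-free elliptic function. Instead it interchanges $z_1\leftrightarrow z_2$ and then observes that replacing $z_2$ by $z_2+\gamma$ sends the space $W_\sigma$ (of functions of $z_1$) to $W_\rho$, so a second application of Proposition~\ref{ap} shows $C$ is independent of its first argument, hence by symmetry of all. Your ellipticity argument can also be made to work, but as stated it has a small gap: the denominator $P_n^{(\rho)}(z_1,z_2,\dots)$ vanishes at $z_2=z_1$ while the numerator $P_n^{(\sigma)}(z_1+\gamma,z_2,\dots)$ generically does not, so the ratio has an apparent pole there. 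You need to invoke the already-established independence of $C_n$ on $z_1$ to move $z_1$ away and conclude that $C_n$ is in fact entire in $z_2$ before applying Liouville.
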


\begin{proof}
As all  cases are similar, we again focus on the case $\sigma=1$. 
We first show  that
\begin{equation}\label{php}P_n^{(1)}(z_1+1/2,z_2,\dots,z_{2n};\tau)=CP_n^{(\hat 1)}(z_1,\dots,z_{2n};\tau), \end{equation}
where $C$ is independent of the variables $z_j$. 
Let $W_\sigma$ denote the space of functions satisfying all conditions
of Proposition \ref{ap}. It is easy to check that if
$f\in W_1$, then $z\mapsto f(z+1/2)$ is in $W_{\hat 1}$. Thus, Proposition \ref{ap} implies that
\eqref{php} holds with
$C=C(z_2,z_3,\dots,z_{2n})$ independent of $z_1$. Interchanging $z_1$ and $z_2$ gives
$$ P_n^{(1)}(z_1,z_2+1/2,\dots,z_{2n};\tau)=C(z_1,z_3,\dots,z_{2n})P_n^{(\hat 1)}(z_1,\dots,z_{2n};\tau).$$
We now observe that
replacing $z_2$ by $z_2+1/2$ 
maps $W_1$ to $W_{\hat 1}$. Thus, $C$ is independent of 
its first variable and, by symmetry, on all variables.

To compute $C$, we specialize $z_{2n-1}=z_{2n}+1/6$  in
\eqref{php}. Then we can apply \eqref{psr} to both sides.
The factors $b_{jk}$ with $j\neq 1$ cancel. 
This gives a recursion for $C=C_n$ of the form
$$\frac{C_n}{C_{n-1}}=\frac{a_{2n-1,2n}b_{1,2n-1}b_{1,2n}}{\hat a_{2n-1,2n}\hat b_{1,2n-1}\hat b_{1,2n}}, $$
where the factors in the denominator come from the right-hand side
of \eqref{php}. Explicitly,
$$\frac{a_{2n-1,2n}}{\hat a_{2n-1,2n}}=\frac{\theta(px_{2n-1}^2/x_{2n}^2;p^2)}{\theta(-x_{2n-1}^2/x_{2n}^2;p^2)}\Bigg|_{x_{2n-1}=e^{\ti\pi/6}x_{2n}}=\frac{\theta(-p\om^2;p^2)}{\theta(p\om^2;p^2)}
=\frac{\theta(p;p^6)}{\theta(-p;p^6)}
, $$
where $\omega=e^{2\ti\pi/3}$ and we used \eqref{ts} (also with $p$ replaced by $-p$) in the last step.
The remaining factors simplify as
$$\frac{b_{1,2n-1}b_{1,2n}}{\hat b_{1,2n-1}\hat b_{1,2n}}
=\frac{\tilde x_1^{-6}\theta(-\tilde x_1^6/x_{2n-1}^6,-\tilde x_1^6/x_{2n}^6;p^6)}{x_1^{-6}\theta(-x_1^6/x_{2n-1}^6,- x_1^6/x_{2n}^6;p^6)}\Bigg|_{\tilde x_1=\ti x_1,\ x_{2n-1}=e^{\ti\pi/6}x_{2n}}=-1.
 $$
We conclude that $C_n=B_1C_{n-1}$.
It remains to check the initial value $C_1=-1$, which is a trivial
computation. 
\end{proof}

\subsection{Laurent expansions}

As we are particularly interested in the limit of the pfaffians $P_n^{(\sigma)}$ when all the variables $x_j$
coincide, it is natural to expand each matrix element as a Laurent series
in $x_i/x_j$, convergent near  $x_i/x_j=1$. This is possible  except
 in the cases $\sigma= 1,\,\hat 1$, as the matrix elements then have poles at certain roots of unity.
We will circumvent this problem by subtracting a rational function containing the problematic poles, see \eqref{sri}. 
A similar strategy was used in \cite{rsq} to study  the second pfaffian in \eqref{nupf}, which is related
to sums of squares.

\begin{lemma}\label{ll}
Let 
\begin{subequations}\label{ci}
\begin{align}
C_0&=-\frac{(p^4;p^4)_\infty}{(p^2;p^2)_\infty^2(p^{4/3};p^{4/3})_\infty},&
C_1&=\frac{(p;p)_\infty}{(p^2;p^2)_\infty^2(p^3;p^3)_\infty},\\
C_2&=\frac{(-p;-p)_\infty}{(p^2;p^2)_\infty^2(-p^3;-p^3)_\infty},&
C_3&=\frac{(p;p)_\infty}{(p^2;p^2)_\infty^2(p^{1/3};p^{1/3})_\infty},\\
C_4&=-\frac{(p^4;p^4)_\infty}{(p^2;p^2)_\infty^2(p^{12};p^{12})_\infty},&
C_6&=-\frac{(-p;-p)_\infty}{(p^2;p^2)_\infty^2(-p^{1/3};-p^{1/3})_\infty}.
\end{align}
\end{subequations}
 Then, for $|p^{1/6}|<|x|<|p^{-1/6}|$,
\begin{align*}
\frac{x^{-2}\theta(x^2,-x^2,px^2;p^2)}{\theta(p^{1/3}x^2;p^{2/3})_\infty}&=C_0\sum_{k=-\infty}^\infty
\frac{1-p^{4k/3}}{1+p^{2k}}\,p^{(k-1)/3}x^{2k},\\
 \frac{x^{-1}\theta(x^2,px^2,-px^2;p^2)}{\theta(p^{1/3}x^2;p^{2/3})_\infty}&=C_6\sum_{k=-\infty}^\infty
\frac{1-p^{(4k-2)/3}}{1+p^{2k-1}}\,p^{(k-1)/3}x^{2k-1}.
\end{align*}
For $|p|<|x|<|p^{-1}|$,
\begin{multline}\label{sri}
\frac{x\theta(x^2,-x^2,px^2;p^2)}{\theta(-x^6;p^6)_\infty}\\
=C_1\left(\frac{x^{-2}-x^2}{x^{-3}+x^3}+\sum_{k=1}^\infty\left(\frac{k+1}3\right)
\frac{(-1)^{k}p^{2k-1}(x^{2k-1}-x^{-2k+1})}{1-p^{2k-1}}\right).\end{multline}
For $|p^{1/2}|<|x|<|p^{-1/2}|$, 
\begin{align*}
 \frac{x^{-1}\theta(x^2,px^2,-px^2;p^2)}{\theta(p^3x^6;p^6)_\infty}&=C_2\sum_{k=-\infty}^\infty\left(\frac{k+1}3\right)
\frac{p^{k-1}}{1+p^{2k-1}}\,x^{2k-1}.\\
\frac{x^{-2}\theta(x^2,-x^2,px^2;p^2)}{\theta(p^3x^6;p^6)_\infty}&=C_4\sum_{k=-\infty}^\infty\left(\frac{k}3\right)
\frac{p^{k-1}}{1+p^{2k}}\,x^{2k}.
\end{align*}
Finally, for $|p^{1/3}|<|x|<|p^{-1/3}|$,
$$ \frac{\theta(x^2,-x^2,px^2;p^2)}{x\theta(-x^2;p^{2/3})_\infty}=C_3\sum_{k=-\infty}^\infty
(-1)^{k}\frac{1-p^{(2k-1)/3}}{1-p^{2k-1}}\,p^{(2k-2)/3}x^{2k-1}. $$
\end{lemma}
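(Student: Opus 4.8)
The plan is to prove each of the six Laurent expansions in Lemma~\ref{ll} by the same two-step recipe: first identify the left-hand side as, up to an explicit multiplicative constant, a ratio of theta functions whose series expansion is governed by one of the classical product identities recalled in \S\ref{ps} (Jacobi triple product \eqref{jtp}, the quintuple product identity \eqref{wqp}, and the Kronecker sum \eqref{kr}); second, pin down the overall constant $C_\sigma$ by comparing one convenient Laurent coefficient, e.g.\ the behaviour as $x\to 1$ or a residue. The numerators $x^{-1}\theta(x^2,px^2,-px^2;p^2)$ and $x^{-2}\theta(x^2,-x^2,\pm px^2;p^2)$ are precisely the three functions $f$ of Lemma~\ref{tql}, so for those one can use the quintuple product expansions already written out in the proof of that lemma: $x^{-1}\theta(x^2,px^2,-px^2;p^2)=(p^2;p^2)_\infty^{-1}\sum_n\bigl(\tfrac{n+1}{3}\bigr)p^{n(n-1)/3}x^{2n-1}$, and the other two numerators are obtained from it via $x\mapsto x^3 f(\pm\sqrt p\,x)$, shifting the exponent pattern accordingly.

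Next I would handle the denominators. Each denominator is a single theta function $\theta(\pm p^\lambda x^2;p^{2\lambda})$ or $\theta(-x^6;p^6)$ or $\theta(\pm x^2;p^{2/3})$, and we are dividing a theta-type Laurent series by it; the point is that the quotient itself has a nice Laurent expansion on the stated annulus. The cleanest route is Kronecker's identity \eqref{kr}: writing $1/\theta(a,x;p)\cdot\theta(ax;p)=(p;p)_\infty^{-2}\sum_n x^n/(1-ap^n)$ and specializing $a$ to the appropriate root of unity times a power of $p$ turns the ratio (numerator-theta)/(denominator-theta) into a geometric-type sum with denominators $1\pm p^{2k-1}$, $1+p^{2k}$, etc.\ — exactly the shapes appearing on the right-hand sides. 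Concretely, for the first formula one takes the quotient $\theta(-x^2,px^2;p^2)/\theta(p^{1/3}x^2;p^{2/3})$, notes $\theta(p^{1/3}x^2;p^{2/3})=\theta(p^{1/3}x^2,px^2,p^{5/3}x^2;p^2)$ by splitting the $p^{2/3}$-product into three $p^2$-products, cancels the common $\theta(px^2;p^2)$, and is left with $\theta(-x^2;p^2)/\bigl(\theta(p^{1/3}x^2;p^2)\theta(p^{5/3}x^2;p^2)\bigr)$, which expands via a two-term partial-fraction / Kronecker argument into $\sum_k\frac{1-p^{4k/3}}{1+p^{2k}}\,p^{(k-1)/3}x^{2k}$ up to a constant. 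The exceptional case $\sigma=1,\hat1$ is \eqref{sri}: here the ratio has genuine poles at the primitive sixth roots of unity (zeros of $\theta(-x^6;p^6)$ that are not cancelled), which is why one first subtracts the rational function $\frac{x^{-2}-x^2}{x^{-3}+x^3}$; the difference is then holomorphic on $|p|<|x|<|p|^{-1}$ and its Laurent coefficients are computed by the same quintuple-product-over-theta manipulation, now yielding the sum $\sum_{k\ge1}\bigl(\tfrac{k+1}{3}\bigr)\frac{(-1)^k p^{2k-1}(x^{2k-1}-x^{-2k+1})}{1-p^{2k-1}}$.

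For the constant $C_\sigma$, the efficient choice is to evaluate both sides at a special value. For the formulas with denominator vanishing nowhere on a punctured neighbourhood of $x=1$ (i.e.\ $\sigma=0,2,3,4,6$), I would set $x=1$: the left-hand side becomes an explicit ratio of infinite products in $p$, and the right-hand side becomes $C_\sigma$ times a sum that telescopes or is itself a product via triple/quintuple product evaluation, and matching gives the stated $C_\sigma$ in \eqref{ci}. For \eqref{sri} one instead matches the residue at $x^3=-1$ (or equivalently the coefficient produced by the subtracted rational piece near that pole) against $\theta(-x^6;p^6)$'s simple zero there; this fixes $C_1$ as $(p;p)_\infty/\bigl((p^2;p^2)_\infty^2(p^3;p^3)_\infty\bigr)$.

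The main obstacle is bookkeeping rather than conceptual: keeping the three base $p$'s ($p^2$, $p^{2/3}$, $p^6$, $p^{1/3}$) straight under the triple-product splittings, getting the exponents $p^{(k-1)/3}$, $p^{(k-1)}$, $p^{(2k-2)/3}$ and the parities $x^{2k}$ vs.\ $x^{2k-1}$ exactly right, and correctly tracking the sign/Legendre-symbol factors $\bigl(\tfrac{k+1}{3}\bigr)$, $\bigl(\tfrac{k}{3}\bigr)$, $(-1)^k$ that come out of the quintuple product versus the alternating triple product. I expect the annulus of convergence in each case ($|p|^{1/6}<|x|<|p|^{-1/6}$, etc.) to be read off directly from the location of the nearest pole of the quotient, so no separate analytic argument is needed beyond noting absolute convergence of the displayed series there.
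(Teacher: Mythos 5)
Your overall strategy is viable and rests on the same two pillars as the paper's proof (the quintuple product identity for the numerators and Kronecker's sum \eqref{kr} for producing the Lambert-type denominators $1\pm p^{k}$), but it is organized differently at the decisive step. The paper first proves Lemma \ref{qtl}, which splits the quintuple-product sum by residue classes mod $3$ and resums each class by the triple product identity, thereby writing each \emph{numerator} as a difference of exactly two theta functions whose base ($p^6$ or $p^{2/3}$) matches the denominator. The quotient is then literally a sum of two Kronecker kernels $\theta(ax;p)/\theta(a,x;p)$, each expanded directly by \eqref{kr}, and the constants $C_\sigma$ fall out automatically from the prefactors of Lemma \ref{qtl} and \eqref{kr} — there is no separate normalization step. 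You instead split the \emph{denominator} into three base-$p^2$ thetas, cancel one factor against the numerator, and are left with a quasi-elliptic function having two simple poles per period, for which you invoke "a two-term partial-fraction / Kronecker argument". That can be made to work, but it is genuinely heavier: you must compute the residues at both poles (explicit theta quotients), check that the quasi-periodicity multipliers of your two Kronecker kernels match that of the function, and then simplify the resulting product expression to $C_\sigma$. The paper's detour through Lemma \ref{qtl} is precisely what makes all of this unnecessary.

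Two concrete points need repair. First, your proposed normalization "set $x=1$" fails for every case $\sigma=0,2,3,4,6$: the numerator always contains $\theta(x^2;p^2)$ or $\theta(x^2/x^2;\cdot)$-type factors vanishing at $x=1$, and the right-hand sums also vanish there (the coefficients are antisymmetric under $k\mapsto-k$ and the central term is zero), so you only learn $0=0$. You must either extract the constant from the partial-fraction residues themselves or compare a single nonzero Laurent coefficient. Second, for \eqref{sri} your description of the pole locations is off (the uncancelled zeros of $\theta(-x^6;p^6)$ on $|x|=1$ are the primitive twelfth roots of unity, $x^2=e^{\pm i\pi/3}$, not sixth roots, and not $x^3=-1$), and "the difference is holomorphic, compute its coefficients" hides the real issue: one must verify that the residues of $C_1(x^{-2}-x^2)/(x^{-3}+x^3)$ match those of the left-hand side at all four points. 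The paper sidesteps this by running Kronecker's identity in $|p|<|x|<1$, where the $p=0$ part of the sum converges to the rational function, and then analytically continuing the remaining (everywhere-convergent) series to the full annulus $|p|<|x|<|p^{-1}|$; you should adopt that ordering rather than a residue-matching argument.
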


Tor prove Lemma \ref{ll} we
 will need the following identities.
As they imply \eqref{csf},
one can view Lemma \ref{qtl} as a more explicit version of Lemma~\ref{tql}.

\begin{lemma}
\label{qtl}
The following identities hold:
\begin{multline*}x^{-1}\theta(x^2,px^2,-px^2;p^2)=\frac{(p^6;p^6)_\infty}{(p^2;p^2)_\infty}
\left(x^{-1}\theta(-p^2x^6;p^6)-x\theta(-p^4x^6;p^6)\right),\\
=\frac{(p^{2/3};p^{2/3})_\infty}{(1-\omega)(p^2;p^2)_\infty}
\,x^{-1}\left(\theta(-\omega x^2;p^{2/3})-\omega\theta(-\omega^2x^2;p^{2/3})\right),\end{multline*}
\begin{multline*} x^{-2}\theta(x^2,-x^2,px^2;p^2)=\frac{(p^6;p^6)_\infty}{(p^2;p^2)_\infty}
\left(x^{-2}\theta(-px^6;p^6)-x^2\theta(-p^5x^6;p^6)\right),\\
=\frac{p^{-1/3}(p^{2/3};p^{2/3})_\infty}{(\omega-\omega^2)(p^2;p^2)_\infty}
\left(\theta(-\omega^2p^{1/3} x^2;p^{2/3})-\theta(-\omega p^{1/3}x^2;p^{2/3})\right).
\end{multline*}
\end{lemma}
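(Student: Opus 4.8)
The plan is to prove each of the two displayed identities by verifying that the two sides are theta functions of the same type (same quasi-periodicity in the multiplicative variable $x^2$, or rather in $x$, with the same nome) with matching zeros, so that they agree up to a multiplicative constant; the constant is then pinned down by evaluating at a convenient point or comparing a single Laurent coefficient. Concretely, for the first identity one fixes the nome $p^{1/3}$ (so that $\theta(-p^2x^6;p^6)$, $\theta(-p^4x^6;p^6)$, $\theta(-\omega x^2;p^{2/3})$ are all theta functions with respect to multiplication of $x$ by the same lattice), checks that $x^{-1}\theta(x^2,px^2,-px^2;p^2)$ and each right-hand side transform the same way under $x\mapsto p^{1/3}x$ (using the quasi-periodicity \eqref{tqp} repeatedly), and then counts zeros in a fundamental domain: both sides should have the correct number of zeros at the cube roots of $\pm$ powers of $p$ forced by the product $\theta(x^2,px^2,-px^2;p^2)$.

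\textbf{Key steps, in order.} First, I would record the quasi-periodicity of the left-hand side: from \eqref{tqp}, replacing $x^2\mapsto p^{2/3}x^2$ sends $x^{-1}\theta(x^2,px^2,-px^2;p^2)$ to an explicit monomial-in-$x$ (and power-of-$p$) multiple of itself; this is exactly the content of \eqref{csb} of Lemma~\ref{tql} after the substitution, so the left-hand side lies in a three-dimensional space of quasi-theta functions of a fixed type for the nome $p^{2/3}$. Second, I would observe that $x^{-1}\theta(-p^2x^6;p^6)$ and $x\theta(-p^4x^6;p^6)$ (and likewise $x^{-1}\theta(-\omega x^2;p^{2/3})$, $x^{-1}\omega\theta(-\omega^2x^2;p^{2/3})$) each lie in that same space, by a direct check using \eqref{tqp}; moreover they are linearly independent. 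Third — and this is the crux — I would identify which particular element of this space the left-hand side is by locating its zeros: $x^{-1}\theta(x^2,px^2,-px^2;p^2)$ vanishes when $x^2\in\{1,p^{-1},-p^{-1}\}$ times the lattice generated by $p^2$; one checks that the stated right-hand combination vanishes at the same points (the cancellation $\theta(-p^2x^6;p^6)=x^2\theta(-p^4x^6;p^6)$ at $x^6=-p^{-2}$, $x^6=-p^{-4}$, etc.), so the quotient is a lattice-invariant entire function, hence constant by Liouville. Fourth, I would evaluate the constant: take the Laurent coefficient of $x^{-1}$ on both sides via Jacobi's triple product \eqref{jtp}, or specialize $x$ to a point where everything is explicit, to get the prefactor $(p^6;p^6)_\infty/(p^2;p^2)_\infty$ (resp.\ $(p^{2/3};p^{2/3})_\infty/((1-\omega)(p^2;p^2)_\infty)$). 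The second identity of the lemma is handled the same way, using that $f\mapsto x^3f(\pm\sqrt p\,x)$ preserves the relevant space, as already noted at the end of the proof of Lemma~\ref{tql}; alternatively it follows from the first by that substitution, which converts $x^{-1}\theta(x^2,px^2,-px^2;p^2)$ into $x^{-2}\theta(x^2,-x^2,px^2;p^2)$ up to an explicit factor.

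\textbf{Main obstacle.} The genuinely delicate point is the zero-counting and the bookkeeping of multipliers: one must be careful that the combinations such as $x^{-1}\theta(-p^2x^6;p^6)-x\theta(-p^4x^6;p^6)$ really do pick up the extra zeros at the sixth roots needed to match $x^{-1}\theta(x^2,px^2,-px^2;p^2)$ without introducing spurious ones, and that the $\omega$-coefficients in the second expression of each display are exactly those forced by compatibility with \eqref{csa} (the $\omega\mapsto\omega$ shift symmetry). In practice the cleanest route may be to avoid zero-counting entirely and instead derive both right-hand sides directly from the quintuple product identity \eqref{wqp}: expanding $x^{-1}\theta(x^2,px^2,-px^2;p^2)=(p^2;p^2)_\infty^{-1}\sum_n((n+1)/3)p^{n(n-1)/3}x^{2n-1}$ and splitting the sum according to $n\bmod 3$ reproduces the two $\theta(-p^{2j}x^6;p^6)$ pieces, while splitting according to the value of the summand's sign recovers the $\theta(-\omega^j x^2;p^{2/3})$ form; this turns the whole lemma into a reindexing exercise on \eqref{wqp} and \eqref{jtp}, which I expect to be routine once the congruence classes are set up correctly.
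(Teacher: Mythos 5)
Your proposal is correct, but your primary route differs from the paper's. The paper proves the lemma in three lines by pure series manipulation: it starts from the quintuple product identity \eqref{wqp} in the form $x^{-1}\theta(x^2,px^2,-px^2;p^2)=(p^2;p^2)_\infty^{-1}\sum_n\left(\frac{n+1}{3}\right)p^{n(n-1)/3}x^{2n-1}$, splits the sum over $n\equiv 0$ and $n\equiv 1\pmod 3$ and re-sums each piece by Jacobi's triple product \eqref{jtp} to get the first expression, then writes $\left(\frac{n+1}{3}\right)=(\omega^n-\omega^{1+2n})/(1-\omega)$ and re-sums again to get the second, with the substitution $x\mapsto\sqrt p\,x$ giving the second display. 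This is exactly the "cleanest route" you sketch in your final paragraph, so you have in effect found the paper's proof as your fallback. Your main proposed argument — match the multiplier under $x\mapsto px$ (both sides are quasi-theta functions with six zeros per fundamental annulus), verify the right-hand combination vanishes at the six zeros $x^2\in\{1,\pm p^{\pm1}\}\cdot p^{2\mathbb Z}$ of the left side, conclude by Liouville, and fix the constant from one Laurent coefficient — is a valid alternative and does go through (e.g.\ at $x=1$ the combination is $\theta(-p^2;p^6)-\theta(-p^4;p^6)=0$ by the inversion in \eqref{tqp}). Two caveats: your parenthetical locating "the cancellation" at $x^6=-p^{-2}$, $x^6=-p^{-4}$ points at the zeros of the individual terms rather than of their difference — the relevant vanishing is at the zeros of the left-hand side listed above — and the zero-matching route costs noticeably more bookkeeping than the reindexing argument for no gain here, since the constants come out automatically from the series. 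What your route buys is independence from the quintuple product identity; what the paper's buys is brevity and the prefactors for free.
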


\begin{proof}
As in the proof of Lemma \ref{tql}, we start from the quintuple product identity in the form
$$x^{-1}\theta(x^2,px^2,-px^2;p^2)=\frac 1{(p^2;p^2)_\infty}\sum_{n=-\infty}^\infty\left(\frac{n+1}3\right)p^{\frac{n(n-1)}3}x^{2n-1}. $$
Splitting the sum into  the terms corresponding to $n\equiv 0$ and $n\equiv 1\ \operatorname{mod}\ 3$, each group of terms is summed by the triple product identity \eqref{jtp}. This gives the first identity.
Writing instead
$$\left(\frac{n+1}3\right)=\frac{\omega^n-\omega^{1+2n}}{1-\omega},$$
we obtain in the same way the second identity. Replacing $x$ by $\sqrt px$ 
 gives 
the remaining results.
\end{proof}

\begin{proof}[Proof of \emph{Lemma \ref{ll}}]
We focus on the identity \eqref{sri}. By Lemma \ref{qtl} and \eqref{kr},
\begin{align*}
\frac{x\theta(x^2,-x^2,px^2;p^2)}{\theta(-x^6;p^6)_\infty}
&=\frac{(p^6;p^6)_\infty}{(p^2;p^2)_\infty}\left(\frac{x\theta(-px^6;p^6)}{\theta(-x^6;p^6)}-\frac{x^5\theta(-p^5x^6;p^6)}{\theta(-x^6;p^6)}\right)\\
&=C\left(\sum_{n=-\infty}^\infty\frac{(-1)^nx^{6n+1}}{1-p^{6n+1}}-\sum_{n=-\infty}^\infty\frac{(-1)^nx^{6n+5}}{1-p^{6n+5}}\right)\\
&=C\sum_{k=-\infty}^\infty\left(\frac{k+1}3\right)
\frac{(-1)^{k}x^{2k-1}}{1-p^{2k-1}},\end{align*}
where $|p|<|x|<1$
and
\begin{align*}C&=\frac{\theta(p;p^6)}{(p^2;p^2)_\infty(p^6;p^6)_\infty}=\frac{(p,p^5;p^6)}{(p^2;p^2)_\infty(p^6;p^6)_\infty}=\frac{(p;p^2)_\infty}{(p^2;p^2)_\infty(p^3;p^3)_\infty}\\
&=\frac{(p;p)_\infty}{(p^2;p^2)_\infty^2(p^3;p^3)_\infty}=C_1. \end{align*}
 The case $p=0$ is 
$$\frac{x^{-2}-x^2}{x^{-3}+x^3}=\sum_{k=1}^\infty\left(\frac{k+1}3\right)
(-1)^{k}x^{2k-1},\qquad |x|<1. $$
Combining these two summations leads to the desired identity in $|p|<|x|<1$. 
As the resulting series is convergent  for $|p|< |x|<|p^{-1}|$, it holds in the larger annulus by analytic continuation.

The remaining expansions follow even more easily from Lemma \ref{qtl} and \eqref{kr}.
\end{proof}

\subsection{Schur polynomial expansions}

We will now derive expansions of the 
functions $P_n^{(\sigma)}$, with $\sigma \neq 1,\hat 1$, into Schur polynomials; 
the other two cases are considered in \S \ref{sqs}.
We will write
$$\chi_{\mu}(x)=\chi_{\mu_1,\dots,\mu_m}(x_1,\dots,x_m)=\det_{1\leq i,j\leq m}(x_i^{\mu_j}),$$
where $\mu_j$ are distinct integers. After reordering the columns
and factoring out a power of $X=x_1\dotsm x_m$, one may assume that
$\mu_j=\lambda_j+m-j$, where $\lambda_1\geq\dots\geq \lambda_m\geq 0$
is a partition. Then,
$$\chi_\mu(x)=\Delta(x)s_\lambda(x) $$
where $s_\lambda$ is a Schur polynomial and $\Delta$ is given in \eqref{vdn}.

Consider in general a Laurent series
$$\phi(x)=\sum_{k=-\infty}^\infty c_k x^k, $$
convergent in an annulus containing $|x|=1$. Assuming that $\phi(1/x)=-\phi(x)$, or equivalently $c_{-k}=-c_k$,
we may consider the pfaffian
\begin{multline*}
\pfaff_{1\leq i,j\leq 2n}(\phi(x_i/x_j))\\
\begin{split}&=\frac 1{2^n n!}\sum_{\sigma\in S_{2n}}\sgn(\sigma)\sum_{k_1,\dots,k_n\in\mathbb  Z}
c_{k_1}\dotsm c_{k_n}\left(\frac{x_{\sigma(1)}}{x_{\sigma(2)}}\right)^{k_1}\dotsm \left(\frac{x_{\sigma(2n-1)}}{x_{\sigma(2n)}}\right)^{k_n}\\
&=\frac 1{2^n n!} \sum_{k_1,\dots,k_n\in\mathbb  Z}c_{k_1}\dotsm c_{k_n}\chi_{k_1,-k_1,\dots,k_n,-k_n}(x_1,\dots,x_{2n})\\
&=\sum_{1\leq k_1<k_2<\dots<k_n}c_{k_1}\dotsm c_{k_n}\chi_{k_n,\dots,k_1,-k_1,\dots,-k_n}(x_1,\dots,x_{2n}),
\end{split}\end{multline*}
where we used in the last step that the summand is symmetric under permutations of the summation variables as well as reflections $k_j\mapsto -k_j$.
In particular, if $\phi$  is even we may write $\phi(x)=\sum_{k=-\infty}^\infty c_k x^{2k}$ and
$$
\pfaff_{1\leq i,j\leq 2n}(\phi(x_i/x_j))=
\sum_{1\leq k_1<k_2<\dots<k_n}c_{k_1}\dotsm c_{k_n}\chi_{k_n,\dots,k_1,-k_1,\dots,-k_n}(x_1^2,\dots,x_{2n}^2).$$
Similarly, if  $\phi(x)=\sum_{k=-\infty}^\infty c_k x^{2k-1}$ we get
$$
\pfaff_{1\leq i,j\leq 2n}(\phi(x_i/x_j))=X\sum_{1\leq k_1<k_2<\dots<k_n}{c_{k_1}\dotsm c_{k_n}}\chi_{k_n-1,\dots,k_1-1,-k_1,\dots,-k_n}(x_1^2,\dots,x_{2n}^2).$$
Specializing to the Laurent series of Lemma \ref{ll} gives the following 
 expansions of the functions $P_n^{(\sigma)}$, where $\sigma=0,2,3,4,6$. 
Corresponding expansions for $P_n^{\hat\sigma}$ follow using \eqref{phs}.

\begin{proposition}\label{spp}
In a neighbourhood of $|x_1|=\dots=|x_{2n}|=1$, we have the expansions
\begin{align*}
P_n^{(0)}&=C_0^n\prod_{1\leq i<j\leq 2n}\theta(p^{1/3}x_i^2/x_j^2;p^{2/3})\sum_{1\leq k_1<\dots<k_n}\prod_{j=1}^n\frac{1-p^{4k_j/3}}{1+p^{2k_j}}\,p^{(k_j-1)/3}\\
&\quad\times\chi_{k_n,\dots,k_1,-k_1,\dots,-k_n}(x_1^2,\dots,x_{2n}^2),\\
P_n^{(2)}&={C_2^n}{X}\prod_{1\leq i<j\leq 2n}\theta(p^{3}x_i^6/x_j^6;p^{6})\sum_{1\leq k_1<\dots<k_n}\prod_{j=1}^n\left(\frac{k_j+1}3\right)\frac{p^{k_j-1}}{1+p^{2k_j-1}}\\
&\quad\times\chi_{k_n-1,\dots,k_1-1,-k_1,\dots,-k_n}(x_1^2,\dots,x_{2n}^2),\\
P_n^{(3)}&={C_3^n}X\prod_{1\leq i<j\leq 2n}\frac{x_j}{x_i}\theta(-x_i^2/x_j^2;p^{2/3})\sum_{1\leq k_1<\dots<k_n}\prod_{j=1}^n(-1)^{k_j}\frac{1-p^{(2k_j-1)/3}}{1-p^{2k_j-1}}\,p^{(2k_j-2)/3}\\
&\quad\times\chi_{k_n-1,\dots,k_1-1,-k_1,\dots,-k_n}(x_1^2,\dots,x_{2n}^2),\\
P_n^{(4)}&=C_4^n\prod_{1\leq i<j\leq 2n}\theta(p^{3}x_i^6/x_j^6;p^{6})\sum_{1\leq k_1<\dots<k_n}\prod_{j=1}^n\left(\frac{k_j}3\right)\frac{p^{k_j-1}}{1+p^{2k_j}}\\
&\quad\times\chi_{k_n,\dots,k_1,-k_1,\dots,-k_n}(x_1^2,\dots,x_{2n}^2),\\
P_n^{(6)}&={C_6^n}X\prod_{1\leq i<j\leq 2n}\theta(p^{1/3}x_i^2/x_j^2;p^{2/3})\sum_{1\leq k_1<\dots<k_n}\prod_{j=1}^n\frac{1-p^{(4k_j-2)/3}}{1+p^{2k_j-1}}\,p^{(k_j-1)/3}\\
&\quad\times\chi_{k_n-1,\dots,k_1-1,-k_1,\dots,-k_n}(x_1^2,\dots,x_{2n}^2),
\end{align*}
where the constants $C_\sigma$ are given in \eqref{ci}.
\end{proposition}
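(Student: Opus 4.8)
The plan is to read off all five identities from the general expansion of $\pfaff_{1\le i,j\le 2n}(\phi(x_i/x_j))$ derived just above the statement — valid for any Laurent series $\phi$ with $\phi(1/x)=-\phi(x)$ convergent near $|x|=1$ — by substituting into it the explicit matrix entries of $P_n^{(\sigma)}$, $\sigma\in\{0,2,3,4,6\}$.

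First I would write each $P_n^{(\sigma)}$ in the form $\prod_{1\le i<j\le 2n}b_{ij}\,\pfaff_{1\le i,j\le 2n}(\phi_\sigma(x_i/x_j))$, where the factors $b_{ij}$ are exactly the $\theta$-products standing in front of the pfaffian in the definition of $P_n^{(\sigma)}$ (for $\sigma=3$ one also absorbs the monomial $\prod_{i<j}(x_j/x_i)$), and $\phi_\sigma$ is the one-variable function whose Laurent expansion is recorded in Lemma \ref{ll}; for instance $\phi_0(y)=y^{-2}\theta(y^2,-y^2,py^2;p^2)/\theta(p^{1/3}y^2;p^{2/3})$, and similarly for $\sigma=2,3,4,6$. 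The quasi-periodicity \eqref{tqp} gives $\phi_\sigma(1/y)=-\phi_\sigma(y)$ — this is just the skew-symmetry of the matrix underlying the pfaffian — and Lemma \ref{ll} shows that $\phi_\sigma$ is even, say $\phi_\sigma(y)=\sum_k c_k y^{2k}$, when $\sigma=0,4$, and of the form $\phi_\sigma(y)=\sum_k c_k y^{2k-1}$ when $\sigma=2,3,6$.

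Next I would feed these series into the general expansion. In the even case it gives at once $\pfaff_{1\le i,j\le 2n}(\phi_\sigma(x_i/x_j))=\sum_{1\le k_1<\dots<k_n}c_{k_1}\cdots c_{k_n}\,\chi_{k_n,\dots,k_1,-k_1,\dots,-k_n}(x_1^2,\dots,x_{2n}^2)$, while in the $y^{2k-1}$ case the same formula produces the extra prefactor $X=x_1\cdots x_{2n}$ and lowers every positive label by one, yielding $\chi_{k_n-1,\dots,k_1-1,-k_1,\dots,-k_n}(x_1^2,\dots,x_{2n}^2)$. Reading off $c_k$ from Lemma \ref{ll} — each $c_k$ carries the appropriate constant $C_\sigma$ of \eqref{ci} as a factor, so that $\prod_{j=1}^n c_{k_j}$ contributes $C_\sigma^n$ times the product over $j$ displayed in the statement — and then multiplying back by $\prod_{i<j}b_{ij}$ reproduces the five asserted formulas.

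The one point requiring care is the legitimacy of the rearrangements. The expansions of Lemma \ref{ll} hold in concentric annuli about $|y|=1$, with inner and outer radii $|p^{1/6}|^{\pm1}$ for $\sigma=0,6$, $|p^{1/3}|^{\pm1}$ for $\sigma=3$, and $|p^{1/2}|^{\pm1}$ for $\sigma=2,4$; taking all $|x_j|$ close enough to $1$ puts every ratio $x_i/x_j$ inside the relevant annulus, so the multiple series obtained after expanding all matrix entries converges absolutely and the collapse of the sum over $S_{2n}$ to a sum over strictly increasing tuples $k_1<\dots<k_n$ (exactly as in the computation preceding the statement) is valid. No ingredient beyond Lemma \ref{ll} and that computation is needed; the companion expansions for $P_n^{(\hat\sigma)}$, which are not part of the present statement, follow afterwards from the substitution $\tau\mapsto\tau+3$ of \eqref{phs}.
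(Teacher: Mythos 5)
Your proposal is correct and follows essentially the same route as the paper: the paper derives exactly the general expansion of $\pfaff(\phi(x_i/x_j))$ into the functions $\chi_{k_n,\dots,k_1,-k_1,\dots,-k_n}$ (with the even/odd dichotomy producing the extra factor $X$ and the shift $k_j\mapsto k_j-1$) in the paragraphs preceding the statement, and then simply specializes to the Laurent series of Lemma \ref{ll}. Your identification of which $\sigma$ give even versus odd series, the bookkeeping of the constants $C_\sigma^n$, and the convergence remark about the annuli all match the intended argument.
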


\subsection{Symmetric function expansions of $P_n^{(1)}$}\label{sqs}

To obtain an analogue of Proposition \ref{spp} for $P_n^{(1)}$, the
 following  result is useful.

\begin{lemma}\label{pel}
Let $\Lambda$ be  a countable and totally ordered set,
let $A$ be a skew-symmetric $(2n\times 2n)$-matrix and let $B^{(k)}$, $C^{(k)}$, $k\in\Lambda$, be $2n$-dimensional
column vectors.
  For $k_1,\dots,k_m\in\Lambda$, let $X_{k_1,\dots,k_m}$ be the $(2n\times 2m)$-matrix
$$X_{k_1,\dots,k_m}=\left(B^{(k_m)}\dotsm B^{(k_1)} C^{(k_1)}\dotsm C^{(k_m)}\right). $$
Then, 
\begin{multline}\label{pli}
\pfaff_{1\leq i,j\leq 2n}\left(A_{ij}+\sum_{k\in\Lambda}\left(B_i^{(k)}C_j^{(k)}-B_j^{(k)}C_i^{(k)}\right)\right)\\
=\sum_{m=0}^n \,\sum_{\substack{k_1,\dots,k_m\in\Lambda\\k_1<\dots<k_m}}\pfaff\left(\begin{matrix} A & X_{k_1,\dots,k_m} \\
-X_{k_1,\dots,k_m}^T & 0\end{matrix}\right),
\end{multline}
where we assume that the sums converge absolutely if $\Lambda$ is infinite.
\end{lemma}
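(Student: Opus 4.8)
The plan is to reduce the pfaffian on the left of \eqref{pli} to a sum over pairings and then sort each pairing according to how many pairs involve the rank-one perturbations. Concretely, I would write the matrix as $M_{ij} = A_{ij} + \sum_{k\in\Lambda} N^{(k)}_{ij}$, where $N^{(k)}_{ij} = B_i^{(k)}C_j^{(k)} - B_j^{(k)}C_i^{(k)}$, and expand the pfaffian as $\pfaff(M) = \sum_{\pi}\sgn(\pi)\prod_{\{a,b\}\in\pi} M_{ab}$, the sum running over the $(2n-1)!!$ perfect matchings $\pi$ of $[1,2n]$. Multiplying out each factor $M_{ab} = A_{ab} + \sum_k N^{(k)}_{ab}$, a pairing contributes a sum of $2^n$-many products in which each pair $\{a,b\}$ is labelled either ``$A$'' (contributing $A_{ab}$) or ``$k$'' for some $k\in\Lambda$ (contributing $N^{(k)}_{ab}$).

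Next I would organize these contributions by the multiset of labels $k$ that appear. Fix an integer $m$ with $0\le m\le n$ and fix $k_1 < \dots < k_m$ in $\Lambda$. I claim that the total contribution of all terms in which the labels used (with multiplicity, among the $N$-type pairs) are exactly $\{k_1,\dots,k_m\}$, each appearing once, equals $\pfaff\!\left(\begin{smallmatrix} A & X_{k_1,\dots,k_m}\\ -X_{k_1,\dots,k_m}^T & 0\end{smallmatrix}\right)$. The key algebraic input is the standard identity $N^{(k)}_{ab} = B_a^{(k)}C_b^{(k)} - B_b^{(k)}C_a^{(k)} = \pfaff\!\left(\begin{smallmatrix} 0 & B_a^{(k)} & C_a^{(k)}\\ -B_a^{(k)} & 0 & 0\\ -C_a^{(k)} & 0 & 0\end{smallmatrix}\right)$ restricted appropriately — more precisely, that the bordered matrix $\left(\begin{smallmatrix} A & X\\ -X^T & 0\end{smallmatrix}\right)$ with $X$ having columns $B^{(k_m)},\dots,B^{(k_1)},C^{(k_1)},\dots,C^{(k_m)}$ has a pfaffian which, when expanded over pairings of the enlarged index set $[1,2n+2m]$, forces each of the $2m$ extra rows/columns to be matched into the original block (since the lower-right block is zero), and the sign bookkeeping of pairing extra-index $2n+j$ (carrying $B^{(k_{m+1-j})}$) with extra-index $2n+2m+1-j'$ (carrying $C^{(k_{j'})}$) reproduces exactly the antisymmetrization $B_aC_b - B_bC_a$ together with the right global sign. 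The ordering convention $k_1<\dots<k_m$ and the specific column order in $X_{k_1,\dots,k_m}$ (the $B$'s reversed, then the $C$'s) are chosen precisely so that this sign works out without extra factors; I would verify this by comparing the pairing expansion of the bordered pfaffian with the selected terms on the left, checking that the permutation relating the two index orderings is even.

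Summing over $m$ and over all increasing tuples $k_1<\dots<k_m$ then yields \eqref{pli}. The convergence hypothesis when $\Lambda$ is infinite is used to justify interchanging the (now absolutely convergent) sum over label-multisets with the finite pairing sum, and to guarantee each bordered pfaffian on the right is a well-defined entry of an absolutely convergent series. The main obstacle I anticipate is purely the sign bookkeeping: one must pin down the $\sgn$ of the permutation that interleaves the $2m$ bordering indices among the $2n$ original ones in the pfaffian expansion of the bordered matrix, and confirm it matches the signs produced by antisymmetrizing each rank-one term $B_i^{(k)}C_j^{(k)} - B_j^{(k)}C_i^{(k)}$ inside the original pfaffian expansion. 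A clean way to handle this is to first treat the case $m=n$, $\Lambda=\{1,\dots,n\}$ and $A=0$, where the left side is $\pfaff(\sum_k N^{(k)})$ and the right side is a single bordered pfaffian, verify the identity there by a direct Laplace-type expansion, and then bootstrap to general $A$ and general $m$ by multilinearity of the pfaffian in the entries. Everything else — splitting $M_{ab}$, grouping by labels — is routine.
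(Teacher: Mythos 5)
Your overall strategy is the same as the paper's: expand the pfaffian over pairings, distribute $A_{ij}+\sum_k N^{(k)}_{ij}$ with $N^{(k)}_{ij}=B_i^{(k)}C_j^{(k)}-B_j^{(k)}C_i^{(k)}$, group the resulting monomials by the labels used, and match each group against the pairing expansion of the bordered pfaffian, whose zero lower-right block forces every bordering index to be matched into the original block. As written, however, your argument does not account for all terms on the left-hand side. When you organize the contributions by the multiset of labels, multisets in which some label occurs twice or more do arise, and you only ever treat the square-free ones; since the right-hand side of \eqref{pli} runs over strictly increasing tuples, you must prove that every repeated-label contribution vanishes. This is true but requires an argument: grouping the terms by the induced partition of $[1,2n]$ into the block $U$ matched through $A$ and the blocks $T_k$ matched through $N^{(k)}$, the sum over compatible pairings factors as a sign times $\pfaff(A|_U)\prod_k\pfaff(N^{(k)}|_{T_k})$, and $\pfaff(N^{(k)}|_{T_k})=0$ whenever $|T_k|\geq 4$ because $N^{(k)}$ has rank at most $2$. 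That is what forces each label to be used for exactly one pair. The paper's proof is also terse on this point, but without it the identity does not follow from what you wrote.

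Second, you are right to single out the sign bookkeeping as the real issue, but you should not defer it: it is not a formality, and your proposed consistency check ($m=n$, $A=0$) cannot detect it because the discrepancy depends on the parity of $m$. Already for $n=m=1$ the bordered matrix has $X=(B\ C)$ and its pfaffian is $M_{12}M_{34}-M_{13}M_{24}+M_{14}M_{23}=-(B_1C_2-B_2C_1)=-N_{12}$, whereas the linear-in-$N$ term on the left of \eqref{pli} is $+N_{12}$; the interleaving of the $2m$ bordering indices contributes an $m$-dependent sign, apparently $(-1)^m$ relative to \eqref{pli} as displayed (compare the factor $(-1)^{k_i-1}$ in Proposition \ref{sqe} against the $(-1)^{k_i}$ in \eqref{sri}, which silently supplies exactly such a factor). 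So you must actually compute the sign of the permutation that interleaves the bordering indices, not merely ``confirm'' it. Your fallback of verifying the case $A=0$, $m=n$ and then ``bootstrapping by multilinearity'' is too weak for this: the pfaffian is not additive in the matrix, and recovering general $A$ and $m<n$ requires the expansion of $\pfaff(A+N)$ over complementary principal submatrices together with a Laplace-type expansion of the bordered pfaffian along its last $2m$ rows --- at which point you are redoing the direct sign computation anyway.
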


\begin{proof}
Let $P$ denote the left-hand side of \eqref{pli}. By definition,
\begin{align*}
P&=\frac 1{2^n n!}\sum_{\sigma\in S_{2n}}\sgn(\sigma)\prod_{i=1}^n\left(A_{\sigma(2i-1)\sigma(2i)}+\sum_{k\in\Lambda}
\left(B_{\sigma(2i-1)}^{(k)}C_{\sigma(2i)}^{(k)}-B_{\sigma(2i)}^{(k)}C_{\sigma(2i-1)}^{(k)}\right)\right)\\
&=\frac 1{2^n n!}\sum_{S\subseteq[1,n]}\sum_{\sigma\in S_{2n}}\sgn(\sigma)\prod_{i\notin S}A_{\sigma(2i-1)\sigma(2i)}\\
&\hspace*{6cm}\times\prod_{i\in S}\sum_{k\in\Lambda}
\left(B_{\sigma(2i-1)}^{(k)}C_{\sigma(2i)}^{(k)}-B_{\sigma(2i)}^{(k)}C_{\sigma(2i-1)}^{(k)}\right).
\end{align*}
Writing $S=\{s_1<\dots<s_m\}$, $S^c=\{t_1<\dots<t_{n-m}\}$, let $\rho$ be the even permutation
$$\rho(1,\dots,2n)=(2t_1-1,2t_1,\dots,2t_{n-m}-1,2t_{n-m},2s_1-1,2s_1,\dots,2s_m-1,2s_m). $$
 Replacing $\sigma$ by $\sigma\rho^{-1}$ in the sum gives
\begin{align*}
P&=\frac 1{2^n n!}\sum_{m=0}^n\binom nm\sum_{\sigma\in S_{2n}}\sgn(\sigma)\prod_{i=1}^{n-m}A_{\sigma(2i-1)\sigma(2i)}\\
&\quad\times\prod_{i=n-m+1}^n\sum_{k\in\Lambda}
\left(B_{\sigma(2i-1)}^{(k)}C_{\sigma(2i)}^{(k)}-B_{\sigma(2i)}^{(k)}C_{\sigma(2i-1)}^{(k)}\right)\\
&=\frac 1{2^n n!}\sum_{m=0}^n\binom nm\sum_{k_1,\dots,k_m\in\Lambda}\,\sum_{\sigma\in S_{2n}}\sgn(\sigma)\prod_{i=1}^{n-m}A_{\sigma(2i-1)\sigma(2i)}\\
&\quad\times\prod_{i=1}^m
\left(B_{\sigma(2i+2n-2m-1)}^{(k_i)}C_{\sigma(2i+2n-2m)}^{(k_i)}-B_{\sigma(2i+2n-2m)}^{(k_i)}C_{\sigma(2i+2n-2m-1)}^{(k_i)}\right).
\end{align*}
We observe that the summand is symmetric in the indices $k_j$.
Moreover, expanding the differences lead to terms that can be identified by a change of $\sigma$;  the minus signs are
then incorporated in the factor $\sgn(\sigma)$. It follows that
$$ P=\sum_{m=0}^n\sum_{k_1<\dots<k_m}P_{k_1,\dots,k_m},$$
where
\begin{align*}
P_{k_1,\dots,k_m}&=\frac 1{2^{n-m}(n-m)!}\sum_{\sigma\in S_{2n}}\sgn(\sigma)\prod_{i=1}^{n-m}A_{\sigma(2i-1)\sigma(2i)}\\
&\quad\times\prod_{i=1}^m B_{\sigma(2i+2n-2m-1)}^{(k_i)}C_{\sigma(2i+2n-2m)}^{(k_i)}.\end{align*}

For each $\sigma\in S_{2n}$, consider the pairing on $[1,2n+2m]$ defined by
$\sigma(2i-1)\sim\sigma(2i)$ for $1\leq i\leq n-m$ and $\sigma(i)\sim i+2m$ for
$2n-2m+1\leq i\leq 2n$. 
We observe  that no two elements in $[2n+1,2n+2m]$
are paired together. Moreover, there are $2^{n-m}(n-m)!$ permutations corresponding to each such
pairing. This allows us to identify $P_{k_1,\dots,k_m}$ with
the pfaffian on the right-hand side of \eqref{pli},
but with $X_{k_1,\dots,k_m}$ replaced by
$$\left(B^{(k_1)}  C^{(k_1)}\dotsm B^{(k_m)} C^{(k_m)}\right).$$
As  this differs from $X_{k_1,\dots,k_m}$ by an even permutation of the columns, \eqref{pli} holds.
\end{proof}

For $m+n$ even and $\lambda\in\mathbb Z^m$, we introduce the symmetric Laurent polynomials
$$T_{\lambda}(x_1,\dots,x_{n})=\prod_{1\leq i<j\leq n}\frac{x_j^3+x_i^3}{x_j^2-x_i^2}\,\pfaff\left(\begin{matrix}A & B\\ -B^T & 0\end{matrix}\right), $$
where $A$ is the $n\times n$-matrix with matrix elements $(x_j^2-x_i^2)/(x_j^3+x_i^3)$ and $B$ the $n\times m$ matrix with elements
$x_i^{\lambda_j}$. Formally, these are very similar to Schur $Q$-polynomials (or $t=-1$ Hall-Littlewood polynomials), which 
are given by a similar identity, where $\lambda$ is a strict partition and $(x_j^2-x_i^2)/(x_j^3+x_i^3)$ is replaced by $(x_j-x_i)/(x_j+x_i)$ \cite{n}.
Substituting the expansion \eqref{sri} into the definition of $P_n^{(1)}$ and using Lemma \ref{pel} gives the following result.

\begin{proposition}\label{sqe}
Assuming $|p|<|x_i/x_j|<p^{-1}$ for $1\leq i,j\leq 2n$, we have
\begin{align*}
P_n^{(1)}&=(-1)^nC_1^n X^{4-6n}\Delta(x^4)\prod_{1\leq i<j\leq 2n}(-p^6x_i^6/x_j^6,-p^6x_j^6/x_i^6;p^6)_\infty\\
&\quad\times\sum_{m=0}^n\,\sum_{1\leq k_1<\dots<k_m}\prod_{i=1}^m\left(\frac{k_i+1}3\right)\frac{(-1)^{k_i-1}p^{2k_i-1}}{1-p^{2k_i-1}}\\
&\quad\times T_{k_m-1,\dots,k_1-1,-k_1,\dots,-k_m}(x_1^2,\dots,x_{2n}^2).
\end{align*}

\end{proposition}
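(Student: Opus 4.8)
The plan is to substitute the Laurent expansion \eqref{sri} into the definition of $P_n^{(1)}$ and then apply Lemma~\ref{pel}. First I would write, for each matrix entry $a_{ij}/b_{ij}$ of $P_n^{(1)}$, the expression
$$\frac{x_i\,\theta(x_i^2/x_j^2,-x_i^2/x_j^2,px_i^2/x_j^2;p^2)}{x_j\,\theta(-x_i^6/x_j^6;p^6)}=C_1\left(\frac{(x_j/x_i)^2-(x_i/x_j)^2}{(x_j/x_i)^3+(x_i/x_j)^3}+\sum_{k=1}^\infty\left(\frac{k+1}3\right)\frac{(-1)^kp^{2k-1}\bigl((x_i/x_j)^{2k-1}-(x_j/x_i)^{2k-1}\bigr)}{1-p^{2k-1}}\right),$$
valid in the stated annulus $|p|<|x_i/x_j|<p^{-1}$. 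The first (rational) term is exactly $C_1\phi(x_i/x_j)$ where $\phi(x)=(x^{-2}-x^2)/(x^{-3}+x^3)$ is the $p=0$ matrix entry, so it contributes $C_1$ times the skew-symmetric matrix $A=(\phi(x_i/x_j))$; and each summand in the series is of rank-one skew-symmetric type $B_i^{(k)}C_j^{(k)}-B_j^{(k)}C_i^{(k)}$ with $B_i^{(k)}=x_i^{2k-1}$, $C_j^{(k)}=x_j^{-(2k-1)}$ (up to the scalar $C_1(-1)^k p^{2k-1}\bigl(\tfrac{k+1}3\bigr)/(1-p^{2k-1})$, which I would absorb into one of the vectors). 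Thus the matrix inside the pfaffian is precisely of the form $A_{ij}+\sum_{k\in\Lambda}(B_i^{(k)}C_j^{(k)}-B_j^{(k)}C_i^{(k)})$ needed to invoke Lemma~\ref{pel}, with $\Lambda=\mathbb Z_{\geq1}$.

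Next I would apply Lemma~\ref{pel}. This turns $\pfaff(a_{ij}/b_{ij})$ into $\sum_{m=0}^n\sum_{1\le k_1<\dots<k_m}$ of a pfaffian of the bordered matrix $\left(\begin{smallmatrix}C_1 A & X_{k_1,\dots,k_m}\\ -X_{k_1,\dots,k_m}^T & 0\end{smallmatrix}\right)$, where the columns of $X_{k_1,\dots,k_m}$ are $B^{(k_m)},\dots,B^{(k_1)},C^{(k_1)},\dots,C^{(k_m)}$, i.e. the vectors with components $x_i^{2k_m-1},\dots,x_i^{2k_1-1},x_i^{-(2k_1-1)},\dots,x_i^{-(2k_m-1)}$. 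To match the definition of $T_\lambda$ I would pull the scalars out: factoring $C_1$ out of the $2n\times2n$ block $C_1A$ costs $C_1^n$ (a pfaffian of size $2n$ scales by $C_1^n$ when one $2n\times2n$ skew block is multiplied by $C_1$ — one must be careful that the border columns are unaffected, which is fine since they carry no $C_1$), and factoring out the scalar $C_1(-1)^{k_i-1}p^{2k_i-1}\bigl(\tfrac{k_i+1}3\bigr)/(1-p^{2k_i-1})$ from each border column pair $B^{(k_i)},C^{(k_i)}$ yields exactly the product $\prod_{i=1}^m\bigl(\tfrac{k_i+1}3\bigr)\tfrac{(-1)^{k_i-1}p^{2k_i-1}}{1-p^{2k_i-1}}$ appearing in the statement (with a sign bookkeeping to confirm $(-1)^n$ versus $(-1)^{k_i}$ — since $\phi$ here is $(x^{-2}-x^2)/(x^{-3}+x^3)$, i.e. minus Sundquist's entry, the overall $(-1)^n$ will emerge, cf. the computation in \eqref{tsp}). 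The remaining pfaffian, after also squaring the variables $x_i\mapsto x_i^2$ so that $B_i^{(k)}=x_i^{2k-1}\mapsto (x_i^2)^{k-1}\cdot x_i$ and $\phi$ becomes $(x_j^3+x_i^3)/(x_j^2-x_i^2)$ (up to sign), is precisely $T_{k_m-1,\dots,k_1-1,-k_1,\dots,-k_m}(x_1^2,\dots,x_{2n}^2)$ after stripping the Vandermonde-type prefactor $\prod_{i<j}(x_j^3+x_i^3)/(x_j^2-x_i^2)$ built into the definition of $T_\lambda$.

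Finally I would reconcile the product prefactor in front of the pfaffian in the definition of $P_n^{(1)}$, namely $\prod_{1\le i<j\le 2n}(x_j^3/x_i^3)\,\theta(-x_i^6/x_j^6;p^6)$, with the prefactor $X^{4-6n}\Delta(x^4)\prod_{i<j}(-p^6x_i^6/x_j^6,-p^6x_j^6/x_i^6;p^6)_\infty$ and the $\prod_{i<j}(x_j^3+x_i^3)/(x_j^2-x_i^2)$ hidden in $T_\lambda$. Here I use $\theta(-y;p^6)=(-y;p^6)_\infty(-p^6/y;p^6)_\infty$ together with $(-y;p^6)_\infty=(1+y)(-p^6y;p^6)_\infty$ applied to $y=x_i^6/x_j^6$: this produces the factor $1+x_i^6/x_j^6$, and combining $(x_j^3/x_i^3)(1+x_i^6/x_j^6)=x_j^3/x_i^3+x_i^3/x_j^3$, hence $\prod_{i<j}(x_i^6+x_j^6)/\text{(powers of }x\text{)}$; then $(x_i^6+x_j^6)/(x_j^3+x_i^3)=x_i^4-x_i^3x_j+\cdots$ does \emph{not} simplify directly, so instead I would match $\Delta(x^4)=\prod_{i<j}(x_i^4-x_j^4)$ against $\prod_{i<j}(x_j^3+x_i^3)/(x_j^2-x_i^2)$ by noting $x_i^6+x_j^6=(x_i^2+x_j^2)(x_i^4-x_i^2x_j^2+x_j^4)$ and $x_i^6-x_j^6=(x_i^2-x_j^2)(x_i^2+x_j^2)(x_i^2-x_j^2)\cdot\ldots$; more cleanly, one observes $\Delta(x^4)=\Delta(x^2)\prod_{i<j}(x_i^2+x_j^2)$ and $(x_i^6+x_j^6)=(x_i^2+x_j^2)(x_i^4-x_i^2x_j^2+x_j^4)$ while the $T_\lambda$ prefactor contributes $\prod_{i<j}(x_i^6+x_j^6)/(x_i^4-x_j^4)$ after squaring, so the net rational prefactor reduces to exactly what is claimed once the $(-p^6\cdot;p^6)_\infty$ tails and the power $X^{4-6n}$ of $X$ (tracking the $x_j^{-3}$ per pair and the $X$-power introduced when symmetrizing $\phi(x)=\sum c_kx^{2k-1}$ into squared variables, cf. the odd-power case in \S\ref{sqs}) are collected. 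The main obstacle I anticipate is precisely this last bookkeeping step: keeping straight the three sources of $X$-powers and signs — the $(-1)^n$ from using $(x^{-2}-x^2)/(x^{-3}+x^3)$ rather than its negative, the column-permutation signs in passing from $(B^{(k_1)}C^{(k_1)}\cdots B^{(k_m)}C^{(k_m)})$ to $(B^{(k_m)}\cdots B^{(k_1)}C^{(k_1)}\cdots C^{(k_m)})$ as in the last paragraph of the proof of Lemma~\ref{pel}, and the half-integer-exponent shift $x^{2k-1}\mapsto x\cdot(x^2)^{k-1}$ that produces the global factor $X$ — so that the stated $X^{4-6n}$ and $(-1)^n$ come out correctly. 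Everything else is a routine substitution once Lemma~\ref{pel} is in hand.
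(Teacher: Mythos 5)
Your strategy is exactly the paper's: the proof given there is nothing more than ``substitute \eqref{sri} into the definition of $P_n^{(1)}$ and apply Lemma~\ref{pel}'', and your identification of the rational term with the matrix $A$, of the series terms with rank-one skew updates $B_i^{(k)}C_j^{(k)}-B_j^{(k)}C_i^{(k)}$ where $B_i^{(k)}=x_i^{2k-1}$, $C_i^{(k)}=x_i^{-(2k-1)}$, and of the bordered pfaffians with $T_{k_m-1,\dots,-k_m}(x^2)$ after conjugating by $\operatorname{diag}(x_1,\dots,x_{2n},1,\dots,1)$ (which produces the global factor $X$) is all correct. The prefactor reconciliation also closes the way you indicate: $\tfrac{x_j^3}{x_i^3}\theta(-x_i^6/x_j^6;p^6)=\tfrac{x_i^6+x_j^6}{x_i^3x_j^3}(-p^6x_i^6/x_j^6,-p^6x_j^6/x_i^6;p^6)_\infty$, and $\prod_{i<j}(x_i^6+x_j^6)$ cancels against the denominator of $\prod_{i<j}\tfrac{y_j^2-y_i^2}{y_j^3+y_i^3}$ (with $y=x^2$) that appears when the definition of $T_\lambda$ is solved for its pfaffian, leaving $(-1)^nX^{4-6n}\Delta(x^4)$ times the $(-p^6\,\cdot\,;p^6)_\infty$ tails.

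Two bookkeeping points are asserted rather than established, and one of them is the only genuinely delicate step. First, your $C_1$ accounting is inconsistent as written: you extract $C_1^n$ from the $A$-block of the \emph{bordered} matrix and also a factor $C_1$ from each of the $m$ border pairs, i.e.\ $C_1^{n+m}$ in total; since every pairing of the bordered $(2n+2m)$-pfaffian uses exactly $n-m$ entries of the $A$-block, scaling that block alone contributes only $C_1^{n-m}$ (cleanest is to pull $C_1^n$ out of the original $2n\times 2n$ pfaffian before invoking Lemma~\ref{pel}). Second, and more seriously, the scalar you factor from each border pair carries the sign $(-1)^{k_i-1}$ of the statement, whereas \eqref{sri} supplies $(-1)^{k_i}$; the discrepancy is a factor $(-1)^m$ whose origin you do not locate (you attribute the sign issues to $\phi$ being minus Sundquist's entry, but that only accounts for the global $(-1)^n$). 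A literal application of Lemma~\ref{pel} with the coefficients of \eqref{sri} yields $\prod_i(-1)^{k_i}$; the missing $(-1)$ per selected index comes from the parity of the column ordering in the bordered pfaffian. Already for $n=m=1$ one has
$\pfaff\left(\begin{smallmatrix}0&A_{12}&B_1&C_1\\ \cdot&0&B_2&C_2\\ \cdot&\cdot&0&0\\ \cdot&\cdot&\cdot&0\end{smallmatrix}\right)=-(B_1C_2-B_2C_1)$,
so the bordered pfaffian with columns ordered $(B^{(k)},C^{(k)})$ reproduces each rank-one update with an extra minus sign; it is exactly this sign, once per $k_i$, that converts $(-1)^{k_i}$ into $(-1)^{k_i-1}$. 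You should track it explicitly, and a useful check that the final signs are right is to expand $P_1^{(1)}=u^{-1}\theta(u,-u,pu;p^2)$, $u=x_1^2/x_2^2$, to first order in $p$ against the stated formula.
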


\subsection{Trigonometric limit}

We will now consider the behaviour of the functions $P_n^{(\sigma)}$  in the trigonometric limit
 $p\rightarrow 0$. 
By the discussion at the end of \S \ref{mts}, this is equivalent to
considering the behaviour of $P_n^{(1)}$ (say) for $\tau$ near a cusp of $\Gamma_0(2,6)$,
that is, as $p$  tends to $0$ or to any twelfth  root of unity.

Recall that the elementary symmetric functions are defined by
$$e_n(x_1,\dots,x_m)=\sum_{1\leq k_1<\dots<k_n\leq m}x_{k_1}\dotsm x_{k_n}. $$

\begin{proposition}\label{tp}
Let $\pi_n^{(\sigma)}$ be the leading coefficient in the Taylor expansion of $P_n^{(\sigma)}$
at $p=0$. Then,
\begin{align*}\pi_n^{(0)}&=(-1)^n p^{\frac{n(n-1)}6}X^{-2n}\Delta(x^2)\,e_n(x_1^2,\dots,x_{2n}^2), \\
\pi_n^{(1)}&=(-1)^nX^{4-6n}\Delta(x^4)\,s_{n-1,n-1,\dots,1,1,0,0}(x_1^4,\dots,x_{2n}^4),\\
\pi_n^{(2)}&=\begin{cases}
(-1)^{\frac{n+1}2}p^{\frac{(n-1)(3n+1)}4}X^{2-3n}\Delta(x^2)s_{n-1,n-1,\dots,1,1,0,0}(x_1^2,\dots,x_{2n}^2), & n \text{ odd},\\
(-1)^{\frac n2}p^{\frac{n(3n-2)}4}X^{1-3n}\Delta(x^2)s_{n,n-1,n-1,\dots,1,1,0}(x_1^2,\dots,x_{2n}^2), & n \text{ even},\end{cases}\\
\pi_n^{(3)}&=(-1)^{\frac{n(n+1)}2}p^{\frac{n(n-1)}3}X^{2-4n}\Delta(x^4),\\
\pi_n^{(4)}&=\begin{cases}(-1)^{\frac{n+1}2}p^{\frac{(n-1)(3n-1)}4}X^{1-3n}\Delta(x^2)s_{n,n-1,n-1,\dots,1,1,0}(x_1^2,\dots,x_{2n}^2), & n \text{ odd},\\
(-1)^{\frac n2}p^{\frac{n(3n-4)}4}X^{2-3n}\Delta(x^2)s_{n-1,n-1,\dots,1,1,0,0}(x_1^2,\dots,x_{2n}^2), & n \text{ even},
\end{cases}\\
\pi_n^{(6)}&=(-1)^np^{\frac{n(n-1)}6}X^{1-2n}\Delta(x^2).
 \end{align*}
\end{proposition}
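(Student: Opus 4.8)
The plan is to compute each leading Taylor coefficient $\pi_n^{(\sigma)}$ from the Schur/symmetric function expansions already at our disposal, so the proof splits cleanly according to whether $\sigma=1,\hat 1$ or not. For $\sigma\in\{0,2,3,4,6\}$ I would start from the explicit expansions in Proposition~\ref{spp}. Setting $p=0$ kills every theta-function prefactor $\theta(p^\lambda x_i^k/x_j^k;p^{2\lambda})\to 1$, so each $\prod_{i<j}$ product becomes $1$ (or a power of $x_j/x_i$ that combines into a monomial in $X$). Then one must locate the term of the infinite sum $\sum_{1\le k_1<\dots<k_n}\prod_j(\dots)p^{(\text{power of }k_j)}$ that carries the minimal power of $p$. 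For example, in $P_n^{(0)}$ each factor contributes $p^{(k_j-1)/3}(1-p^{4k_j/3})/(1+p^{2k_j})$, whose lowest-order term is $p^{(k_j-1)/3}$ (with coefficient $1$), so the minimizing configuration is the smallest allowed tuple $k_j=j$, giving $p$-exponent $\sum_{j=1}^n(j-1)/3=n(n-1)/6$; the surviving $\chi$-factor is $\chi_{n,n-1,\dots,2,1,-1,-2,\dots,-n}(x^2)$, which after extracting $X$-powers and the Vandermonde $\Delta(x^2)$ is exactly $e_n(x_1^2,\dots,x_{2n}^2)$ up to sign. The same bookkeeping handles $\sigma=2,3,4,6$: one tracks (i)~the Legendre-symbol factors $((k_j+1)/3)$ or $((k_j)/3)$, which now force the minimizing tuple to avoid the residue class where the symbol vanishes — this is why the answers for $\sigma=2,4$ split by parity of $n$ and give $s_{n,n-1,n-1,\dots,1,1,0}$ rather than $s_{n-1,\dots,0,0}$ — and (ii)~the sign $(-1)^{k_j}$ in $C_3$, which produces the $(-1)^{n(n+1)/2}$ in $\pi_n^{(3)}$. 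In each case the resulting $\chi_\mu(x^2)$ is rewritten as $X^{a}\Delta(x^2\text{ or }x^4)s_\lambda$ using the standard identity $\chi_\mu=\Delta\, s_\lambda$ recalled before Proposition~\ref{spp}.

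For $\sigma=1$ (and $\hat 1$ by \eqref{phs}) the matrix entries have poles at roots of unity, so instead I would use Proposition~\ref{sqe}. As $p\to 0$ the infinite products $(-p^6x_i^6/x_j^6;p^6)_\infty\to 1$, and in the double sum $\sum_{m=0}^n\sum_{k_1<\dots<k_m}\prod_i((k_i+1)/3)(-1)^{k_i-1}p^{2k_i-1}/(1-p^{2k_i-1})$ every term with $m\ge 1$ carries a positive power of $p$; hence only the $m=0$ term survives at $p=0$. That term is $T_\varnothing(x_1^2,\dots,x_{2n}^2)$, i.e.\ the pure pfaffian $\prod_{i<j}\frac{x_j^6+x_i^6}{x_j^4-x_i^4}\pfaff((x_j^4-x_i^4)/(x_j^6+x_i^6))$ with empty $B$-block, which by Sundquist's identity \eqref{pfi} equals $s_{n-1,n-1,\dots,1,1,0,0}(x_1^4,\dots,x_{2n}^4)$. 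Combined with the prefactor $(-1)^nC_1^nX^{4-6n}\Delta(x^4)$ and $C_1\to 1$ as $p\to 0$, this gives $\pi_n^{(1)}$ exactly as stated; this is really just a cleaner rederivation of the limit \eqref{tsp}, now dressed so that it follows from the general machinery.

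The routine part is the arithmetic of Vandermonde manipulations and extracting powers of $X$; I would relegate most of that to a remark that ``a straightforward computation gives''. The genuine obstacle — the one step deserving care — is the extremal analysis in the cases $\sigma=2,4$: one must show that the tuple $1\le k_1<\dots<k_n$ minimizing $\sum_j(\text{power of }k_j)$ \emph{subject to} $((k_j+1)/3)\ne0$ (resp.\ $((k_j)/3)\ne0$) is $\{1,2,4,5,7,8,\dots\}$ when that forces $n$ to be odd, and must be adjusted to $\{1,2,4,5,\dots,3n/2-1,3n/2,3n/2+1\}$-type tuples when $n$ is even, because then the ``greedy'' tuple has an even number of elements in each length-$3$ block and one is forced to either drop or add an element, changing the partition from $(n-1,n-1,\dots,0,0)$ to $(n,n-1,n-1,\dots,0)$ and shifting the $p$-exponent accordingly. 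I would carry this out by writing $k_j$ explicitly as $k_j = j + \lfloor (j-1)/2\rfloor$ (or the parity-corrected variant), computing $\sum_j(k_j-1)$ or $\sum_j k_j$ in closed form to get the quartic-in-$n$ exponents $\tfrac{(n-1)(3n+1)}4$ etc., and reading off the associated partition $\mu_j=k_j-1$ (resp.\ $k_j$) to recognize $s_{n-1,n-1,\dots,1,1,0,0}$ or $s_{n,n-1,n-1,\dots,1,1,0}$. The sign factors $(-1)^{(n\pm1)/2}$ and $(-1)^{n/2}$ then come from reordering the columns of the relevant $\chi$-determinant into the standard decreasing order, a permutation whose sign one computes once and for all.
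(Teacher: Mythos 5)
Your proposal follows the paper's proof essentially verbatim: the paper likewise extracts the leading term of the expansions in Proposition~\ref{spp} (taking $k_j=j$ for $\sigma=0,3,6$ and the greedy admissible tuple $k_j=\lfloor 3j/2\rfloor$ for $\sigma=2$, similarly for $\sigma=4$), and handles $\sigma=1$ via Sundquist's identity exactly as in \eqref{tsp}. One small slip worth fixing: the constraint $\left(\frac{k+1}{3}\right)\neq 0$ (case $\sigma=2$) forces $k\not\equiv 2 \pmod 3$, so the minimizing tuple is $\{1,3,4,6,7,\dots\}$, whereas $\{1,2,4,5,7,8,\dots\}$ is the tuple for $\left(\frac{k}{3}\right)\neq 0$ (case $\sigma=4$); your ``resp.'' has these swapped, though this does not affect the validity of the method.
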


\begin{proof}
The case $\sigma=1$ was obtained in \eqref{tsp}.
For the other cases, we use Proposition \ref{spp}. In the cases  $\sigma=0,\,3,\,6$, 
only the term with $k_j=j$ for all $j$ contributes to
the leading coefficient. We find that
\begin{align*}
\pi_n^{(0)}&=(-1)^n\prod_{j=1}^n p^{(j-1)/3}\chi_{n,n-1,\dots,1,-1,-2,\dots,-n}(x_1^2,\dots,x_{2n}^2)\\
&=(-1)^np^{\frac{n(n-1)}6}X^{-2n}\Delta(x^2)\,s_{1^n0^n}(x_1^2,\dots,x_{2n}^2).
\end{align*}
For the fact that $s_{1^n0^n}=e_n$, see e.g. \cite[Eq.\ (I.3.9)]{ma}.
For $\sigma=3$ and $\sigma=6$ the argument is similar but we encounter instead the Schur polynomial
$s_{0^{2n}}=1$. 

For $\sigma=2$, we obtain the leading coefficient when $k_j$ is the $j$-th smallest positive integer not congruent to $2$ mod $3$, that is,
$k_j=[3j/2]$. It is easy to verify that 
$$\prod_{j=1}^{n}\left(\frac{k_j+1}3\right)=(-1)^{[(n+1)/2]}, $$
$$\prod_{j=1}^n p^{k_j-1}=\begin{cases}
p^{(n-1)(3n+1)/4}, & n \text{ odd},\\
p^{n(3n-2)/4}, & n \text{ even},
\end{cases}
$$
\begin{multline*}\chi_{k_n-1,\dots,k_1-1,-k_1,\dots,-k_n}(x_1^2,\dots,x_{2n}^2)\\
=\begin{cases}
X^{1-3n}\Delta(x^2)s_{n-1,n-1,\dots,1,1,0,0}(x_1^2,\dots,x_{2n}^2), & n \text{ odd},\\
X^{-3n}\Delta(x^2)s_{n,n-1,n-1,\dots,1,1,0}(x_1^2,\dots,x_{2n}^2), & n \text{ even}.\end{cases}
\end{multline*}
The final case $\sigma=4$ is similar.

\end{proof}

\subsection{Hankel determinants}

In the applications to statistical mechanics, one is  
particularly interested in the homogeneous limit, when all the variables  $x_j$ coincide.
By the following fact, the homogeneous limit of the pfaffians $P_n^{(\sigma)}$ can be expressed in terms
of Hankel determinants. Although the result is presumably well-known, we include a proof for completeness.

\begin{lemma}\label{phl}
For $f$ a sufficiently differentiable odd function,
$$\lim_{z_1,\dots,z_{2n}\rightarrow 0}\frac{\pfaff_{1\leq i,j\leq 2n}(f(z_i-z_j))}{\prod_{1\leq i<j\leq 2n}(z_i-z_j)}=\frac{\det_{1\leq i,j\leq n}(f^{(2i+2j-3)}(0))}{\prod_{j=1}^{2n}(j-1)!}. $$
\end{lemma}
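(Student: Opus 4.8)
The plan is to compute both sides as multilinear-algebra limits and match them. First I would observe that the pfaffian $\pfaff_{1\le i,j\le 2n}(f(z_i-z_j))$ is, by definition, a sum over pairings of products $\prod f(z_{i}-z_{j})$; it is a symmetric analytic function of $z_1,\dots,z_{2n}$ that vanishes whenever two variables coincide (since $f$ is odd, so $f(0)=0$, making a whole row vanish, hence the pfaffian vanishes), and in fact it is divisible by $\Delta(z)=\prod_{i<j}(z_i-z_j)$ because it changes sign under transpositions. So the quotient on the left-hand side is analytic near the origin and its value there is what we want.

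The cleanest route is to expand $f$ in a Taylor series $f(z)=\sum_{k\ge 0} a_k z^{2k+1}$ (only odd powers, since $f$ is odd, with $a_k = f^{(2k+1)}(0)/(2k+1)!$), and to recognize the pfaffian entries as evaluations of a generating-type expression. Concretely I would write $f(z_i-z_j)=\sum_{k\ge 0} a_k (z_i-z_j)^{2k+1}$ and expand $(z_i-z_j)^{2k+1}$ by the binomial theorem; each monomial $z_i^r z_j^s$ contributes a rank-one-type term, so the matrix $(f(z_i-z_j))$ is a combination of matrices of the form $(\text{column in }z_i)\cdot(\text{row in }z_j) - (\text{transpose})$. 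This is exactly the structure handled by Lemma~\ref{pel} (with $A=0$), which turns the pfaffian into a sum of pfaffians of block matrices $\pfaff\left(\begin{smallmatrix} 0 & X \\ -X^T & 0\end{smallmatrix}\right)=\pm\det X$ when $X$ is square $2n\times 2n$. After dividing by $\Delta(z)$ and letting all $z_j\to 0$, only the terms producing a nonvanishing limit survive, namely those giving a Wronskian-type (confluent Vandermonde) determinant; the leading contribution factors as $\det(f^{(2i+2j-3)}(0))$ times $\Delta(z)/\prod (j-1)!$, cancelling the $\Delta(z)$ and leaving the claimed Hankel determinant. An alternative, perhaps more transparent, derivation: both the pfaffian divided by $\Delta(z)$ and the determinant are continuous, so evaluate the limit by a confluence argument — apply the classical identity expressing $\det_{1\le i,j\le n}(g_j(y_i))/\Delta(y)$ in the confluent limit $y_i\to 0$ as $\det(g_j^{(i-1)}(0))/\prod(i-1)!$, applied to the matrix whose $(i,j)$ entry is built from $f$ and its derivatives, after first using the pairing structure to write the pfaffian of $(f(z_i-z_j))$ over a doubled variable set.

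The main obstacle I anticipate is bookkeeping the combinatorial factor $\prod_{j=1}^{2n}(j-1)!$ and the index shift $2i+2j-3$ correctly. The shift arises because the confluent Vandermonde on $2n$ variables produces rows with derivative orders $0,1,\dots,2n-1$, but the pfaffian structure pairs these up so that only the even (or only the odd) derivative slots combine, yielding $f^{(\text{odd})}$ evaluated with the pattern $2i+2j-3 = (2i-2)+(2j-1)$ as $i,j$ range over $1,\dots,n$; verifying that the sign and the factorial normalization come out exactly as stated — rather than with an extra power of $2$ or a different factorial — is the delicate point, and I would check it against the base case $n=1$, where the left side is $\lim (f(z_1-z_2))/(z_1-z_2)=f'(0)$ and the right side is $f'(0)/0!\cdot 1! $... indeed $\det(f^{(1)}(0))/\prod_{j=1}^2(j-1)! = f'(0)/(0!\,1!) = f'(0)$, and similarly test $n=2$ to pin down the general pattern before writing the induction or the confluence estimate in full.
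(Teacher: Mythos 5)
Your plan is correct in outline, but it takes a genuinely different route from the paper's proof. You propose to Taylor-expand each entry, decompose $(f(z_i-z_j))$ into antisymmetrized rank-one pieces $z_i^rz_j^s-z_i^sz_j^r$ (with $r+s$ odd), apply Lemma~\ref{pel} with $A=0$ (so only the $m=n$ term survives, each block pfaffian being $\pm$ a generalized Vandermonde $\chi_\mu$), divide by $\Delta(z)$ and observe that only $\mu=\{0,1,\dots,2n-1\}$ survives the limit; the surviving even--odd matchings of $\{0,\dots,2n-1\}$ then reassemble into the Hankel determinant, with the factorials $\prod(j-1)!$ coming from the Taylor coefficients $f^{(r+s)}(0)/(r!\,s!)$. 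This works, and it is a nice unification with the machinery the paper already uses for the Schur expansions of $P_n^{(\sigma)}$. The paper instead argues directly: the pfaffian is antisymmetric (you write ``symmetric'' at first, but your subsequent sign argument is the right one), hence its lowest Taylor term is $C\Delta(z)$; the constant $C$ is extracted by applying $\prod_j\frac{1}{(j-1)!}\partial_{z_j}^{j-1}$ at the origin to the pfaffian sum, oddness of $f$ kills all permutations pairing two indices of equal parity, and the surviving terms are reindexed by $\tau\in S_n$ with $\sgn(\sigma)\prod(-1)^{\sigma(2j)}=\sgn(\tau)$, giving the determinant. The paper's argument is more elementary and needs only finitely many derivatives of $f$ (as the statement's ``sufficiently differentiable'' suggests), whereas your full Taylor expansion tacitly assumes analyticity or a truncation-plus-remainder step; conversely, your route makes the appearance of the staircase exponents structurally transparent. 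Two caveats: your ``alternative'' confluence argument is too vague to count as a proof, since the pfaffian is not of the form $\det(g_j(y_i))$ and the reduction ``over a doubled variable set'' is exactly the nontrivial step; and the sign/factorial bookkeeping you defer is essentially the entire content of the lemma, so the writeup is a plan rather than a proof until that is carried out (your $n=1$ check is a good anchor but does not detect a stray $(-1)^n$).
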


\begin{proof}
Let 
$$F(z_1,\dots,z_{2n})=\pfaff_{1\leq i,j\leq 2n}(f(z_i-z_j))=\frac{1}{2^n n!}\sum_{\sigma\in S_{2n}}\sgn(\sigma)\prod_{j=1}^n f(z_{\sigma(2j-1)}-z_{\sigma(2j)}).$$
 As $F$ is anti-symmetric, the lowest term in the Taylor series for $F$ has the form
$C\Delta(z)$, where 
\begin{align*}C&=(-1)^n\prod_{j=1}^{2n}\frac 1{(j-1)!}\frac{\partial ^{j-1}}{\partial z_j^{j-1}}\Bigg|_{z_j=0} F(z_1,\dots,z_{2n})\\
&=\frac{(-1)^n}{2^n n!}\prod_{j=1}^{2n}\frac 1{(j-1)!}\sum_{\sigma\in S_{2n}}\sgn(\sigma)\prod_{j=1}^n(-1)^{\sigma(2j)-1} f^{(\sigma(2j-1)+\sigma(2j)-2)}(0).
\end{align*}
Since $f$ is odd, only permutations such that $\sigma(2j-1)\not\equiv \sigma(2j)\ \operatorname{mod} 2$ for each $j$ contribute to the sum. For any such permutation, define $\tau\in S_n$ by 
$$\{\sigma(2j-1),\sigma(2j)\}=\{2k-1,2\tau(k)\}.$$
 Then, each $\tau$ corresponds to $2^n n!$ 
choices of $\sigma$. Moreover, it is easy to check that $\sgn(\sigma)\prod_{j=1}^n(-1)^{\sigma(2j)}=\sgn(\tau)$. It follows that
$$C=\prod_{j=1}^{2n}\frac 1{(j-1)!}\sum_{\tau\in S_{n}}\sgn(\tau)\prod_{k=1}^n f^{(2k+2\tau(k)-3)}(0)=\frac{\det_{1\leq i,j\leq n}(f^{(2i+2j-3)}(0))}{\prod_{j=1}^{2n}(j-1)!}. $$
\end{proof}

To apply Lemma \ref{phl} to the pfaffians $P_n^{(\sigma)}$, it is convenient to rewrite the series in Lemma \ref{ll} in trigonometric form.
For instance, for $\sigma=3$ we have as all the variables $z_j\rightarrow 0$
$$P_n^{(3)}\sim (2(-p^{2/3};p^{2/3})_\infty)^{n(2n-1)}\pfaff_{1\leq i,j\leq 2n}(f(z_i-z_j)), $$ 
where
\begin{align*} 
f(z)&=C_3\sum_{k=-\infty}^\infty (-1)^k\frac{1-p^{(2k-1)/3}}{1-p^{2k-1}}\,p^{(2k-2)/3}\,e^{(2k-1)\ti\pi z}\\
&=2\ti C_3\sum_{k=1}^\infty (-1)^k\frac{1-p^{(2k-1)/3}}{1-p^{2k-1}}\,p^{(2k-2)/3}\,\sin\big((2k-1)\pi z\big).
\end{align*}
Applying Lemma \ref{phl} gives
\begin{multline*}\lim_{z_1,\dots,z_n\rightarrow 0}\frac{P_n^{(3)}(z)}{\Delta(z)}=\frac{(2(-p^{2/3};p^{2/3})_\infty)^{n(2n-1)}(2\ti C_3)^n}{\prod_{j=1}^n(j-1)!}\\
\det_{1\leq i,j\leq n}\left((-1)^{i+j}\pi^{2i+2j-3}\sum_{k=1}^\infty(-1)^k\frac{1-p^{(2k-1)/3}}{1-p^{2k-1}}\,p^{(2k-2)/3}(2k-1)^{2i+2j-3}\right).
 \end{multline*}
By linearity, the factor $(-1)^{i+j}\pi^{2i+2j-2}$ inside the determinant can be replaced by a global factor $\pi^{n(2n-1)}$.
This gives the case $\sigma=3$ of Theorem \ref{hdt} below.

In the case $\sigma=1$, the term $(x^{-2}-x^2)/(x^{-3}+x^3)$ in \eqref{sri}  contributes a Taylor coefficient of
$\sin(2z)/\cos(3z)$ to each determinant entry. We write these in terms of Glaisher's $T$-numbers \cite{s}
$$1,\ 23,\ 1681,\ 257543,\ 67637281,\ \dots, $$
which are given by the generating function
$$\frac{\sin(2z)}{2\cos(3z)}=\sum_{n=0}^\infty\frac{T_n z^{2n+1}}{(2n+1)!}. $$
We obtain in this way the following Hankel determinant formulas.

\begin{theorem}\label{hdt}
 For $\sigma=0,1,2,3,4,6$, we have
$$\lim_{z_1,\dots,z_{2n}\rightarrow 0}\frac{P_n^{(\sigma)}(z_1,\dots,z_{2n})}{\prod_{1\leq i<j\leq 2n}(z_i-z_j)}
= \frac{(2\ti C_\sigma)^nD_\sigma^{n(2n-1)}}{\prod_{j=1}^{2n}(j-1)!}\,H_n^{(\sigma)},$$
with  $C_\sigma$ as in \eqref{ci},
\begin{align*}
D_0&=D_6=\pi(p^{1/3};p^{2/3})_\infty^2,&
D_1&=2\pi(-p^6;p^6)_\infty^2,\\
D_2 &=D_4=\pi(p^3;p^6)_\infty^2,&
D_3&=2\pi(-p^{2/3};p^{2/3})_\infty^2
\end{align*}
and
\begin{equation}\label{hd}H_n^{(\sigma)}=\det_{1\leq i,j\leq n}\left(L_{i+j-2}^{(\sigma)}\right),\end{equation}
where $L_j^{(\sigma)}$ is the Lambert series
\begin{align*}
L_j^{(0)}&=\sum_{k=1}^\infty\frac{1-p^{4k/3}}{1+p^{2k}}\,p^{(k-1)/3}(2k)^{2j+1},\\
L_j^{(1)}&=(-1)^{j+1}T_j+\sum_{k=1}^\infty\left(\frac{k+1}3\right)\frac{(-1)^{k}p^{2k-1}}{1-p^{2k-1}}(2k-1)^{2j+1},\\
L_j^{(2)}&=\sum_{k=1}^\infty\left(\frac{k+1}3\right)\frac{p^{k-1}}{1+p^{2k-1}}(2k-1)^{2j+1},\\
L_j^{(3)}&=\sum_{k=1}^\infty(-1)^k\frac{1-p^{(2k-1)/3}}{1-p^{2k-1}}\,p^{(2k-2)/3}(2k-1)^{2j+1},\\
L_j^{(4)}&=\sum_{k=1}^\infty\left(\frac{k}3\right)\frac{p^{k-1}}{1+p^{2k}}(2k)^{2j+1},\\
L_j^{(6)}&=\sum_{k=1}^\infty\frac{1-p^{(4k-2)/3}}{1+p^{2k-1}}\,p^{(k-1)/3}(2k-1)^{2j+1}.
\end{align*}
\end{theorem}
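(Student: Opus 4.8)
The plan is to combine Lemma~\ref{phl} with the Laurent expansions of Lemma~\ref{ll}, exactly as illustrated for $\sigma=3$ in the text preceding the theorem. First I would handle the ``easy'' cases $\sigma=0,2,4,6$, where the matrix entries of $P_n^{(\sigma)}$ are already of the form $\phi(x_i/x_j)$ for a function $\phi$ given by a convergent Laurent series around $|x|=1$, together with the product prefactor $\prod_{i<j}\theta(\cdots)$. As all $z_j\to 0$, the prefactor $\prod_{1\le i<j\le 2n}\theta(p^{\la}x_i^{2k}/x_j^{2k};p^{2\la})$ tends to a constant raised to the power $\binom{2n}{2}=n(2n-1)$; a short computation using $\theta(1;p^{2\la})$-type limits (really the derivative, since $\theta$ vanishes at $1$) identifies this constant and produces the factor $D_\sigma^{n(2n-1)}$. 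The matrix entry, written in exponential form $\phi(x)=\sum c_k x^{2k}$ or $\sum c_k x^{2k-1}$, converts to a sine series $f(z)=2\ti\sum_{k\ge1}c_k\sin((2k)\pi z)$ or $2\ti\sum_{k\ge1}c_k\sin((2k-1)\pi z)$ upon using $c_{-k}=-c_k$; applying Lemma~\ref{phl} then gives a Hankel determinant whose $(i,j)$ entry is $(-1)^{i+j}\pi^{2i+2j-3}\cdot(2\ti)\sum_k c_k(2k\text{ or }2k-1)^{2i+2j-3}$. Pulling the scalar $2\ti C_\sigma$ out of each of the $n$ rows, and moving the alternating/power factor $(-1)^{i+j}\pi^{2i+2j-2}$ into a single global power $\pi^{n(2n-1)}$ (absorbed into $D_\sigma$) by the multilinearity of the determinant, yields precisely the stated formula with $L^{(\sigma)}_{j}$ as the Lambert series read off from Lemma~\ref{ll}.

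Next I would treat $\sigma=1$, which is the only genuinely new point. Here the matrix entry has poles at sixth roots of unity, but the subtracted form \eqref{sri} still provides a valid Laurent expansion in $|p|<|x|<|p|^{-1}$, which is all Lemma~\ref{phl} needs since it is a statement about Taylor coefficients at $z=0$. The extra term $(x^{-2}-x^2)/(x^{-3}+x^3)$ in \eqref{sri} is an odd meromorphic function regular at $x=1$ (equivalently $z=0$), contributing the Taylor series of $\sin(2\pi z)/\cos(3\pi z)$ — i.e.\ Glaisher's $T$-numbers, scaled by powers of $\pi$ — to each determinant entry; the rest of \eqref{sri} is the sine series $2\ti\sum_{k\ge1}\legendre{k+1}{3}\frac{(-1)^k p^{2k-1}}{1-p^{2k-1}}\sin((2k-1)\pi z)$. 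Combining, and normalizing so that the $k$-independent piece appears as $(-1)^{j+1}T_j$ after the global $\pi$-power is extracted, gives $L^{(1)}_j$. The prefactor for $\sigma=1$ works the same way: $\prod_{i<j}(x_j^3/x_i^3)\theta(-x_i^6/x_j^6;p^6)$ tends to $(D_1/(2\pi))^{n(2n-1)}\cdot(\text{phase})$, the $x_j^3/x_i^3$ factors cancelling against themselves in the limit.

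The bookkeeping that I expect to be the main obstacle is getting all the scalar prefactors and signs exactly right: the constant $C_\sigma$ comes out of the Laurent series of Lemma~\ref{ll}, the quantity $D_\sigma$ is the homogeneous limit of the theta-product prefactor and carries an extra $\pi$ from Lemma~\ref{phl}, a factor of $2$ appears in $D_0,D_1,D_3,D_6$ (not $D_2,D_4$) from whether the sine series runs over even or odd multiples of $\pi z$, the $n$ rows each donate a $2\ti$ while the column index is absorbed into $\pi^{n(2n-1)}$, and the denominator $\prod_{j=1}^{2n}(j-1)!$ is inherited verbatim from Lemma~\ref{phl}. One also must check that in the cases where the matrix entry is an \emph{even} function of $x$ (namely $\sigma=0,4$, where $\phi(x)=\sum c_k x^{2k}$), multiplying through by the overall $X$ or not, and the substitution $x\mapsto e^{\ti\pi z}$ so that $x^{2k}\mapsto e^{2k\ti\pi z}$ really does give $\sin(2k\pi z)$ and hence the $(2k)^{2j+1}$ in $L^{(0)}_j$ and $L^{(4)}_j$, whereas $\sigma=2,3,6$ give $(2k-1)^{2j+1}$; this parity is dictated by whether the relevant $\chi$-index shift in Proposition~\ref{spp} is $k_j$ or $k_j-1$. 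Once the normalization lemma and the six explicit series from Lemma~\ref{ll} are in hand, each case is a mechanical but careful evaluation, and the six sub-results assemble into the displayed theorem.
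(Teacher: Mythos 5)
Your proposal follows the paper's own proof essentially verbatim: Lemma~\ref{ll} supplies the Laurent/sine-series form of each matrix entry, the theta-product prefactor is evaluated at the homogeneous point, and Lemma~\ref{phl} (together with the Glaisher $T$-number term coming from $(x^{-2}-x^2)/(x^{-3}+x^3)$ when $\sigma=1$) produces the Hankel determinant, with the $(-1)^{i+j}\pi^{2i+2j-3}$ absorbed into a global $\pi^{n(2n-1)}$. Two small bookkeeping slips worth correcting but not affecting the argument: the prefactors are evaluated at $-1$, $p^{1/3}$ or $p^3$ rather than at $1$, so they do not vanish and no derivative is needed there; and the factor $2$ in $D_\sigma$ occurs only for $\sigma=1,3$ and comes from $\theta(-1;q)=2(-q;q)_\infty^2$ in the prefactor, not from the parity of the sine series.
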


We hope that this result will be useful for studying the subsequent limit
 $n\rightarrow\infty$, which is of interest in statistical mechanics.
One approach would be to identify the matrix elements 
with moments and then use the corresponding orthogonal polynomials (see e.g.\ \cite{bh,cp} for applications to the six-vertex model).
As a step in this direction,
we have observed that the integral evaluation
$$\frac{\sin(2z)}{2\cos(3z)}=\frac 1{4\sqrt 3}\int_{-\infty}^\infty\frac{\sinh(\pi y/3)}{\cosh(\pi y/2)}\,e^{yz}\,dy,\qquad |\operatorname{Re} (z)|<\frac{\pi}6 $$
leads to the moment representation
\begin{align*}T_n&=\frac{1}{2\sqrt 3}\int_{0}^\infty\frac{\sinh(\pi y/3)}{\cosh(\pi y/2)}\,y^{2n+1}\,dy\\
&=\frac {3\sqrt 3\cdot 6^{2n}}{4\pi^2}\int_{0}^\infty\left|\frac{\Gamma(1/6+\ti x)\Gamma(1/2+\ti x)\Gamma(5/6+\ti x)}{\Gamma(2\ti x)}\right|^2x^{2n}\,dx, \end{align*}
where we substituted $y=6x$ and used some well-known identities for the gamma function.
The corresponding orthogonal polynomials are the continuous dual Hahn polynomials 
$S_n(x^2;1/6,1/2,5/6)$, see \cite{ks}. Thus,  the Hankel determinant $H_n^{(1)}$
relates to a perturbation of these polynomials.

\section{Applications in statistical mechanics}
\label{as}

\subsection{Domain wall partition function for the 8VSOS model}

We will consider the inhomogeneous 
8VSOS model with domain wall boundary conditions. 
We follow the conventions of \cite{r}.
A state of the model is an assignment
 of a height function to  the squares  of an $n\times n$ 
chessboard,  such that the heights of neighbouring squares differ by exactly $1$. Moreover, the heights of the boundary squares are fixed by demanding that
the north-west and south-east squares have height $0$ and that, moving away from these squares along the boundary, the height increases. Consequently, the north-east and south-west squares have height $n$. 
We assign local weights to $2\times 2$-blocks of adjacent squares. These blocks are given matrix coordinates $1\leq i,j\leq n$ in a standard way.
A block with coordinates $(i,j)$ and heights $\left[\begin{smallmatrix}a&b\\c&d\end{smallmatrix}\right]$
  has weight $R^{b-a,d-b}_{d-c,c-a}(\lambda q^a,u_i/v_j)$, where
\begin {align*}R^{++}_{++}(\lambda,u)&=R^{--}_{--}(\lambda,u)=\frac{\theta(q u;p)}{\theta(q;p)}, \\
R^{+-}_{+-}(\lambda,u)&=\frac{\theta(u,q\lambda;p)}{\theta(q,\lambda;p)},\qquad R^{-+}_{-+}(\lambda,u)=q\frac{\theta(u,q^{-1}\lambda ;p)}{\theta(q,\lambda;p)},\\
R^{-+}_{+-}(\lambda,u)&=\frac{\theta(\lambda u;p)}{\theta(\lambda;p)},\qquad R^{+-}_{-+}(\lambda,u)=u\frac{\theta(\lambda/u;p)}{\theta(\lambda;p)}.\end{align*}
The weight of  a state is the product of the weights of all $2\times 2$-blocks. Finally, the partition function
 $$Z_n=Z_n(u_1,\dots,u_n;v_1,\dots,v_n;\lambda;p,q)$$ 
is the sum of the weights of all the states.
As a manifestation of the integrability of the model,
$Z_n$ is separately symmetric in the parameters
$u_1,\dots,u_n$ and $v_1,\dots,v_n$.

If we first let $p=0$ and then $\lambda=0$, 
 the 8VSOS model reduces to the six-vertex model. The domain wall partition function is then given by
the Izergin--Korepin determinant \eqref{ik}. More precisely,
\begin{multline*}Z_n(u_1,\dots,u_n;v_1,\dots,v_n;0;0,q)\\
=\frac{(-1)^{\binom n2}U}{(1-q)^{2\binom n2}V^n}\frac{\prod_{i,j=1}^n(u_i-v_j)(qu_i-v_j)}{\Delta(u)\Delta(v)}\,\det_{1\leq i,j\leq n}\left(\frac 1{(u_i-v_j)(qu_i-v_j)}\right)\end{multline*}
(as before, we write $U=u_1\dotsm u_n$ etc.). In particular, if in addition $q=\omega=e^{2\ti\pi/3}$, \eqref{si} gives
\begin{equation}\label{npfb}3^{\binom n2}\omega^{n(n-2)}\frac{V^n}{U}\,Z_n(\omega u;v;0;0,\omega)=s_{n-1,n-1,\dots,1,1,0,0}(u,v). \end{equation}

For the general 8VSOS model, explicit expressions for the domain wall partition function are discussed in \cite{g1,g2,p,r0,rf}.
We will now give a new expression in the case $q=\omega$, by showing that the  partition function is a linear combination
of the pfaffians $P_n^{(2)}$ and $P_n^{(4)}$.

\begin{theorem}\label{dwt}
Consider the domain wall partition function for the \emph{8VSOS} model, with  $p=e^{\ti\pi\tau}$ and $q=\omega=e^{2\ti\pi/3}$.  Introduce the parameters $z_j$ and $x_j$ by $e^{\ti\pi z_j}=x_j$ and
$$(e^{2\ti\pi z_1},\dots,e^{2\ti\pi z_n})=(x_1^2,\dots,x_{2n}^2)=( u_1,\dots,u_n,v_1,\dots,v_n). $$ 
Writing
$$\Delta(x^2;p)=\prod_{1\leq i<j\leq 2n}x_i^2\theta(x_j^2/x_i^2;p), $$
we then have
\begin{multline}\label{zs}
p^{\binom n2}\theta(p;p^2)\theta(\omega;p)^{n(n-1)}\theta(\lambda\om^{n+1},\lambda\omega^{n+2};p)\Delta(x^2;p)\,Z_n(\omega u;v;\lambda;p;\omega)\\
=(-1)^n{\omega^{2n}\lambda^{n+1}X^{2n-1}\theta(-p^2;p^6)^{n-1}\theta(-p^{n+1}\omega^{n}\lambda^2 V/U;p^2)}\,P_n^{(2)}(z_1,\dots,z_{2n}|\tau)\\
+\lambda^{n}UX^{2n-2}\theta(-p;p^6)^{n-1}\theta(-p^{n}\omega^{n}\lambda^2 V/U;p^2)\,P_n^{(4)}(z_1,\dots,z_{2n}|\tau).
\end{multline}
\end{theorem}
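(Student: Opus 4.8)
The plan is to identify both sides of \eqref{zs} as the unique (up to a constant independent of $z_1$) function in a suitable space of theta functions in the variable $z_1$, and then to pin down the constant by evaluating at a special point. The key tool is Proposition~\ref{ap}: the right-hand side of \eqref{zs}, viewed as a function of $z_1$ with $z_2,\dots,z_{2n}$ and $\lambda$ as parameters, is a linear combination of $P_n^{(2)}$ and $P_n^{(4)}$, each of which lies in the corresponding space $W^{(2)}$ resp.\ $W^{(4)}$ from that proposition. First I would check that the left-hand side of \eqref{zs}, divided by the appropriate prefactor, satisfies the analytic conditions \eqref{eo}--\eqref{ve} for $\sigma=2$ \emph{and} for $\sigma=4$ simultaneously, or rather that it lives in the span $W^{(2)}+W^{(4)}$. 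Concretely: the known quasi-periodicity of $Z_n$ in the spectral parameter $u_1$ (inherited from the quasi-periodicity of the $R$-matrix entries in the height variable and the spectral parameter) translates, after multiplying by $\Delta(x^2;p)$ and the other $z_1$-independent factors, into the quasi-periods \eqref{eo}, \eqref{fqp}; the $p^{2/3}$-type relation \eqref{ptq} for $\sigma=2,4$ should follow from the special structure at $q=\omega$, i.e.\ from a three-term relation for $Z_n$ under $u_1\mapsto \omega^{\pm1}u_1$; and the vanishing conditions \eqref{vc}, \eqref{ve} come from the fact that $Z_n$ vanishes when $u_1=v_j$ (a standard property of the domain wall partition function, reflecting that the recursion reduces $Z_n$ to $Z_{n-1}$) together with the analogous vanishing at the shifted points.

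Second, once both sides are known to lie in the two-dimensional-ish space spanned by $P_n^{(2)}$ and $P_n^{(4)}$ (as functions of $z_1$), I would determine the two coefficients. The cleanest route is: specialize $u_1$ (equivalently $z_1$) to a value at which $P_n^{(2)}$ or $P_n^{(4)}$ vanishes, or at which the recursion \eqref{psr} applies, reducing the identity to the case $n-1$; this gives the coefficients up to an overall $z$-independent constant and reduces the verification of that constant to small $n$. Alternatively, use the explicit recursive structure of $Z_n$ under $u_n\mapsto v_n$ (or one of the standard Korepin-type recursions) alongside \eqref{psr} for $P_n^{(2)}$, $P_n^{(4)}$, and induct on $n$; the base case $n=1$ is then an elementary theta-function identity that one checks by hand. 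The explicit $\lambda$-dependent and $p$-power prefactors $\theta(-p^{n+1}\omega^n\lambda^2 V/U;p^2)$ etc.\ are exactly what is forced by matching quasi-periodicities in $\lambda$ and the overall homogeneity, so they should drop out of the induction almost automatically once the base case is correct.

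The main obstacle I expect is the first step — verifying that the left-hand side, properly normalized, satisfies the three-term relation of type \eqref{ptq} for both $\sigma=2$ and $\sigma=4$. This is the genuinely model-specific input: it is the reflection, at the level of $Z_n$, of the fact that $q=\omega$ is a cube root of unity (so that the $R$-matrix degenerates in a way that produces an extra linear relation among $Z_n(\ldots,u_1,\ldots)$, $Z_n(\ldots,\omega u_1,\ldots)$, $Z_n(\ldots,\omega^2 u_1,\ldots)$). One way to extract it is from the functional equations satisfied by $Z_n$ (e.g.\ the Korepin conditions adapted to the 8VSOS case as in \cite{r}, \cite{rf}): $Z_n$ is a theta function of prescribed degree in each $u_i$, symmetric, satisfying a one-step recursion, and these properties force membership in $W^{(2)}+W^{(4)}$; the three-term relation is then not checked directly on $Z_n$ but is a consequence of dimension counting once one knows the degree and the vanishing locus. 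In other words, the real work is the careful bookkeeping of characteristics and degrees needed to show $\Delta(x^2;p)\,Z_n$ lands in the right finite-dimensional theta space, after which Proposition~\ref{ap} does the heavy lifting and the constants are fixed by a short induction with an elementary $n=1$ check.
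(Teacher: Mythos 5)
Your overall architecture --- characterize the coefficient functions via Proposition~\ref{ap} as functions of $z_1$, then fix the remaining constants by a Korepin-type recursion and induction on $n$ with an elementary base case --- is the paper's. But your first step has a genuine gap. You propose to show that the left-hand side of \eqref{zs}, suitably normalized, lies in $W^{(2)}+W^{(4)}$ as a function of $z_1$ and then to read off two coefficients. This cannot work as stated, because the quantities multiplying $P_n^{(2)}$ and $P_n^{(4)}$ in \eqref{zs} --- $\lambda^{n+1}X^{2n-1}\theta(-p^{n+1}\omega^n\lambda^2V/U;p^2)$ and $\lambda^nUX^{2n-2}\theta(-p^n\omega^n\lambda^2V/U;p^2)$ --- themselves depend on $u_1$ through $U$ and $X$, so the left-hand side does \emph{not} lie in $W^{(2)}+W^{(4)}$, and there is no single ``appropriate prefactor'' to divide out of a sum of two differently weighted terms. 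The indispensable first move, which you omit, is to separate the two terms by their $\lambda$-dependence: by the structure result of \cite{r0} (recalled in \cite[\S 4]{r}), the normalized partition function, as a function of $\lambda$, lies in the two-dimensional theta space spanned by $\lambda^{n}\theta(-p^{n}\omega^{n}\lambda^2V/U;p^2)$ and $\lambda^{n+1}\theta(-p^{n+1}\omega^{n}\lambda^2V/U;p^2)$; the coordinates $Q_n$, $R_n$ in this basis are $\lambda$-free functions of the $z_j$, and it is to $Q_n$ alone that Proposition~\ref{ap} (with $\sigma=2$) is applied. The identification $R_n=P_n^{(4)}$ then comes for free, because the shift $z_1\mapsto z_1+\tau/2$ interchanges $Q_n$ and $R_n$ up to the explicit factors of Corollary~\ref{hsc} --- which is also how the quasi-periodicity \eqref{fqp} for $Q_n$ is established. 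None of this is visible if you refuse to disentangle the $\lambda$-dependence first.

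Two further points. The three-term relation you correctly flag as the model-specific input is the Razumov--Stroganov identity $L(u_1,\lambda)+L(\omega u_1,\omega^2\lambda)+L(\omega^2u_1,\omega\lambda)=0$ from \cite{ras0}: it rotates $\lambda$ simultaneously with $u_1$, and only after the $\lambda$-separation does it yield \eqref{ptq} for the $\lambda$-free coefficient $Q_n$. Your fallback --- that \eqref{ptq} need not be checked but follows ``by dimension counting once one knows the degree and the vanishing locus'' --- is not tenable: without the three-term relation the space of entire functions with the quasi-periods \eqref{eo}, \eqref{fqp} has dimension $6n-3$ rather than $2n-1$, and the vanishing conditions \eqref{vc}, \eqref{ve} do not cut it down to a line, so \eqref{ptq} is a genuine hypothesis of Proposition~\ref{ap} that must be verified on the partition function. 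The second half of your plan (specialize to trigger the recursion \eqref{psr}, induct on $n$, check a base-case theta identity --- the paper starts at $n=0$ with \eqref{tti}) is essentially the paper's argument and is fine.
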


The splitting of the partition function as a sum of two parts is central for our previous investigation of the three-colour model
\cite{r0,r}.  What is new here is not only the explicit expression for each term as a pfaffian, but also the fact that the partition function
is ``almost symmetric'' in all $2n$ spectral parameters, in the sense that the only asymmetry comes from the factors
$\theta(-p^{n+1}\om^2\lambda^2V/U;p^2)$ and $U\theta(-p^{n}\om^2\lambda^2V/U;p^2)$.

\begin{proof} By the discussion at the end of \cite[\S 4]{r}, 
the left-hand side of \eqref{zs} 
as a function of $\lambda$ belongs to the space 
spanned by $\theta(-\om^{n}\lambda^2V/U;p^2)$ and
$\lambda\theta(-p\om^{n}\lambda^2V/U;p^2)$. This is a consequence of the explicit formula for $Z_n$ 
found in \cite{r0}. 
By \eqref{tqp}, we may  as well work with the
basis   $\lambda^n\theta(-p^n\om^{n}\lambda^2V/U)$ and
$\lambda^{n+1}\theta(-p^{n+1}\om^{n}\lambda^2V/U)$.
Thus, we can write 
\begin{multline}\label{zps}
p^{\binom n2}\theta(p;p^2)\theta(\omega;p)^{n(n-1)}\theta(\lambda\om^{n+1},\lambda\omega^{n+2};p)\Delta(x^2;p)\,Z_n(\omega u;v;\lambda;p;\omega)\\
=(-1)^n{\omega^{2n}\lambda^{n+1}X^{2n-1}\theta(-p^2;p^6)^{n-1}\theta(-p^{n+1}\omega^{n}\lambda^2 V/U;p^2)}\,Q_n\\
+{\lambda^{n}UX^{2n-2}\theta(-p;p^6)^{n-1}\theta(-p^{n}\omega^{n}\lambda^2 V/U;p^2)}\,R_n,
\end{multline}
where $Q_n$ and $R_n$ are independent of $\lambda$.  We must prove that $Q_n=P_n^{(2)}$ and $R_n=P_n^{(4)}$.

We will show that $Q_n$ satisfies all conditions of Proposition \ref{ap}, with $\sigma=2$,
when viewed as a function of $z_1$. First of all, the left-hand side of \eqref{zps} is an entire function of $z_1$. 
Adding $(-1)^{n+1}$ times the same function with $\lambda$ replaced by $-\lambda$,
it follows that the first term on the right is entire.
 As $Q_n$ is independent of $\lambda$, we conclude that $Q_n$ is entire.

The identity \eqref{eo} is obvious.
Consider now the effect of replacing
$z_1$ by $z_1+\tau/2$  in \eqref{zps}.
Then, $u_1\mapsto pu_1$. 
It is easy to see that
\begin{equation}\label{zqp}Z_n(p\omega u_1,\dots)=\frac{(-1)^nV\lambda}{\omega^n u_1^n}\,Z_n(\omega u_1,\dots) \end{equation}
(cf.\  \cite[Lemma 3.2]{r0}) and that,
writing \eqref{zps} as $CZ_n=AQ_n+BR_n$, 
\begin{align*}
A\left(z_1+\frac\tau2\right)&=(-1)^n\frac{\lambda p^{n-\frac 12}\theta(-p^2;p^6)^{n-1}X}{\omega^n\theta(-p;p^6)^{n-1}U}\,B(z_1),\\
B\left(z_1+\frac\tau2\right)&=(-1)^n\frac{\lambda p^{2n-1}\theta(-p;p^6)^{n-1}V}{\omega^n\theta(-p^2;p^6)^{n-1}X}\,A(z_1),\\
C\left(z_1+\frac\tau2\right)&=-\frac{X^2}{u_1^{2n}}\,C(z_1).
\end{align*}
This gives
\begin{align}\nonumber Q_n\left(z_1+\frac\tau2\right)&=\frac{B(z_1)C(z_1+\tau/2)Z_n(z_1+\tau/2)}{A(z_1+\tau/2)C(z_1)Z_n(z_1)}\,R_n(z_1)\\\
\label{qsr}&=A_2B_2^{n-1}\frac{x_2^3\dotsm x_{2n}^3}{x_1^{6n-3}}\,R_n(z_1),\end{align}
where $A_2$ and $B_2$ are as in Corollary \ref{hsc}.
Thus, once we have proved that $Q_n=P_n^{(2)}$, the identity  $R_n=P_n^{(4)}$ will follow.
Moreover, we similarly obtain
\begin{equation}\label{rsq}R_n\left(z_1+\frac\tau2\right)
=A_4B_4^{n-1}\frac{x_2^3\dotsm x_{2n}^3}{x_1^{6n-3}}\,Q_n(z_1).\end{equation}
Combining \eqref{qsr} and \eqref{rsq}, we find that $Q_n$ satisfies \eqref{fqp}.

It was proved in \cite{ras0} (see also \cite[Prop.\ 4.4]{r}) that if
$L(u_1,\lambda)$ denotes the left-hand side of \eqref{zps}, then
$$L(u_1,\lambda)+L(\omega u_1,\omega^2\lambda)+L(\omega^2 u_1,\omega\lambda)=0. $$
This implies that $Q_n$ satisfies \eqref{ptq}. 
Finally,  the vanishing conditions \eqref{vc} and \eqref{ve} follow from the corresponding vanishing of 
$\Delta(x^2;p)$.

We have now verified all conditions of Proposition \ref{ap} and conclude that
$Q_n=C_nP_n^{(2)}$, where $C_n$ is independent of $z_1$. We will prove that $C_n=1$ by induction on $n$. 
We start the induction  at $n=0$, with the interpretation $Z_0=P_0^{(2)}=P_0^{(4)}=1$. We must then prove that
\begin{equation}\label{tti}\theta(p;p^2)\theta(\lambda\omega,\lambda\omega^2;p)=\frac{\lambda\theta(-p\lambda^2;p^2)}{\theta(-p^2;p^6)}
+\frac{\theta(-\lambda^2;p^2)}{\theta(-p;p^6)}.\end{equation}
As we have already discussed, the left-hand side is in the space of theta functions spanned by the two terms on the right. 
To compute the coefficients amounts to verifying
 \eqref{tti} for $\lambda=\ti \sqrt p$ and $\lambda =i$. This follows from \eqref{ts}, using also $\theta(x,-x;p)=\theta(x^2;p^2)$.

For the induction step, we 
first note that \cite[Lemma 3.3]{r0}  
\begin{multline*}Z_n(u_1,\dots,u_n;v_1,\dots,v_n;\lambda;p,q)\Big|_{v_1=qu_1}
=\frac{q^{n-2}\theta(\lambda q^n;p)}{\theta(q;p)^{2n-2}\theta(\lambda q^{n-1};p)}\\
\times\prod_{j=2}^n\theta(u_1/v_j,u_j/qu_1;p)\, Z_{n-1}(u_2,\dots,u_n;v_2,\dots,v_n;\lambda;p,q).\end{multline*}
Combining this with \eqref{zqp} gives
\begin{multline*}Z_n(\omega u_1,\dots,\omega u_n;v_1,\dots,v_n;\lambda;p,\omega)\Big|_{u_1=p\omega v_1}=-\frac{\lambda \omega^{n+1}\theta(\lambda \omega^n;p)}{\theta(\omega;p)^{2n-2}\theta(\lambda \omega^{n-1};p)}\\
\times\prod_{j=2}^n{\theta(\omega u_1/u_j,\omega u_1/v_j;p)} \,Z_{n-1}(\omega u_2,\dots,\omega u_n; v_2,\dots, v_n;\lambda;p,\omega).\end{multline*}
Using  that $\theta(x,\omega x,\omega^2x;p)=\theta(x^3;p^3)$,
it follows that if $L_n(u;v)$ denotes the left-hand side of \eqref{zps}, then
\begin{multline*}L_n(u;v)\Big|_{u_1=p\omega v_1}\\
={(-1)^np^{-n}\lambda\omega^{2n+2}u_1^{2n-1}\theta(\omega;p)}
\prod_{j=2}^nu_jv_j{\theta(u_1^3/u_j^3,u_1^3/v_j^3;p^3)}L_{n-1}(\hat u;\hat v),\end{multline*}
where the hats indicate that $u_1$ and $v_1$ are omitted. This implies in turn
\begin{equation}\label{qr}Q_n(u;v)\Big|_{u_1=p\omega v_1}=\frac{(-1)^n\omega\theta(\omega;p)}{p^{1/2}\theta(-p^2;p^6)}\prod_{j=2}^nu_jv_j{\theta(u_1^3/u_j^3,u_1^3/v_j^3;p^3)} 
Q_{n-1}(\hat u;\hat v).\end{equation}
We need to show that $P_n^{(2)}$ satisfies the same recursion, when we specialize $z_1=z_{n+1}+1/3+\tau/2$.
With notation as in \eqref{psr}, in this specialization
$$P_n^{(2)}(z)=(-1)^{n+1}a_{1,n+1}\prod_{j\neq 1,n+1}b_{1,j}b_{n+1,j}\,P_{n-1}^{(2)}(\hat z). $$
After simplification, the coefficient can be identified with that in \eqref{qr}.
This completes the proof of Theorem \ref{dwt}.\end{proof}

Let us check that Theorem \ref{dwt} reduces to known results
 in the trigonometric limit $p\rightarrow 0$. It is easy to see that, in this limit,
$$\theta(-p^nx;p^2)\sim\begin{cases} p^{-n(n-2)/4}x^{-n/2}(1-x), & n \text{ even},\\
p^{-(n-1)^2/4}x^{(1-n)/2}, & n \text{\ odd}.\end{cases} $$
Using this together with Proposition \ref{tp} gives
\begin{multline}\label{td}
3^{\binom n2}V^n(1-\lambda\omega^{n+1})(1-\lambda\omega^{n+2})\,Z_n(\omega u;v;\lambda;0;\omega)\\
=\omega^{\binom{n+1}2}(U+\omega^n\lambda^2V)
s_{n-1,n-1,\dots,1,1,0,0}(u,v)\\
+(-1)^n\omega^{\binom{n}2}\lambda
s_{n,n-1,n-1,\dots,1,1,0}(u,v).
\end{multline}
The case $\lambda=0$ is \eqref{npfb}. For general $\lambda$ (that is, for the trigonometric 8VSOS model)
it follows from the proof of  \cite[Thm.\ 8.1]{r0} that
the left-hand side of \eqref{td} can alternatively be expressed as
\begin{multline}\label{tzd}\frac{\prod_{i,j=1}^n(u_i^3-v_j^3)}{\Delta(u,v)}\left(\omega^{\binom{n+1}2}(U+\omega^n\lambda^2V)\det_{1\leq i,j\leq n}\left(\frac{u_i-v_j}{u_i^3-v_j^3}\right)\right.\\
\left.+(-1)^n\omega^{\binom{n}2}\lambda\det_{1\leq i,j\leq n}\left(\frac{u_i^2-v_j^2}{u_i^3-v_j^3}\right)
\right). \end{multline}
As we have seen in \eqref{si}, the first terms in \eqref{td} and \eqref{tzd} can be identified.
That is also true for the second terms, cf.\ the second identity in \cite[Thm.\ 2.4(2)]{o}. 

\subsection{The three-colour model}

It was observed by Truong and Schotte \cite{ts} that the 8VSOS model contains the three-colour model as the special case when $q=u_i/v_j=\omega$ 
for all $i,j$. The latter model is obtained by starting from a height matrix and assigning to each square the weight $t_j$ 
if its height is equal to $j\ \operatorname{mod}\ 3$, where $t_0$, $t_1$ and $t_2$ are independent parameters. 
Let $Z_n^{\text{3C}}(t_0,t_1,t_2)$ denote the sum of the weight of all states with domain wall boundary conditions.
We will parametrize the weights as  
\begin{equation}\label{tj}t_j=\theta(\lambda\omega^j;p)^{-3}\end{equation} 
(since the partition function is homogeneous, two parameters are sufficient). An equivalent
parametrization was used by Baxter \cite{btc} to study the model for periodic boundary conditions.
By \cite[Eq.\ (4.8)]{r}, we then have
$$Z_n^{\text{3C}}(t_0,t_1,t_2)=\omega^{n(n+1)}\frac{\theta(\lambda\omega^2,\lambda\omega^{n+1};p)^2}{\theta(\lambda\omega^n;p)\theta(\lambda^3;p^3)^{n^2+2n+2}}\,Z_n(\omega,\dots,\omega;1,\dots,1;\lambda;p,\omega). $$
Applying Theorem \ref{hdt} and  Theorem \ref{dwt} yields the following new expression for the three-colour partition function $Z_n^{\text{3C}}$ in terms of Hankel determinants.

\begin{corollary}\label{thc}
In the parametrization \eqref{tj},
the domain wall partition function for the three-colour model can be expressed in terms of the Hankel determinants
\eqref{hd} as 
\begin{align*}
Z_n^{\text{\emph{3C}}}&=\frac{(-1)^{\binom{n+1}2}\omega^{2n^2}(p^3;p^3)_\infty^{3n^2}\lambda^n\theta(\lambda\omega^2,\lambda\omega^{n+1};p)^2}{\prod_{j=1}^{2n}(j-1)! (48p)^{\binom n2}(p;p)_\infty^{3n^2+1}(p^6;p^6)_\infty^{4n^2+1}\theta(\lambda^3;p^3)^{n^2+2n+3}}\\
&\quad\times\big(\omega^{2n}(-p;-p)_\infty(p^{12};p^{12})_\infty\lambda\theta(-p^{n+1}\omega^n\lambda^2;p^2)H_n^{(2)}
\\
&\qquad+(-p^3;-p^3)_\infty(p^4;p^4)_\infty\theta(-p^n\omega^n\lambda^2;p^2)H_n^{(4)}\big).
\end{align*}
\end{corollary}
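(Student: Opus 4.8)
The plan is to obtain the corollary by passing to the homogeneous limit $z_1,\dots,z_{2n}\to 0$ in Theorem~\ref{dwt} and combining it with Theorem~\ref{hdt} and the relation between $Z_n$ and $Z_n^{\text{3C}}$ recalled just before the statement. First I would set $u_i=v_j=1$ for all $i,j$ in \eqref{zs}; equivalently $x_j\to 1$, i.e.\ $z_j\to 0$. This is precisely the homogeneous point underlying the three-colour model, it turns the left-hand side of \eqref{zs} into a prefactor times $Z_n(\om,\dots,\om;1,\dots,1;\lambda;p,\om)$, and it makes $V/U\to 1$, so the two ``asymmetric'' theta factors on the right of \eqref{zs} become $\theta(-p^{n+1}\om^n\lambda^2;p^2)$ and $\theta(-p^n\om^n\lambda^2;p^2)$, while $X,U,V\to 1$. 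Both sides of \eqref{zs} vanish to the order $\prod_{1\le i<j\le 2n}(z_i-z_j)$ — on the left through the factor $\Delta(x^2;p)=\prod_{1\le i<j\le 2n}x_i^2\theta(x_j^2/x_i^2;p)$, on the right through the pfaffians $P_n^{(2)}$ and $P_n^{(4)}$ — so I divide \eqref{zs} through by $\prod_{1\le i<j\le 2n}(z_i-z_j)$ before letting $z_j\to 0$.

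Two asymptotics are then needed. From $\theta(e^{2\ti\pi w};p)=-2\ti\pi w\,(p;p)_\infty^2+O(w^2)$ as $w\to 0$ one gets $\Delta(x^2;p)\sim\big(2\ti\pi(p;p)_\infty^2\big)^{n(2n-1)}\prod_{1\le i<j\le 2n}(z_i-z_j)$ (using $\binom{2n}{2}=n(2n-1)$), and Theorem~\ref{hdt} with $\sigma=2$ and $\sigma=4$ evaluates the limit of $P_n^{(\sigma)}(z)/\prod_{1\le i<j\le 2n}(z_i-z_j)$ in terms of $H_n^{(\sigma)}$, $C_\sigma$ and $D_\sigma$. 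Substituting both, \eqref{zs} becomes an identity expressing $p^{\binom n2}\theta(p;p^2)\theta(\om;p)^{n(n-1)}\theta(\lambda\om^{n+1},\lambda\om^{n+2};p)\big(2\ti\pi(p;p)_\infty^2\big)^{n(2n-1)}Z_n(\om,\dots,\om;1,\dots,1;\lambda;p,\om)$ as an explicit linear combination of $H_n^{(2)}$ and $H_n^{(4)}$.

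Finally I would solve the relation recalled just before the statement for $Z_n(\om,\dots,\om;1,\dots,1;\lambda;p,\om)$ and insert it, turning the left-hand side into $Z_n^{\text{3C}}$ times a ratio of theta functions; here the factor $\theta(\lambda\om^{n+1},\lambda\om^{n+2};p)$ from \eqref{zs} merges with the $\theta(\lambda\om^n;p)$ produced by that relation into $\theta(\lambda\om^n,\lambda\om^{n+1},\lambda\om^{n+2};p)=\theta(\lambda,\lambda\om,\lambda\om^2;p)=\theta(\lambda^3;p^3)$ (since $\{\om^n,\om^{n+1},\om^{n+2}\}=\{1,\om,\om^2\}$ and $\theta(x,\om x,\om^2 x;p)=\theta(x^3;p^3)$), which accounts for the exponent $n^2+2n+3$ of $\theta(\lambda^3;p^3)$ in the stated formula.

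The hard part will be simplifying the resulting prefactors into the compact form claimed; this is a long but routine bookkeeping of $q$-shifted factorials, of the roots of unity $\om$ and $\ti$, and of powers of $2$ and $3$ (note that $48=2^4\cdot 3$, $\binom n2=n(n-1)/2$ and $n(2n-1)=2n^2-n$). One uses the explicit products for $C_\sigma$ and $D_\sigma$ from \eqref{ci} and Theorem~\ref{hdt}, the relations \eqref{ts} (also with $p$ replaced by $-p$), and standard identities such as $\theta(x,-x;p)=\theta(x^2;p^2)$ and $(p;p)_\infty=(p;p^2)_\infty(p^2;p^2)_\infty$. For instance, the factor $\pi^{n(2n-1)}$ produced by $\Delta(x^2;p)$ cancels the $\pi^{n(2n-1)}$ hidden in $D_2^{n(2n-1)}=D_4^{n(2n-1)}=\big(\pi(p^3;p^6)_\infty^2\big)^{n(2n-1)}$, while the remaining infinite products collapse to $(p^3;p^3)_\infty^{3n^2}$ divided by $(p;p)_\infty^{3n^2+1}(p^6;p^6)_\infty^{4n^2+1}$, together with the factors $(-p;-p)_\infty(p^{12};p^{12})_\infty$ multiplying $H_n^{(2)}$ and $(-p^3;-p^3)_\infty(p^4;p^4)_\infty$ multiplying $H_n^{(4)}$. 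Matching numerators, denominators, signs and powers of $\om$ then yields the asserted identity.
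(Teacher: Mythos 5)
Your proposal is correct and follows essentially the same route as the paper: specialize Theorem~\ref{dwt} to the homogeneous point, divide by $\Delta(z)$ and use the asymptotics $\Delta(x^2;p)\sim(2\ti\pi(p;p)_\infty^2)^{n(2n-1)}\Delta(z)$ together with Theorem~\ref{hdt} for $\sigma=2,4$, then convert $Z_n$ to $Z_n^{\text{3C}}$ via the relation from \cite{r}, merging $\theta(\lambda\om^n,\lambda\om^{n+1},\lambda\om^{n+2};p)=\theta(\lambda^3;p^3)$. The remaining product bookkeeping you defer is exactly what the paper also leaves as ``elementary manipulation of infinite products'' (its one recorded shortcut being $(p^2;p^2)_\infty^3=(p;p)_\infty(-p;-p)_\infty(p^4;p^4)_\infty$, which gives $\theta(-p^2;p^6)C_2=-\theta(-p;p^6)C_4=(p^3;p^3)_\infty/(p;p)_\infty(p^6;p^6)_\infty^2$).
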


In the derivation of Corollary \ref{thc}, we used that
\begin{align*}\Delta(x^2;p)&=\prod_{1\leq i<j\leq 2n}(e^{2\ti\pi z_i}-e^{2\ti\pi z_j})(px_i^2/x_j^2,px_j^2/x_i^2;p)_\infty\\
&\sim (2\ti\pi(p;p)_\infty^2)^{n(2n-1)}\Delta(z)\end{align*}
as all $z_j\rightarrow 0$, as well as some elementary manipulation of infinite products. In particular, the identity
$$(p^2;p^2)_\infty^3=(p;p)_\infty(-p;-p)_\infty(p^4;p^4)_\infty $$
implies that
$$ \theta(-p^2;p^6)C_2=-\theta(-p;p^6)C_4=\frac{(p^3;p^3)_\infty}{(p;p)_\infty(p^6;p^6)_\infty^2}.$$

Corollary \ref{thc} should be compared with \cite[Cor.\ 9.4]{r}, where we expressed the three-colour partition function in terms of certain polynomials $p_{n-1}$ in the variable
$$ \zeta=\frac{\omega^2\theta(-1,-p\omega;p^2)}{\theta(-p,-\omega;p^2)}.$$
In \cite{rb}, we showed that these polynomials can be interpreted as tau functions of Painlev\'e VI.
We can now deduce the following new Hankel determinant formula for $p_{n-1}$. 
It seems quite different in nature both from the Hankel determinants for general tau functions given in \cite{kn}
and for the expression for $p_{n-1}$ as an $(n-1)\times (n-1)$-determinant given in \cite[Lemma 9.7]{r}.

\begin{corollary}
The polynomial $p_{n-1}(\zeta)$ introduced  in \cite{r} has the determinant representation
$$p_{n-1}(\zeta)=A B^{[n^2/4]} H_n^{(2)}, $$
where
\begin{align*}A&=\begin{cases}\displaystyle\frac {1}{\prod_{j=1}^{2n}(j-1)!(48)^{\binom n2}p^{n(3n-2)/4}}, & n \text{ even},\\[5mm]
\displaystyle\frac {(-1)^{(n+1)/2}(p^3;p^3)_\infty^3(p^4;p^4)_\infty(p^{12};p^{12})_\infty}{\prod_{j=1}^{2n}(j-1)!(48)^{\binom n2}p^{(n-1)(3n+1)/4}(p;p)_\infty(p^2;p^2)_\infty^3(p^6;p^6)_\infty^5}, & n\text{ odd},\end{cases}\\
B&=\frac{(p^4;p^4)_\infty^6(p^3;p^3)_\infty^9}{(p;p)_\infty^3(p^2;p^2)_\infty^{15}(p^6;p^6)_\infty^{11}(p^{12};p^{12})_\infty^2}.
\end{align*}
\end{corollary}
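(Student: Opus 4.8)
The plan is to compare the formula for the three-colour partition function obtained in Corollary~\ref{thc} with the one given in \cite[Cor.\ 9.4]{r}, which is stated in terms of the polynomial $p_{n-1}$. Both formulas are computations of the same object, namely the homogeneous 8VSOS domain wall partition function $Z_n(\om,\dots,\om;1,\dots,1;\lambda;p,\om)$ in the parametrization \eqref{tj}, and both have the same shape: an elementary prefactor (a monomial in $\lambda$ times a ratio of theta functions and infinite products) multiplying a linear combination, with coefficients linear in $\lambda$, of the two functions $\lambda^n\theta(-p^n\om^n\lambda^2;p^2)$ and $\lambda^{n+1}\theta(-p^{n+1}\om^n\lambda^2;p^2)$ (equivalently $\theta(-\om^n\lambda^2;p^2)$ and $\lambda\theta(-p\om^n\lambda^2;p^2)$). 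In Corollary~\ref{thc} the coefficient of $\lambda^{n+1}\theta(-p^{n+1}\om^n\lambda^2;p^2)$ is a constant times $H_n^{(2)}$ --- this is the part of the partition function identified with $P_n^{(2)}$ in Theorem~\ref{dwt}, passed through the homogeneous limit of Theorem~\ref{hdt}. In \cite[Cor.\ 9.4]{r} the coefficient of the same function is a constant times $p_{n-1}(\zeta)$. Since the two linear combinations are equal and $\lambda^n\theta(-p^n\om^n\lambda^2;p^2)$, $\lambda^{n+1}\theta(-p^{n+1}\om^n\lambda^2;p^2)$ are linearly independent as functions of $\lambda$, the two coefficients must agree; this already gives an identity $p_{n-1}(\zeta)=E_n\,H_n^{(2)}$ with $E_n$ an explicit elementary factor depending only on $p$. (Matching the other coefficient is then automatic; it amounts to a companion Hankel formula for the polynomial attached to $H_n^{(4)}$.)

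The remaining step is to simplify $E_n$. It is the quotient of the prefactor of $H_n^{(2)}$ in Corollary~\ref{thc} --- which by Theorem~\ref{hdt} carries the factors $(2\ti C_2)^n$, $D_2^{n(2n-1)}=(\pi(p^3;p^6)_\infty^2)^{n(2n-1)}$ and $1/\prod_{j=1}^{2n}(j-1)!$, together with the infinite-product and $\lambda$-monomial prefactor of Corollary~\ref{thc} and the factor relating $\Delta(x^2;p)$ to $\Delta(z)$ in the homogeneous limit --- by the corresponding prefactor of $p_{n-1}(\zeta)$ in \cite[Cor.\ 9.4]{r}, with the definition of $\zeta$ substituted wherever it enters through $p_{n-1}(\zeta)$. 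Carrying out this cancellation using the infinite-product identities already employed in deriving Corollary~\ref{thc} (for instance $(p^2;p^2)_\infty^3=(p;p)_\infty(-p;-p)_\infty(p^4;p^4)_\infty$, the evaluations $\theta(-p^2;p^6)C_2=-\theta(-p;p^6)C_4=(p^3;p^3)_\infty/((p;p)_\infty(p^6;p^6)_\infty^2)$, and \eqref{ts}) collapses $E_n$ to the claimed $A\,B^{[n^2/4]}$. The parity-dependent form of $A$ is exactly the parity-dependent power of $p$ carried by the $\lambda^{n+1}\theta(-p^{n+1}\om^n\lambda^2;p^2)$-term, which matches the parity split of $\pi_n^{(2)}$ in Proposition~\ref{tp} (note the powers $p^{n(3n-2)/4}$ for $n$ even and $p^{(n-1)(3n+1)/4}$ for $n$ odd occurring there and in $A$), while the exponent $[n^2/4]$ is what survives of the competing quadratic-in-$n$ exponents $n(2n-1)$, $\binom n2$, $n(n-1)$, $n^2+2n+3$ after combination.

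The genuinely routine parts are the two observations that the two known formulas share the stated common $\lambda$-structure and that $\lambda^n\theta(-p^n\om^n\lambda^2;p^2)$, $\lambda^{n+1}\theta(-p^{n+1}\om^n\lambda^2;p^2)$ are independent. The main obstacle is the product bookkeeping in the second step: one must keep careful track of the many competing powers of $p$ and the numerous infinite products coming from Theorems~\ref{hdt} and \ref{dwt}, from the prefactor of Corollary~\ref{thc}, and from \cite[Cor.\ 9.4]{r}, so that they really do collapse to the clean $A\,B^{[n^2/4]}$ --- in particular pinning down the exponent of $B$ and the even/odd split of $A$ exactly. As an independent check one may evaluate both sides at a convenient specialization of $\zeta$ (equivalently of $p$), or compare their leading coefficients in $\zeta$, using the explicit $(n-1)\times(n-1)$ determinant for $p_{n-1}$ from \cite[Lemma~9.7]{r}.
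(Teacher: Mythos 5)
Your proposal is correct and follows essentially the same route the paper takes: the corollary is obtained precisely by comparing Corollary~\ref{thc} with the expression for $Z_n^{\text{3C}}$ from \cite[Cor.\ 9.4]{r}, matching the coefficients of the two linearly independent theta functions of $\lambda$, and simplifying the resulting elementary quotient of infinite products. The paper leaves this bookkeeping implicit, so your explicit identification of which coefficient carries $H_n^{(2)}$ versus $p_{n-1}(\zeta)$, and your remarks on where the parity split of $A$ and the exponent $[n^2/4]$ come from, are a faithful expansion of the intended argument.
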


In fact, a similar identity holds for each of our twelve pfaffians. The reason is that $\zeta$ is a Hauptmodul for $\Gamma_0(2,6)\simeq \Gamma_0(12)$. Hence, the normalizer $N$ discussed in \S \ref{mts} acts on $\zeta$ by rational transformations \cite{rsa}. 
It follows that, for any $\sigma\in \Sigma$, there exists a rational function $\phi_\sigma$ such that 
$p_{n-1}(\phi_\sigma(\zeta))$ is an elementary factor times $H_n^{(\sigma)}$. 
We have not worked out these relations in detail.

\subsection{Eight-vertex model on an odd chain}\label{ocs}

The eight-vertex model can  be parametrized by the Boltzmann
weights \cite{b0,bm1}
\begin{align*}
R^{++}_{++}&=R^{--}_{--}=\rho\,\theta_4(2\eta|2\tau)\theta_4(u-\eta|2\tau)\theta_1(u+\eta|2\tau),\\
R^{+-}_{+-}&=R^{-+}_{-+}=\rho\,\theta_4(2\eta|2\tau)\theta_1(u-\eta|2\tau)\theta_4(u+\eta|2\tau),\\
R^{+-}_{-+}&=R^{-+}_{+-}=\rho\,\theta_1(2\eta|2\tau)\theta_4(u-\eta|2\tau)\theta_4(u+\eta|2\tau),\\
R^{++}_{--}&=R^{--}_{++}=\rho\,\theta_1(2\eta|2\tau)\theta_1(u-\eta|2\tau)\theta_1(u+\eta|2\tau),
\end{align*}
where $u$, $\eta$ and $\tau$ are parameters of the model and
$\rho$ is a  normalization factor.  
We will only consider the case $\eta=\pi/3$, when the related XYZ spin chain is supersymmetric \cite{fe}. We take
$$\rho=\frac {p^{-1/4}}{(p^2;p^2)_\infty(p^4;p^4)_\infty},\qquad p=e^{\ti\pi\tau}.$$

If  $V$ is a vector space with basis $v_\pm$, one defines
$$R(u)(v_k\otimes v_l)=\sum_{m,n\in\{\pm\}}R_{kl}^{mn}v_m\otimes v_n. $$
Consider a tensor product $V_0\otimes V_1\otimes\dots\otimes V_N$, where each $V_j=V$. 
The inhomogeneous transfer matrix is 
the operator on $V_1\otimes\dots\otimes V_N$ given by 
$$\mathbf T(u)=\mathbf T(u;u_1,\dots,u_{N})=\operatorname{Tr}_0\left(R_{01}(u-u_1)\dotsm R_{0N}(u-u_{N})\right),$$
where $R_{ij}(u)$ denotes $R(u)$ acting on  $V_i\otimes V_j$. 

A $Q$-operator  is a family of operators $\mathbf Q(u)$ 
on $V^{\otimes N}$
such that $\mathbf Q(u)\mathbf T(v)=\mathbf T(v)\mathbf Q(u)$ and
\begin{equation}\label{tq}\mathbf T(u)\mathbf Q(u)=\phi(u-\eta)\mathbf Q(u+2\eta)+\phi(u+\eta)
\mathbf Q(u-2\eta), \end{equation}
where $\phi(u)=\prod_{j=1}^{N}\theta_1(u-u_j|\tau)$.

We now restrict to the case when $N$ is odd. In this case,
Razumov and Stroganov \cite{ras} conjectured that 
$\mathbf T(u)$ has an eigenvector with eigenvalue $\phi(u)$. Assuming that 
there exists a $Q$-operator $\mathbf Q(u)$ with the same eigenvector
and eigenvalue $Q(u)$, we must have
\begin{equation}\label{pq}\phi(u)Q(u)=\phi(u-\eta)Q(u+2\eta)+\phi(u+\eta)Q(u-2\eta). \end{equation}

The existence of a $Q$-operator is far from obvious.
Baxter gave two constructions of $Q$-operators for the eight-vertex model \cite{b0,bf}.
Unfortunately, the first of these does not work for the case $\eta=\pi/3$  \cite{fm}
and the second one does not work when $N$ is odd. 
 Bazhanov and Mangazeev suggest that the $Q$-operator from \cite{b0}  is well-defined
on a subspace containing the particular eigenvector that we are interested in. Alternative constructions of a $Q$-operator  
valid for $\eta=\pi/3$ and $N$ odd have been given by
by Fabricius \cite{f} and Roan \cite{ro}. However,  all the papers mentioned focus on the homogeneous chain ($u_1=\dots=u_{N}=0$).
We will not address the  problem of solving 
 the operator equation \eqref{tq}, but only consider the  scalar equation \eqref{pq}.

We will write $t=e^{i(u_1+\dots+u_{N})}$. Note that $\phi$ satisfies
\begin{equation}\label{phpx}\phi(u+\pi)=-\phi(u),\qquad \phi(u+\pi\tau)=-t^2e^{-\ti N(2u+\pi\tau)}\phi(u). \end{equation}
Let $Q$ be a solution to \eqref{pq}.
We will assume that $Q$ is an entire function and satisfies 
\begin{equation}\label{qqp} Q(u+2\pi)=Q(u),\qquad Q(u+2\pi\tau)=t^{-2}e^{-2\ti N(u+\pi\tau)}Q(u). \end{equation}
This agrees with the assumptions of \cite{bm1}, dealing with the homogeneous chain.
Using \eqref{phpx}, 
it is easy to see that the map $Q(u)\mapsto t^{-1}e^{\ti N(u+\pi\tau/2)}Q(u+\pi\tau)$ defines
an involution on the space of solutions to  \eqref{pq}  and \eqref{qqp}. Thus, any solution can be decomposed 
as $Q=Q_++Q_-$, where
$$ Q_\pm(u+2\pi)=Q_\pm(u),\qquad Q_\pm(u+\pi\tau)=\pm t^{-1}e^{-\ti N(u+\pi\tau/2)}Q_\pm(u). $$
We will now show that such solutions $Q_\pm$ exist and are unique up to normalization.
Let  $f(z)=\phi(2\pi z)Q_\pm(2\pi z)$. Using again \eqref{phpx}, one checks that
$$f(z+1)=f(z),\qquad f(z+\tau/2)=\mp \frac{t^3}{e^{3N\ti\pi(2z+\tau/2)}}\,f(z), $$
$$ f\left(z-\frac 23\right)+f(z)+f\left(z+\frac 23\right)=0. $$
Moreover, $f$ vanishes at the points $z_j$ and $z_j+1/2$, $j=1,\dots, N$, where $u_j=2\pi z_j$. 
This can be recognized as the conditions of Proposition \ref{ap}, with $\tau$ replaced by $\tau/2$, $N=2n-1$
and $\sigma=\hat 3, 3$ in the case of $Q_+$, $Q_-$, respectively.
Thus, we can deduce the following fact.

\begin{theorem}\label{tqt}
Suppose that $N=2n-1$ and $\eta=\pi/3$. 
Let $V$ be the space of solutions to the $TQ$-equation \eqref{pq}, which are entire and satisfy the quasi-periodicity
conditions \eqref{qqp}. Suppose that the parameters $u_1,\dots,u_{N}$ are generic. 
Then, $\dim V=2$ and $V$ is spanned by the two solutions
\begin{equation}\label{tqs}Q^{(\sigma)}(u)=\frac 1{\phi(u)}\,P_n^{(\sigma)}\left(\frac{u}{2\pi},\frac{u_1}{2\pi},\dots,\frac{u_{N}}{2\pi} ;\frac\tau 2\right),\qquad 
\sigma =3,\,\hat 3.\end{equation}
\end{theorem}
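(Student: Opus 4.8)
The plan is to verify, directly from the construction preceding the statement, that both functions $Q^{(\sigma)}$ defined in \eqref{tqs} lie in $V$, and then to show $\dim V\le 2$ by matching $V$ against a space characterized by Proposition~\ref{ap}. The key observation, already spelled out in the text, is that if $Q\in V$ and we set $f(z)=\phi(2\pi z)Q(2\pi z)$, then the quasi-periodicity relations \eqref{phpx} and \eqref{qqp} force $f$ to satisfy \eqref{eo}, \eqref{fqp} (with $\tau$ replaced by $\tau/2$, $N=2n-1$, so $6n-3=3N$), and \eqref{ptq}; moreover $f$ vanishes at the $z_j$ and at $z_j+1/2$, which are precisely the vanishing conditions \eqref{vc} and \eqref{ve} for $\sigma=3,\hat 3$. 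Conversely, any entire $f$ satisfying those conditions gives, after dividing by $\phi(2\pi z)$ (which is entire and has simple zeros exactly at the $z_j+\tfrac12\mathbb Z+\tfrac\tau2\mathbb Z$ by \eqref{phpx}), an entire function on the relevant domain, but one has to be careful: $\phi$ vanishes on a larger lattice than where $f$ is required to vanish, so one needs that the conditions on $f$ actually imply vanishing along the full coset, which is exactly how Proposition~\ref{ap} is proved.

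The first step I would carry out: fix the decomposition $Q=Q_++Q_-$ via the involution $Q(u)\mapsto t^{-1}e^{\ti N(u+\pi\tau/2)}Q(u+\pi\tau)$ already introduced, so that it suffices to analyze each eigenspace $V_\pm$ separately and show $\dim V_\pm=1$ with $V_-$ spanned by $Q^{(3)}$ and $V_+$ by $Q^{(\hat3)}$. Second step: for $Q\in V_\pm$, check that $f(z)=\phi(2\pi z)Q(2\pi z)$ satisfies every hypothesis of Proposition~\ref{ap} for $\sigma=3$ (resp.\ $\hat3$) with modular parameter $\tau/2$; the periodicity and quasi-periodicity are the computations \eqref{phpx}--\eqref{qqp} already displayed, the three-term relation \eqref{ptq} is the $TQ$-equation \eqref{pq} rewritten (using $\eta=\pi/3$, so $\pm2\eta$ becomes $\pm2\pi/3$ after rescaling $u=2\pi z$), and the vanishing at $z_j,z_j+\tfrac12$ comes from the zeros of $\phi$. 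Third step: invoke Proposition~\ref{ap} to conclude $f=C\,P_n^{(\sigma)}(z,z_1/(2\pi),\dots;\tau/2)$, hence $Q=C\,Q^{(\sigma)}$, giving $\dim V_\pm\le1$. Fourth step: conversely verify that $Q^{(3)}$ and $Q^{(\hat3)}$ really do solve \eqref{pq} and \eqref{qqp} — here one runs the implications backwards, using the characteristic properties of $P_n^{(\sigma)}$ listed in \eqref{eo}--\eqref{ve} together with the fact (provable from Lemma~\ref{nvl} applied to $P_n^{(\sigma)}$ and $P_{n-1}^{(\sigma)}$, or from the non-vanishing built into Proposition~\ref{ap}'s proof) that $P_n^{(\sigma)}/\phi$ is genuinely entire for generic $u_j$, i.e.\ $P_n^{(\sigma)}$ inherits all the zeros of $\phi(2\pi z)$ and not merely the ones at $z_j,z_j+\tfrac12$. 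This last entireness claim is what makes $Q^{(\sigma)}$ a legitimate element of $V$ rather than a meromorphic function.

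The main obstacle I anticipate is precisely this entireness of $Q^{(\sigma)}=P_n^{(\sigma)}/\phi$: one must show that $P_n^{(\sigma)}(u/2\pi,\dots;\tau/2)$, as a function of $u$, vanishes at \emph{all} points of $z_j+\tfrac12\mathbb Z+\tfrac\tau2\mathbb Z$, whereas the defining vanishing conditions \eqref{vc}, \eqref{ve} only assert vanishing at $z_j$ and $z_j+\tfrac12$ modulo the period lattice $\mathbb Z+\tfrac\tau2\mathbb Z$ — but $\phi(2\pi z)$ also vanishes at half-integer \emph{shifts by $\tfrac\tau2$}, namely $z_j+\tfrac\tau4$-type points need checking depending on parity. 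The resolution is to trace through the proof of Proposition~\ref{ap} for $\sigma=3$: there it is shown that the conditions \eqref{eo}, \eqref{fqp}, and the relevant three-term relation propagate the vanishing at $z_j$ and $z_j+\tfrac12$ to vanishing on the full coset $z_j+(2\mathbb Z+1)/6+\tau\mathbb Z$ for $\sigma=1$; the analogous propagation for $\sigma=3,\hat3$ yields vanishing exactly along the zero set of $\phi(2\pi z)$. So the argument is self-contained but requires carefully aligning the lattice bookkeeping: $N=2n-1$ odd, the quasi-period is $\tau/2$ not $\tau$, and the $\phi$-periodicities \eqref{phpx} must be matched against \eqref{fqp} and \eqref{eo} with the substitution $z\mapsto z$, $\tau\mapsto\tau/2$, $6n-3\mapsto 3N$. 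Once this matching is done, the counting $\dim V=\dim V_++\dim V_-=1+1=2$ follows immediately and the two explicit spanning solutions are \eqref{tqs}.
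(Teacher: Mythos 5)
Your overall route is the paper's: decompose $Q$ into eigenfunctions $Q_\pm$ of the involution $Q(u)\mapsto t^{-1}e^{\ti N(u+\pi\tau/2)}Q(u+\pi\tau)$, pass to $f(z)=\phi(2\pi z)Q_\pm(2\pi z)$, and pin $f$ down by Proposition~\ref{ap} with modular parameter $\tau/2$; your fourth step (the converse inclusion, i.e.\ that $P_n^{(\sigma)}/\phi$ is entire and solves \eqref{pq}) is left implicit in the text preceding the theorem, and you are right to insist on it. On that point, though, your worry about ``$z_j+\tfrac\tau4$-type points'' is unfounded: the zero set of $\phi(2\pi z)$ is exactly $z_j+\tfrac12\mathbb Z+\tfrac\tau2\mathbb Z$, which is the union of the two cosets $z_j+\Lambda$ and $z_j+\tfrac12+\Lambda$ of the quasi-period lattice $\Lambda=\mathbb Z+\tfrac\tau2\mathbb Z$, so the vanishing at $z_j$ and at $z_j+\tfrac12$ together with \eqref{eo} and \eqref{fqp} already covers all of it; no propagation through the three-term relation is needed for entireness.

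There is, however, one step you should not wave through, and it is the only place where the argument can actually fail. The single hypothesis of Proposition~\ref{ap} that distinguishes $\sigma=3,\hat3$ from $\sigma=1,\hat1$ is the choice of three-term relation: \eqref{ptq}, with real shifts $\pm\tfrac23$, is the condition attached to $\sigma=1,\hat1,2,\hat2,4,\hat4$, whereas for $\sigma=3,\hat3$ the proposition requires the displayed relation with shifts $\pm\tfrac{2\tau}3$ (that is, $\pm\tfrac\tau3$ after $\tau\mapsto\tau/2$); correspondingly, the denominators $\theta(-x^6;p^6)$ of $P_n^{(1)}$ are invariant under $x\mapsto\omega x$ while the denominators $\theta(-x^2;p^{2/3})$ of $P_n^{(3)}$ are only quasi-periodic in the $\tau$-direction. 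What the $TQ$-equation \eqref{pq} gives you, after using $\phi(u+\pi)=-\phi(u)$ to align the arguments, is $f(z-\tfrac13)+f(z)+f(z+\tfrac13)=0$, hence \eqref{ptq} --- a relation in the \emph{real} direction. You quote \eqref{ptq} and then invoke the proposition with $\sigma=3,\hat3$; as written this verifies the hypotheses for $\sigma=1,\hat1$ instead, since all the remaining conditions (\eqref{eo}, \eqref{fqp}, vanishing at $z_j$ and $z_j+\tfrac12$) are common to both pairs. The paper's own verification is equally terse here, but your proof must either derive the $\tfrac{2\tau}3$-shift relation for $\phi Q_\pm$ from some additional input (it does not follow from \eqref{pq} and \eqref{qqp} alone, as one sees by writing the conditions on Fourier coefficients), or else accept that the conclusion comes out with the labels $1,\hat1$ at modular parameter $\tau/2$; you cannot get to $3,\hat3$ by citing \eqref{ptq}.
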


Note that, by Corollary \ref{hsc}, the two solutions are interchanged by the shift  $u\mapsto u+1/2$. 

For the homogeneous chain, Theorem \ref{tqt} leads to new determinant formulas for the  functions studied in \cite{bm1}.
To this end, we divide \eqref{tqs} by $\Delta(u_1,\dots,u_N)$ and then let $u_1,\dots,u_N\rightarrow 0$.
A slight variation of Lemma \ref{phl} then gives (up to an irrelevant constant factor)
$$Q^{(\sigma)}(u)\sim\frac{\theta_2(u/2|\tau/6)^N}{\theta_1(u|\tau)^N}\det\left [\begin{matrix}f(u)&f''(u)&\dotsm&f^{(N-1)}(u)\\f'(0) & f^{(3)}(0)&\dotsm& f^{(N)}(0)\\
f^{(3)}(0)& f^{(5)}(0) &\dotsm & f^{(N+2)}(0)\\ \vdots &&&\vdots\\ f^{(N-2)}(0) & f^{(N)}(0)&\dotsm & f^{(2N-3)}(0)\end{matrix}\right],$$
where
$$f(u)=\frac{\theta_1(u/2|\tau/2)\theta_2(u/2|\tau/2)\theta_l(u/2|\tau/2)}{\theta_2(u/2|\tau/6)} $$
and $l=3,\, 4$ if $\sigma=\hat 3,\,3$, respectively. This provides two linearly independent solutions to \eqref{pq} and \eqref{phpx}, with $u_1=\dots=u_N=0$. 
 
Finally,
we remark that elliptic pfaffians also appear in  Zinn-Justin's work  \cite{zj}
on the conjectured special eigenvectors of $\mathbf{T}(u)$; however, these are different in nature from those considered here.

% \end{document}
% \end    


\begin{thebibliography}{99}



\bibitem[AL]{al} A.\ O.\ L.\ Atkin and J.\ Lehner, \emph{Hecke operators on $\Gamma_0(n)$}, Math.\ Ann.\ 185 (1970), 134--160.

\bibitem[B1]{btc} R.\ J.\ Baxter, \emph{Three-colorings of the square lattice: A hard squares model}, J.\ Math.\ Phys.\ 11 (1970), 3116--3124.

\bibitem[B2]{b0} R.\ J.\ Baxter, 
\emph{Partition function of the eight-vertex lattice model}, Ann.\ Phys.\ 70 
(1972), 193--228. 


\bibitem[B3]{bf}  R.\ J.\ Baxter,
\emph{Eight-vertex model in lattice statistics and one-dimensional
anisotropic Heisenberg chain I. Some fundamental eigenvectors}, Ann.\ Phys.\ 76 (1973), 1--24.

\bibitem[B4]{b8}  R.\ J.\ Baxter,
\emph{Eight-vertex model in lattice statistics and one-dimensional
anisotropic Heisenberg chain II. Equivalence to a generalized ice-type
model}, Ann.\ Phys.\ 76 (1973), 25--47.



\bibitem[BM1]{bm1}  V.\ V.\ Bazhanov and V.\ V.\ Mangazeev, \emph{Eight-vertex model and non-stationary Lam\'e equation}, J.\ Phys.A  38  (2005),  L145--L153. 


\bibitem[BL]{bh}  P.\ Bleher and K.\ Liechty,
 \emph{Exact solution of the six-vertex model with domain wall boundary conditions. Ferroelectric phase}, Comm.\ Math.\ Phys.\  286  (2009),   777--801.

\bibitem[CP]{cp} F.\ Colomo and A.\ G.\ Pronko, 
\emph{Square ice, alternating sign matrices, and classical orthogonal polynomials},
J.\ Stat.\ Mech.\ Theory Exp.\  (2005),  005. 

\bibitem[F]{f} K.\ Fabricius, \emph{A new $Q$-matrix in the eight-vertex model}, J.\ Phys.\ A 40 (2007),  4075--4086.



\bibitem[FM]{fm}
K.\ Fabricius and B.\ M.\ McCoy, \emph{ New developments in the eight vertex model}, J.\ Stat.\ Phys.\ 111 (2003),  323--337.



\bibitem[G1]{g1} W.\ Galleas, \emph{Refined functional relations for the elliptic SOS model}, Nuclear Phys.\ B  867  (2013),   855--871. 

\bibitem[G2]{g2} W.\  Galleas, \emph{Elliptic solid-on-solid model's partition function as a single determinant},
arXiv:1604.01223. 

\bibitem[GR]{gr}  G.\ Gasper and M.\ Rahman,  Basic Hypergeometric Series,
$2^{\text{nd}}$ ed., Cambridge University Press, 2004.

\bibitem[GM]{gm} J.\ Getz and K.\ Mahlburg,
\emph{Partition identities and a theorem of Zagier},
J.\ Combin.\ Theory Ser.\ A 100 (2002), 27--43.

\bibitem[HF]{fe} C.\ Hagendorf and P.\ Fendley,  \emph{The eight-vertex model and lattice supersymmetry},
J.\ Stat.\ Phys.\ 146 (2012), 1122--1155.


\bibitem[ICK]{ick} A.\ G.\ Izergin, D.\ A.\ Coker and V.\ E.\ Korepin, \emph{Determinant formula for the six-vertex model},  J.\ Phys.\ A  25  (1992)  4315--4334.

\bibitem[KW]{kw} V.\ G.\ Kac and M.\ Wakimoto, \emph{Integrable
    highest weight modules over affine superalgebras and number
    theory}, in  Lie Theory and Geometry, Progr.\ Math.\  123, Birkh\"auser, 1994, pp.\
415--456.

\bibitem[K]{kn} 
K.\ Kajiwara, T.\ Masuda, M.\ Noumi, Y.\ Ohta and Y.\ Yamada, \emph{Determinant formulas for the Toda and discrete Toda equations}, Funkcial.\ Ekvac.\  44  (2001),  291–307.

\bibitem[KS]{ks} 
R.\ Koekoek and R.\ F.\ Swarttouw,  The Askey-Scheme of
Hypergeometric Orthogonal Polynomials and its $q$-Analogue,
Delft University of Technology, 1998.


\bibitem[Ku]{ku}
G.\ Kuperberg, \emph{Another proof of the alternating-sign matrix conjecture},  Internat.\ Math.\ Res.\ Notices  (1996), 139--150.



\bibitem[M]{ma} I.\ G.\ Macdonald, Symmetric Functions and Hall
  Polynomials, $2^{\text{nd}}$ ed., Oxford University Press, 1995.




\bibitem[Mi]{m}
S.\ C.\ 
 Milne, \emph{Infinite families of exact sums of squares formulas, Jacobi elliptic functions, continued fractions, and Schur functions}, Ramanujan J.\ 6 (2002),  7--149.


\bibitem[N]{n}J.\ J.\ C.\ Nimmo, \emph{Hall--Littlewood symmetric functions and the BKP equation}, J.\ Phys.\ A 23 (1990), 751--760.

\bibitem[O]{o} S.\ Okada, \emph{Enumeration of symmetry classes of alternating sign matrices and characters of classical groups}, J.\ Algebraic Combin.\  23  (2006),   43--69.

\bibitem[PRS]{p} S.\ Pakuliak, V.\ Rubtsov and A.\ Silantyev, \emph{The SOS model partition function and the elliptic weight functions}, J.\ Phys.\ A  41  (2008),  295204.

\bibitem[RS1]{ras0} A.\ V.\ Razumov and Yu.\ G.\ Stroganov, \emph{Three-coloring statistical model with domain wall boundary conditions: Functional equations}, 
Theor.\ Math.\ Phys. 161 (2009), 1325--1339.

\bibitem[RS2]{ras} 
A.\ V.\ Razumov and Yu.\ G.\ Stroganov, \emph{A possible combinatorial point for the \emph{XYZ} spin chain},
Theor.\  Math.\ Phys.\ 164 (2010), 977--991.

\bibitem[Ro]{ro} 
S.\ Roan, 
\emph{The $Q$-operator and functional relations of the eight-vertex model at root-of-unity $\eta=2mK/N$ for odd $N$}, 
J.\ Phys.\ A 40 (2007),  11019--11044. 

\bibitem[R1]{rt} H.\ Rosengren, 
\emph{Sums of triangular numbers from the Frobenius determinant}, Adv.\ Math.\ 208 (2007), 935--961.

\bibitem[R2]{rsq} H.\ Rosengren, 
\emph{Sums of squares from elliptic pfaffians}, Int.\ J.\ Number Theory 4 (2008), 873--902.

\bibitem[R3]{r0} H.\ Rosengren, 
\emph{
An Izergin--Korepin-type identity for the \emph{8VSOS} model, with applications to alternating sign matrices}, Adv.\ Appl.\ Math.\ 43 (2009), 137--155.

\bibitem[R4]{rf} H.\ Rosengren, \emph{Felder's elliptic quantum group and elliptic hypergeometric series on the root system $A_n$}, Int.\ Math.\ Res.\ Not.\ 2011 (2011), 2861--2920.

\bibitem[R5]{r} H.\ Rosengren, 
\emph{The three-colour model with domain wall boundary conditions}, Adv.\ Appl.\ Math.\ 46 (2011), 481--535.


\bibitem[R6]{rsa} H.\ Rosengren, 
\emph{Special polynomials related to the supersymmetric eight-vertex model. I. Behaviour at cusps},
arXiv:1305.0666.

\bibitem[R7]{rb} H.\ Rosengren, 
\emph{Special polynomials related to the supersymmetric eight-vertex model: A summary}, Comm.\ Math.\ Phys.\ 340 (2015), 1143--1170.



\bibitem[S]{s} N.\ J.\ A.\ Sloane, The On-Line Encyclopedia of Integer Sequences, Sequence A002439,\\
 \verb+http://oeis.org/A002439+.

\bibitem[St]{st} Yu.\ G.\
 Stroganov, \emph{The Izergin--Korepin determinant at a cube root of unity}, Theor.\  Math.\ Phys.\ 146 (2006),  53--62.

\bibitem[Su]{su} T.\ Sundquist, \emph{Two variable Pfaffian identities and symmetric functions}, J.\ Algebraic Combin.\  5  (1996),   135--148.

\bibitem[TS]{ts} T.\ T.\ Truong and K.\ D.\ Schotte,
\emph{The three-colouring problem as a special eight-vertex model},
J.\ Phys.\ A 19 (1986),  1477--1483. 



\bibitem[W]{w} G.\ N.\ Watson, \emph{Theorems stated by Ramanujan. VII: Theorems 
on continued fractions}, J.\ London Math.\ Soc.~4 (1929), 39--48.

\bibitem[We]{we} A.\ Weil, Elliptic Functions According to Eisenstein and Kronecker, Springer-Verlag,  1976.


\bibitem[Z]{z}
D.\ Zagier, \emph{A proof of the Kac--Wakimoto affine denominator formula for the strange series}, Math.\ Res.\ Lett.\ 7 (2000),  597--604. 

\bibitem[ZJ]{zj}
P.\ Zinn-Justin, \emph{Sum rule for the eight-vertex model on its combinatorial line},  in Symmetries, Integrable Systems and Representations,  Springer-Verlag, 2013, pp.\ 599--637.



\vskip 3mm
\end{thebibliography}
 \end{document}